\title{Bilateral Trade with Correlated Values}
\date{}
\begin{document}

\author{Shahar Dobzinski \and Ariel Shaulker\thanks{Weizmann Institute of Science.  Emails: {\ttfamily\{shahar.dobzinski, ariel.shaulker\}@weizmann.ac.il}. Work supported by ISF grant 2185/19 and BSF-NSF grant (BSF number: 2021655, NSF number: 2127781).}}

\maketitle

\begin{abstract}
We study the bilateral trade problem where a seller owns a single indivisible item, and a potential buyer seeks to purchase it. Previous mechanisms for this problem only considered the case where the values of the buyer and the seller are drawn from independent distributions. In contrast, this paper studies bilateral trade mechanisms when the values are drawn from a joint distribution.

We prove that the buyer-offering mechanism guarantees an approximation ratio of $\frac e {e-1} \approx 1.582$ to the social welfare even if the values are drawn from a joint distribution. The buyer-offering mechanism is Bayesian incentive compatible, but the seller has a dominant strategy. We prove the buyer-offering mechanism is optimal in the sense that no Bayesian mechanism where one of the players has a dominant strategy can obtain an approximation ratio better than $\frac e {e-1}$. We also show that no mechanism in which both sides have a dominant strategy can provide any constant approximation to the social welfare when the values are drawn from a joint distribution. 

Finally, we prove some impossibility results on the power of general Bayesian incentive compatible mechanisms. In particular, we show that no deterministic Bayesian incentive-compatible mechanism can provide an approximation ratio better than $1+\frac {\ln 2} 2\approx 1.346$.
\end{abstract}

\thispagestyle{empty}

\newpage
\pagenumbering{arabic}

\section{Introduction}

This paper focuses on the bilateral trade problem where a seller owns a single indivisible item, and a potential buyer seeks to purchase it. The seller has a value of $s\geq 0$ associated with retaining the item (and $0$ otherwise), whereas the buyer's value for obtaining it is $b\geq 0$ (if the buyer does not receive the item, then the buyer's value is $0$). The values $b$ and $s$ are private, but the probability distributions from which they were derived are known.

The two most common objectives in bilateral trade scenarios are to maximize the social welfare and to maximize the gains from trade. The former goal is aimed at maximizing the total value generated by the transaction (that is, the social welfare is $b$ in case of trade and $s$ otherwise), while the latter is focused on increasing the difference between the buyer's and the seller's surplus (i.e., the gains from trade is $b-s$ in case of trade, and $0$ otherwise). Our interest in this paper is in incentive-compatible mechanisms that are strongly budget balanced. That is, the buyer's payment is fully transferred to the seller. See Section \ref{sec-prelim} for a precise statement of the problem.

The problem was extensively studied, and here, we only mention some of the papers closest to our research. Myerson and Satterthwaite \cite{Myerson-Satterthwaite} prove that under very mild conditions, no Bayesian mechanism can exactly maximize the gains from trade or, equivalently, the social welfare. Blumrosen and Dobzinski \cite{BD14} show that a fixed-price mechanism guarantees at least half of the optimal social welfare (equivalently, provides a $2$-approximation) for every distribution. This approximation ratio was later improved to $1.99$ \cite{italians}, then to $\frac  e {e-1} $ \cite{BD21}, then to $\frac e {e-1} -0.0001$ \cite{KPV}, and then to an almost optimal ratio of approximately $1.38$ \cite{CW-STOC, CSTOC}.

There has also been much work regarding approximating the gains from trade. McAfee \cite{mcaffee} shows that for some distributions of the buyer and seller's values, there exists a fixed price mechanism that recovers half of the optimal gains-from-trade. However, for every fixed fraction, there are distributions for which any fixed price guarantees less than this fraction of the optimal gains-from-trade \cite{BD14}. To overcome this, Blumrosen and Mizrahi \cite{BM} propose the seller-offering mechanism, in which the seller makes the buyer the profit-maximizing take-it-or-leave-it offer, given his value $s$. In the seller-offering mechanism, only the buyer has a dominant strategy. Still, they show that if the buyer's value is drawn from a distribution with a monotone hazard rate, then the seller-offering mechanism is Bayesian incentive compatible and provides an  $e$ approximation to the optimal gains-from-trade. Brustle et al. \cite{Bru} show that the better of the seller offering mechanism and the buyer offering mechanism (in which the buyer makes the seller a profit-maximizing take-it-or-leave-it-offer) recovers at least half of the gains from trade of every incentive-compatible mechanism. In a breakthrough result, Deng et al. \cite{GDTCA} show that the better of the offering mechanisms provides ${8.23}$ approximation to the optimal gains-from-trade. This constant was later improved to ${3.15}$ by Fei \cite{IGFT}. We also know that the better of the offering mechanisms sometimes recovers strictly less than half of the optimal gains from trade \cite{DKB}.

Despite this extensive work, existing research on bilateral trade, including the works cited above, largely assumes that the values of the seller and the buyer are drawn independently\footnote{There are some exceptions. For example, McAfee and Reny \cite{MR} provide conditions on the distribution in which the payment of the buyer equals his value if relaxing the individual rationality condition is allowed. Equilibria in some related information models are analyzed in \cite{corThesis}. Finally, Robust mechanisms for bilateral trade are studied in \cite{malik22}.}. This paper investigates a more realistic and technically challenging scenario in which the values are derived from a joint probability distribution. Remarkably, we demonstrate that despite the correlation, the buyer-offering mechanism approximates the social welfare within a constant factor. Furthermore, this factor is quite close to the best approximation ratio possible for independent distributions.

\vspace{0.1in} \noindent \textbf{Theorem: }For every joint distribution of the seller and buyer's values, there is a mechanism that provides an $\frac e {e-1}$-approximation to the optimal social welfare\footnote{The buyer-offering mechanism achieves this approximation ratio for every distribution for which it is defined. There are distributions for which the buyer-offering mechanism is undefined (i.e., there is a series of offers that approach the maximum profit but never attain it). For these distributions, we show that for every $\varepsilon>0$, there is a slight variation of the buyer-offering mechanism that guarantees an $\frac e {e-1}+\varepsilon$ approximation.}.

\vspace{0.1in} \noindent Interestingly, whereas the power of the buyer-offering mechanism is equal to the power of the seller-offering mechanism for gains-from-trade approximation, the buyer-offering mechanism is much more powerful than the seller-offering mechanism in our setting: it is not hard to see that the seller-offering mechanism does not guarantee any constant fraction of the optimal social welfare\footnote{Consider a seller with a fixed value $s=0$ and a buyer that its value is distributed by an equal revenue distribution in $[1,k]$. The optimal welfare is $\ln k$ whereas the welfare of the seller offering mechanism is only $1$.}.

The buyer-offering mechanism is natural and simple. It is also appealing from a theoretical point of view: it is not only Bayesian incentive compatible, but in fact, one player (the offered player) has a dominant strategy as it can only accept or reject a take-it-or-leave-it offer. We call Bayesian incentive compatible mechanisms in which one side has a dominant strategy \emph{one-sided dominant strategy} mechanisms. We prove that the buyer-offering mechanism is optimal among all one-sided mechanisms:

\vspace{0.1in} \noindent \textbf{Theorem: }There exists a joint distribution of the seller and buyer's values such that the approximation ratio of every one-sided dominant strategy mechanism is no better than $\frac e {e-1}$.

\vspace{0.1in} \noindent This theorem demonstrates, in particular, that even taking, e.g., the better of the seller-offering and the buyer-offering mechanism, or the better of the buyer-offering mechanism and some fixed price mechanism, does not improve the approximation guarantee of the buyer-offering mechanism.

Moreover, we show that mechanisms in which both sides have a dominant strategy cannot guarantee any fixed fraction of the optimal social welfare\footnote{A simple observation is that mechanisms in which both sides have a dominant strategy are fixed price mechanisms \cite{BD14}.}. 

\vspace{0.1in} \noindent \textbf{Theorem: }For every constant $c>1$, there exists a joint distribution of the seller and buyer's values such that the approximation ratio of every dominant strategy incentive compatible mechanism is at least $c$.

\vspace{0.1in} \noindent Unlike the impossibility of one-sided mechanisms (which requires careful analysis and subtle construction), the proof that dominant-strategy mechanisms have no power is technically simpler. We then continue analyzing the power of Bayesian incentive-compatible mechanisms, now proving limits on their power:

\vspace{0.1in} \noindent \textbf{Theorem: } 
There exists a joint distribution of the seller and buyer's values such that no deterministic Bayesian incentive-compatible mechanism provides a better than $1+\frac {\ln 2} 2 \approx 1.346$ approximation to the optimal social welfare.

\vspace{0.1in} \noindent Previously, Blumrosen and Mizrahi \cite{BM} proved a bound of $1.07$ on the approximation ratio of Bayesian mechanisms for independent distributions. Their proof relies on characterizing the second-best mechanism of \cite{Myerson-Satterthwaite}. In our setting, proving impossibilities with this approach is less promising: not only is the Myerson-Satterthwaite characterization often hard to compute for independent distributions, but it also does not apply to joint distributions. Thus, we present a new technique for proving impossibilities for Bayesian incentive-compatible mechanisms. Our approach is based on introducing and analyzing a certain family of ``L-shaped'' distributions, for which the second-best mechanism has a nicer structure. A disadvantage of our approach is that our results only apply to deterministic Bayesian mechanisms that are ex-post individually rational, whereas the results of Blumrosen and Mizrahi apply to randomized mechanisms that are interim individually rational ones. We note that all major mechanisms considered in the literature (e.g., fixed price mechanisms, buyer and seller offering mechanisms) are deterministic and ex-post individually rational\footnote{Our impossibilities also apply to mechanisms which are a probability distribution over deterministic ex-post individually rational mechanisms, like the random offerer mechanism \cite{GDTCA}. To see this, consider such a mechanism $A$. A simple averaging argument shows that in the support of $A$ there must be a (deterministic and ex-post individually rational) mechanism $A'$ that its approximation ratio is at least the approximation ratio of $A$. Our impossibilities directly apply to $A'$, and the approximation ratio of $A$ is no better than the approximation ratio of $A'$.}. Nevertheless, our technique is robust enough that, as a by-product, we improve the state-of-the-art impossibilities for both the social welfare and gains-from-trade even with \emph{independent} distributions:

\vspace{0.1in} \noindent \textbf{Theorem: } 
\begin{itemize}
\item There exist independent distributions of the seller and buyer's values such that no deterministic Bayesian incentive-compatible mechanism provides an approximation ratio better than $2$-approximation to the optimal gains from trade.
\item There exist independent distributions of the seller and buyer's values such that no deterministic Bayesian incentive-compatible mechanism provides an approximation ratio better than $1.113$-fraction to the optimal social welfare.
\end{itemize}

%\vspace{0.1in} \noindent In the first and third settings, the approximation ratio is better than the ratio of $1.07$ that Blumrosen and Mizrahi obtain \cite{BM}. For the second setting, the approximation ratio is better than the ratio of $\frac{2}{e} \approx 1.36$, also in \cite{BM}. However, 

\subsubsection*{Open Questions and Future Directions}

In this paper, we analyzed the power of incentive-compatible mechanisms for bilateral trade. We proved that the buyer-offering mechanism provides an approximation ratio of $\frac e {e-1}\approx 1.582$ even if the values are drawn from joint distributions. We proved that this ratio is optimal for one-sided mechanisms and that dominant strategy mechanisms cannot guarantee any fixed fraction of the welfare at all. However, there is a gap between this ratio and our impossibility result of $1+\frac {\ln 2} 2\approx 1.346$. We leave closing this gap as an open question. It will also be interesting to understand the power of Bayesian incentive compatible mechanisms for independent distributions and determine whether they are more powerful than deterministic mechanisms \cite{CSTOC, CW-STOC}.

Another question that remains open is to determine whether Bayesian incentive-compatible mechanisms can give a constant approximation to the gains-from-trade even when the values are drawn from a joint distribution. %In fact, we do not even know if the better of the offering mechanisms guarantees a constant approximation to the optimal gains-from-trade.

Lastly, all our impossibility results for Bayesian incentive-compatible mechanisms hold only for deterministic Bayesian incentive-compatible mechanisms. In Section \ref{gap-section}, we show that there exist distributions for which randomized Bayesian incentive-compatible mechanisms outperform deterministic Bayesian incentive-compatible mechanisms. An important future direction is understanding the power of randomized Bayesian incentive-compatible mechanisms in all models discussed in the paper.

\paragraph{Structure of the Paper.} In Section \ref{sec-prelim} we give the necessary preliminaries. In Section \ref{sec-algorithm}, we prove that the buyer-offering mechanism provides an $\frac e {e-1}$-approximation, even for correlated distributions. Section \ref{one-sided-lb} shows that $\frac e {e-1}$ is the best ratio achievable by one-sided dominant strategy mechanism.  In Section \ref{sec-impossibilities-bayesian}, we prove several impossibilities for Bayesian incentive-compatible mechanisms. In Appendix \ref{sec-dominant-strategy-lb}, we show that no two-sided dominant strategy incentive compatible mechanism provides a bounded approximation ratio. 

\section{The Setting} \label{sec-prelim}

In the bilateral trade problem, we have two agents: the seller and the buyer. The seller owns an indivisible item and his value for it is $s$. The buyer's value for the item is $b$. The values $b$ and $s$ are drawn from a joint distribution $\mathcal{F}$. 

A (direct) deterministic \emph{mechanism} $M$ for the bilateral trade problem consists of two functions $M=(x,p)$. For every tuple of seller and buyer values $(s,b)$ in the support of $\mathcal F$, $x(s,b)=1$ if a trade occurs and $x(s,b)=0$ otherwise. If there is a trade, $p(s,b)$ specifies the price that the buyer pays for the item and the payment that the seller gets for it. We require that $b\geq p(s,b)\geq s$. These restrictions on the payment are called \emph{ex-post individual rationality}.

% \ascomment{Maybe we can clearly say that it is individual rationality? we use the term individual rationality throughout the paper}\sdcomment{ok}

The \emph{optimal welfare} of a joint distribution $\mathcal{F}$ is $\welfare{OPT}{\mathcal{F}} =\mathop{\mathbb{E}}_{(s,b)\sim \mathcal{F}} [ \max\{b,s\}]$.
The \emph{welfare} of a mechanism $M = (x,p)$ is $\welfare{M}{\mathcal{F}} =\mathop{\mathbb{E}}_{(s,b)\sim \mathcal{F}}[(x(s,b)\cdot b + (1-x(s,b))\cdot s]$.

For a joint distribution $\mathcal{F}$, the \emph{approximation ratio of a mechanism $M=(x,p)$ to the optimal welfare} is $\frac{\welfare{OPT}{\mathcal{F}}}{\welfare{M}{F}}$.

%We consider \emph{ex-post budget balanced} mechanisms and so $p(b,s)$ is also equal to the payment the seller gets for the item. 

%We only consider \emph{ex-post individual rational} mechanisms, which are mechanisms in which the buyer never pays more than his value if he gets the item and he pays $0$ if he does not get the item. 

For a joint distribution $\mathcal F$, we denote by $\mathcal{F}_{s|b}$, the conditional cumulative distribution function of the seller, given that the buyer's value is $b$. Similarly, we denote by $\mathcal{F}_{b|s}$, the conditional cumulative distribution function of the buyer, given that the seller's value is $s$. We denote by $\indicator_A$ the indicator function for the event $A$. For example, we will use $\indicator_{[s>b]}$ to denote the indicator function for the event that the value of the seller $s$ is larger than the value of the buyer $b$.

In this paper, we consider several notions of incentive compatibility of a mechanism $M=(x,p)$. We start by defining incentive compatibility for only one of the players:
\begin{itemize}
    \item \textbf{Seller Dominant Strategy Incentive Compatibility:} for every $s,s',b$:
    $$x(s,b)\cdot p(s,b) + (1-x(s,b))\cdot s \geq x(s',b)\cdot p(s',b) + (1-x(s',b))\cdot s.$$
    \item \label{seller-ic} \textbf{Seller Bayesian Incentive Compatibility:}
    for every $s,s',b$: $$\mathop{\mathbb{E}}_{b\sim \mathcal{F}_{b| s} }[ x(s, b)\cdot p(s, b) + (1-x(s, b))\cdot s] \geq \mathop{\mathbb{E}}_{b\sim  \mathcal{F}_{b|s}}[ x(s', b)\cdot p(s', b) + (1-x(s', b))\cdot s].$$
    \item \textbf{Buyer Dominant Strategy Incentive Compatibility:} 
    for every $b,b',s$:
    $$x(s,b)\cdot (b- p(s,b)) \geq x(s,b')\cdot (b- p(s,b')).$$
    \item \label{buyer-ic} \textbf{Buyer Bayesian Incentive Compatibility:}
    for every $b,b',s$: $$\mathop{\mathbb{E}}_{s\sim \mathcal{F}_{s| b} }[ x(s,b)\cdot (b-p(s,b))]\geq \mathop{\mathbb{E}}_{s\sim \mathcal{F}_{s| b} }[ x(s,b')\cdot (b-p(s,b'))].$$
    %\item \textbf{Dominant Strategy Incentive Compatibility:} 
    %both Seller Dominant Strategy Incentive Compatibility and Buyer Dominant Strategy Incentive Compatibility.
    
    %\item \textbf{Bayesian Incentive Compatiblity:} both Seller Bayesian Incentive Compatibility and Buyer Bayesian Incentive Compatibility.
    %\item \textbf{Seller One-sided Dominant Strategy Incentive Compatibility} is Seller Dominant Strategy Incentive Compatibility and Buyer Bayesian Incentive Compatibility.
    %\item \textbf{Buyer One-sided Dominant Strategy Incentive Compatibility} is Buyer Dominant Strategy Incentive Compatibility and Seller Bayesian Incentive Compatibility.
    %\item \textbf{One Sided Incentive Compatibility:}  is either 
    %Buyer One-sided Dominant Strategy Incentive Compatibility or Seller One-sided Dominant Strategy Incentive Compatibility.
\end{itemize}
We will say that a mechanism is \emph{Dominant Strategy (Bayesian) Incentive Compatible} if it is dominant strategy (Bayesian) incentive compatible for both the buyer and the seller. A mechanism is \emph{one-sided dominant strategy incentive compatible} if it is dominant strategy incentive compatible for at least one of the players and Bayesian incentive compatible for the other. 

% Before:
% A mechanism is \emph{one-sided dominant strategy (Bayesian) incentive compatible} if it is dominant strategy (Bayesian) incentive compatible for at least one of the players.

% For a joint distribution $\mathcal F$, we denote by $\mathcal{F}_{s|b}$, the conditional cumulative distribution function of the seller, given that the buyer's value is $b$. Similarly, we denote by $\mathcal{F}_{b|s}$, the conditional cumulative distribution function of the buyer, given that the seller's value is $b$.

% The expected profit of a seller with value $s$, from a take-it-or-leave-it offer of price $p$ to the buyer is $(\mathcal{F}_{b|s}(p))\cdot s +p\cdot(1- \mathcal{F}_{b|s}(p))$. 

% A distribution $\mathcal{F}_{b|s}$ is \emph{equal revenue distribution} for a seller with value $s$, if, for every value of $p$ in the support of the buyer's distribution $\mathcal{F}_{b|s}$, the expected profit of the seller from a take-it-or-leave-it offer of price $p$ to the buyer is the same.

A distribution $\mathcal{F}_{b|s}$ is \emph{equal revenue distribution} for a seller with value $s$, if, for every value of $p$ in the support of the buyer's distribution $\mathcal{F}_{b|s}$, the expected payment to the seller from a take-it-or-leave-it offer of price $p$ to the buyer is the same.

For example, consider a seller with a value $0$, and a buyer with distribution $\mathcal{F}_{b|0}(b)$:
$$\mathcal{F}_{b|0}(b) = \begin{cases}
        1- \frac{1}{b} &  b \geq 1 ;\\
        0  & b < 1.
        \end{cases}
$$
% The expected profit of the seller from any take-it-or-leave-it offer of price $1 \geq p$ to the buyer is $1$. 
The expected payment to the seller from any take-it-or-leave-it offer of price $1 \geq p$ to the buyer is $1$.

The expected profit of a buyer with value $b$, from a take-it-or-leave-it offer of price $p$ to the seller is $\mathcal{F}_{s|b}(p))\cdot(b-p)$. 

A distribution $\mathcal{F}_{s|b}$ is \emph{equal profit distribution} for a buyer with value $b$, if, for every value of $p$ in the support of the seller's distribution $\mathcal{F}_{s|b}$, the expected profit of the buyer from a take-it-or-leave-it offer of price $p$ to the seller is the same. For example, consider a buyer with value $1$, and a seller with distribution $\mathcal{F}_{s|1}(s)$:
$$\mathcal{F}_{s|1}(s) = \begin{cases}
         \frac{1}{e(1-s)} &  0 \leq s \leq 1-\frac{1}{e}; \\
        1  & s > 1-\frac{1}{e}.
        \end{cases}
$$
The expected profit of the buyer from any take-it-or-leave-it offer of price $ p \in [0, 1-\frac{1}{e}]$ to the seller is $\frac{1}{e}$. 

In the \emph{buyer-offering mechanism}, a buyer with value $b$ makes a take-it-or-leave-it offer of price $p$ to the seller, where the price $p$ maximizes the buyer's profit, i.e., maximizes $(b-p)\cdot \mathcal{F}_{s|b}(p)$.

\section{An $\frac e {e-1}$-Approximation for Correlated Values}\label{sec-algorithm}

In this section, we prove that the buyer-offering mechanism provides an $\frac e {e-1}\approx 1.58$ approximation to the optimal welfare, even if the values are drawn from a joint distribution. Recall that even when the values are drawn from independent distributions, the best currently known approximation mechanisms achieve a close approximation ratio of $\approx 1.38$ \cite{CW-STOC, CSTOC}. Our approximation guarantee holds for all distributions for which the buyer-offering mechanism is well defined (i.e., there always exists an offer that maximizes the profit), but note that there are distributions for which the buyer-offering mechanism is not well defined. For example, consider the following joint distribution $\mathcal{F}^\varepsilon$ for some small $0 < \varepsilon < 1$. In $\mathcal{F}^\varepsilon$, the buyer has only one value $1$, and the seller's value is supported on the interval $(0,\frac{e - \varepsilon -1 }{e-\varepsilon}]$, with marginal distribution function $\mathcal{F}^\varepsilon_s$:
$$
\mathcal{F}^\varepsilon_s(s) = \begin{cases}
    \frac{1+\varepsilon(1-s)}{e(1-s)} & s \in (0, \frac{e - \varepsilon -1 }{e-\varepsilon}];\\
    1 & s\geq \frac{e - \varepsilon -1 }{e-\varepsilon}.
\end{cases}
$$
Let $f(p) = (1-p)\cdot {F}^\varepsilon_s(p)$ be a function that denotes the expected profit of the buyer from a take-it-or-leave-it offer of $p$ to the seller. Then, $f(0)=0$, and for every $p \in (0, \frac{e - \varepsilon -1 }{e-\varepsilon}]$, we get $f(p) = \frac{1+\varepsilon(1-p)}{e}$. Observe that the derivative of $f$ for values $p \in (0, \frac{e - \varepsilon -1 }{e-\varepsilon}]$ is $\frac{-\varepsilon}{e}$, which is negative. Hence, the buyer's expected profit from a take-it-or-leave-it offer of $p$ to the seller is a strictly decreasing function in the interval $p \in (0, \frac{e - \varepsilon -1 }{e-\varepsilon}]$. Since the interval is open at 0, the function does not have a maximum. Thus, the buyer-offering mechanismis not defined for this distribution. 

We prove that the buyer-offering mechanism provides an $\frac e {e-1}$-approximation for all distributions for which it is defined. For the remaining distributions, we show that a slight variant of the buyer-offering mechanism provides a similar approximation ratio:

\begin{theorem} \label{bp-cons-app}\ \ 

\begin{enumerate}
        \item For every joint distribution $\mathcal F$ of the values of the buyer and seller, the buyer-offering mechanism provides an $\frac{e}{e-1}$ approximation to the optimal welfare if the buyer-offering mechanism is well-defined.
        \item For every joint distribution $\mathcal F$ of the values of the buyer and seller and every $\varepsilon>0$, there is a one-sided dominant strategy mechanism that provides an $\frac{e}{e-1}+O(\varepsilon)$ approximation to the optimal welfare.
\end{enumerate}
\end{theorem}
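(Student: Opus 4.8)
The plan is to remove the correlation by conditioning on the buyer's value, reducing the claim to a one-dimensional inequality. Fix a value $b$ in the support of the buyer's marginal, write $G=\mathcal{F}_{s|b}$ for the conditional CDF of the seller, let $p^*=p^*(b)$ be any profit-maximizing price the buyer-offering mechanism posts, and set $\pi=\pi(b)=\max_{p}(b-p)G(p)=(b-p^*)G(p^*)$. Let $W_{\mathrm{OPT}}(b)=\mathbb{E}_{s\sim G}[\max\{s,b\}]$ and let $W_M(b)$ be the conditional welfare of the mechanism given $B=b$. By the tower rule $\welfare{OPT}{\mathcal{F}}=\mathbb{E}_b[W_{\mathrm{OPT}}(b)]$ and $\welfare{M}{\mathcal{F}}=\mathbb{E}_b[W_M(b)]$, so it suffices to prove the per-$b$ bound $W_{\mathrm{OPT}}(b)\le\frac{e}{e-1}W_M(b)$. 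If $\pi=0$ this is immediate: $(b-p)G(p)=0$ for every $p<b$ forces $\Pr[s<b\mid b]=0$, so no trade can improve welfare and the mechanism already matches the optimum. Hence assume $\pi>0$, which forces $p^*<b$ since $(b-p^*)G(p^*)=\pi>0$.

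The core step is to show $W_{\mathrm{OPT}}(b)-W_M(b)\le b/e$. Since trade occurs precisely when $s\le p^*$, a direct computation gives $W_{\mathrm{OPT}}(b)-W_M(b)=\int_{(p^*,b]}(b-s)\,dG(s)$, and Stieltjes integration by parts rewrites this as $\int_{p^*}^{b}G(s)\,ds-\pi$. Now I would invoke the optimality of $p^*$ in its strongest form: $(b-s)G(s)\le\pi$ for all $s\in[p^*,b)$, hence $G(s)\le\min\{1,\frac{\pi}{b-s}\}$. The truncation at $1$ is essential — the naive bound $G(s)\le\frac{\pi}{b-s}$ integrates to $+\infty$ near $s=b$ and is useless. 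Substituting $t=b-s$, $\int_{p^*}^b G(s)\,ds\le\int_0^{b-p^*}\min\{1,\frac{\pi}{t}\}\,dt$, which, when $\pi<b-p^*$, equals $\pi+\pi\ln\frac{b-p^*}{\pi}$; thus $W_{\mathrm{OPT}}(b)-W_M(b)\le\pi\ln\frac{b-p^*}{\pi}$. The function $\pi\mapsto\pi\ln\frac{a}{\pi}$ on $(0,a)$ is maximized at $\pi=a/e$ with value $a/e$, so with $a=b-p^*\le b$ we get $W_{\mathrm{OPT}}(b)-W_M(b)\le b/e$; the leftover case $\pi\ge b-p^*$ forces (using $G(p^*)\le1$) $\pi=b-p^*$ and $G(p^*)=1$, whence $W_{\mathrm{OPT}}(b)=W_M(b)$.

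Finishing part (1) is then one line: since $\max\{s,b\}\ge b$ we have $W_{\mathrm{OPT}}(b)\ge b$, so $e\,W_M(b)=e\,W_{\mathrm{OPT}}(b)-e\bigl(W_{\mathrm{OPT}}(b)-W_M(b)\bigr)\ge e\,W_{\mathrm{OPT}}(b)-b\ge(e-1)W_{\mathrm{OPT}}(b)$, and taking expectations over $b\sim\mathcal{F}_b$ gives $\welfare{OPT}{\mathcal{F}}\le\frac{e}{e-1}\welfare{M}{\mathcal{F}}$. For part (2), note that any mechanism which posts a take-it-or-leave-it price to the seller as a function of the buyer's report leaves the seller with a dominant strategy, no matter how that price is chosen; so the only thing to protect is Bayesian incentive compatibility of the buyer when the profit-maximizing offer is not attained. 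The fix: fix a menu $Q$ of prices and, on report $b$, post the price in $Q$ maximizing $\mathcal{F}_{s|b}(q)(b-q)$; truthful reporting is then exactly optimal, because any report triggers some price in $Q$, and truth triggers the best one against the true $\mathcal{F}_{s|b}$. Choosing $Q$ fine enough that for every $b$ the best price in $Q$ earns at least $(1-\varepsilon)\pi^*(b)$, where $\pi^*(b)=\sup_p\mathcal{F}_{s|b}(p)(b-p)$, and rerunning the argument of part (1) with $\pi^*(b)$ playing the role of the profit upper bound (it still yields $G(s)\le\min\{1,\pi^*(b)/(b-s)\}$) and the achieved profit degraded by a $(1-\varepsilon)$ factor, the final ratio degrades by only an additive $O(\varepsilon)$.

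I expect the main obstacle to be conceptual rather than computational: realizing that conditioning on $b$ decouples the correlated instance into a family of essentially independent one-sided pricing problems, and then arranging the estimates so that everything funnels into the single clean inequality $W_{\mathrm{OPT}}(b)-W_M(b)\le b/e$. The secondary technical care needed is (i) the Stieltjes integration by parts in the presence of atoms in $G$, (ii) keeping the truncation at $1$ in $G(s)\le\min\{1,\pi/(b-s)\}$ rather than using the divergent bound, and (iii) for part (2), producing a single price menu $Q$ that is simultaneously fine enough for all buyer values (e.g., via a compactness/truncation argument when the buyer's support is unbounded) so that the maximizing price in $Q$ is well defined.
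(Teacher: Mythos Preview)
Your proof of Part (1) is correct and follows essentially the same route as the paper: condition on $b$, use the optimality of $p^*$ to get $G(s)\le \pi/(b-s)$, integrate with the truncation at $1$, and optimize. The only difference is packaging. The paper lower-bounds $\mathbb{E}[s\,\mathbb{1}_{p<s\le b}]$, plugs it into the ratio, and reduces to the one-variable inequality $\frac{1}{1+x\ln x}\le\frac{e}{e-1}$ with $x=G(p^*)/G(b)$; you instead upper-bound the \emph{loss} $W_{\mathrm{OPT}}(b)-W_M(b)\le b/e$ and close with $W_{\mathrm{OPT}}(b)\ge b$. Your arrangement is a bit cleaner (no need to carry $q_1,q_2$ separately), but the substance is identical.

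For Part (2) the idea is again the same as the paper's: restrict the buyer to a fixed menu of prices so that a maximizer always exists and BIC is automatic. The paper makes this concrete by taking the arithmetic grid $\{k\delta:k\in\mathbb N\}$ with $\delta=\varepsilon\cdot\welfare{OPT}{\mathcal F}$ and comparing to the buyer-offering mechanism on the rounded-up seller distribution. One remark on your version: asking that the best menu price achieve $(1-\varepsilon)\pi^*(b)$ \emph{multiplicatively} for every $b$ is what makes your obstacle (iii) look hard, and it is not needed. Rounding any near-optimal price up to the nearest grid point $q$ gives $(b-q)G(q)\ge \pi^*(b)-\delta$ additively, and rerunning your Part (1) estimate then yields $W_{\mathrm{OPT}}(b)-W_M(b)\le b/e+\delta$; averaging over $b$ and using $\delta=\varepsilon\cdot\welfare{OPT}{\mathcal F}$ gives the $\frac{e}{e-1}+O(\varepsilon)$ ratio directly, with no compactness or truncation argument required.
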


We first prove the first part of the theorem. We then use the first part to prove the second part. After establishing the theorem, we discuss extending the result to double auctions (Subsection \ref{subsec-double-auctions}).

\subsection{Proof of Theorem~\ref{bp-cons-app}: Part I}
    Fix a value $b$ of the buyer in the support of $\mathcal F$. Denote by $\mathcal{F}_{s|b}$ the distribution of the seller's value given that the value of the buyer is $b$. We will show that, for every $b$, the approximation ratio of the buyer-offering mechanism when the value of the buyer is always $b$ and the value of the seller is always $\mathcal{F}_{s|b}$ is $\frac e {e-1}$. This immediately implies that the approximation ratio of the buyer-offering mechanism when the values are drawn from $\mathcal F$ is $\frac e {e-1}$.

    % Before
    % We will show that, for every $b$, the approximation ratio of the bilateral trade problem where the value of the buyer is always $b$ and the value of the seller is always $\mathcal{F}_{s|b}$ is $\frac e {e-1}$. This immediately implies that the approximation ratio of the bilateral trade problem where the values are drawn from $\mathcal F$ is $\frac e {e-1}$.

    Let $p_b$ be the price that the buyer offers when his value is $b$. Since the value $b$ is fixed, to simplify notation, we drop the subscripts from $p_b$ and $\mathcal{F}_{s|b}$, and simply write $p$ and $\mathcal F$ instead. 
     
     We now bound from above the approximation ratio of the buyer-offering for the distribution $\mathcal F$ (i.e., the expected optimal welfare divided by the expected welfare of the buyer-offering mechanism). 
     \begin{align}\label{app-first}
           \frac{b\cdot \mathcal{F}(b) + \mathbb{E}_{s \sim \mathcal{F}}[\indicator_{[s>b]}\cdot s]}{b\cdot \mathcal{F}(p) + \mathbb{E}_{s \sim \mathcal{F}}[\indicator_{[ p < s \leq b]}\cdot s]  + \mathbb{E}_{s \sim \mathcal{F}}[\indicator_{[s>b]}\cdot s]] } \leq 
           \frac{b\cdot \mathcal{F}(b)}{b\cdot \mathcal{F}(p) + \mathbb{E}_{s \sim \mathcal{F}}[\indicator_{[ p < s \leq b]}\cdot s]}.
     \end{align}
     Let $q_1= \mathcal{F}(p)$ and let $q_2 = \mathcal{F}(b)$. We have that $q_1 \leq q_2$ (since $p\leq b$). Observe that if $q_2 =q_1$, the approximation ratio is $1$, as needed. Therefore, we assume that $q_1 < q_2$. Rewriting the RHS of \eqref{app-first} we have:
     \begin{align}\label{app-second}
          \frac{b\cdot \mathcal{F}(b)}{b\cdot \mathcal{F}(p) + \mathbb{E}_{s \sim \mathcal{F}}[\indicator_{[ p < s \leq b]}\cdot s]}= \frac{b\cdot q_2}{b\cdot q_1 + \mathbb{E}_{s \sim \mathcal{F}}[s| p < s \leq b]\cdot (q_2-q_1)}.
     \end{align}    
     For fixed $q_1, q_2$ and $b$, this expression is maximized when $\mathbb{E}_{s \sim \mathcal{F}}[s| p < s \leq b]\cdot (q_2-q_1)$ is minimized. Therefore, in Lemma~\ref{minimal-expectation}, we bound from below the expression $\mathbb{E}_{s\sim \mathcal{F}}[s| p < s \leq b]\cdot (q_2-q_1)$ to achieve an upper bound on the approximation ratio.
     \begin{lemma} \label{minimal-expectation}
         $\mathbb{E}_{s \sim \mathcal{F}}[s| p < s \leq b]\cdot (q_2-q_1) \geq q_1 \cdot (b-p)\cdot \ln{\frac{q_1}{q_2}} +b\cdot (q_2-q_1).$
     \end{lemma}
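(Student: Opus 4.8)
The plan is to reduce the inequality to a pointwise upper bound on the conditional CDF $\mathcal F$ that comes directly from the optimality of the buyer's offer, and then to integrate that bound. The first step is to rewrite the left-hand side without conditional expectations. Since we have assumed $q_1 < q_2$, $\mathbb{E}_{s\sim\mathcal F}[s\mid p<s\le b]\cdot(q_2-q_1) = \mathbb{E}_{s\sim\mathcal F}[\indicator_{[p<s\le b]}\cdot s]$, and by the layer-cake (tail-integral) formula this equals $\int_0^p (q_2-q_1)\,dt + \int_p^b (q_2-\mathcal F(t))\,dt = b\cdot q_2 - p\cdot q_1 - \int_p^b \mathcal F(t)\,dt$. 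Using this identity together with $\ln\frac{q_1}{q_2} = -\ln\frac{q_2}{q_1}$, the claimed inequality becomes, after cancelling $b\cdot q_2$ from both sides and rearranging, the clean statement $\int_p^b \mathcal F(s)\,ds \le q_1(b-p)\bigl(1+\ln\frac{q_2}{q_1}\bigr)$. (The degenerate case $q_1=0$ I would handle separately: there $\mathcal F(s)=0$ for all $s<b$, so both sides vanish once the term $q_1(b-p)\ln\frac{q_1}{q_2}$ is read as its limiting value $0$.) Going through the tail-integral formula rather than Stieltjes integration by parts is deliberate: it makes the identity hold verbatim regardless of whether $\mathcal F$ has atoms at $p$ or at $b$.

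Next I would use the only structural information available, namely that $p$ maximizes the buyer's profit $p'\mapsto (b-p')\mathcal F(p')$ (such a maximizer exists because the buyer-offering mechanism is assumed well-defined). Hence for every $s\in[p,b)$ we have $(b-s)\mathcal F(s)\le (b-p)\mathcal F(p) = (b-p)q_1$, i.e. $\mathcal F(s)\le \frac{(b-p)q_1}{b-s}$. Combining this with the monotonicity bound $\mathcal F(s)\le \mathcal F(b)=q_2$, valid for all $s\le b$, gives $\mathcal F(s)\le \min\{q_2,\ \frac{(b-p)q_1}{b-s}\}$ on $[p,b)$. Integrating this upper envelope should give exactly the required bound.

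For the computation, set $\tilde s := b-\frac{(b-p)q_1}{q_2}$; using $0<q_1\le q_2$ one checks $p\le \tilde s< b$, and $\tilde s$ is precisely the point where the two terms in the minimum coincide (the term $\frac{(b-p)q_1}{b-s}$ is the smaller one on $[p,\tilde s]$ and $q_2$ is the smaller one on $[\tilde s,b]$). Then $\int_p^b \mathcal F(s)\,ds \le \int_p^{\tilde s}\frac{(b-p)q_1}{b-s}\,ds + \int_{\tilde s}^b q_2\,ds = (b-p)q_1\ln\frac{b-p}{b-\tilde s} + q_2(b-\tilde s)$; substituting $b-\tilde s = \frac{(b-p)q_1}{q_2}$ turns the first term into $(b-p)q_1\ln\frac{q_2}{q_1}$ and the second into $(b-p)q_1$, whose sum is $q_1(b-p)\bigl(1+\ln\frac{q_2}{q_1}\bigr)$, as needed.

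The step I expect to be the crux is recognizing that one must keep the tighter cap $\mathcal F(s)\le q_2$ instead of the obvious $\mathcal F(s)\le 1$: the bare bound $\frac{(b-p)q_1}{b-s}$ is not integrable up to $b$, and capping at $1$ yields $q_1(b-p)(1-\ln q_1)$, which exceeds the target $q_1(b-p)(1+\ln\frac{q_2}{q_1})$ by $q_1(b-p)\ln\frac1{q_2}\ge 0$ whenever $q_2<1$, and so would not suffice. Everything else is routine: the tail-integral rewriting of the left-hand side, checking $p\le\tilde s< b$, and evaluating the two elementary integrals. The only other thing to be mildly careful about is exactly which endpoints carry atoms of $\mathcal F$, but as noted this is absorbed cleanly by working with the tail-integral form throughout.
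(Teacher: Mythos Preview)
Your proposal is correct and follows essentially the same route as the paper: both arguments reduce to bounding $\int_p^b \mathcal F(s)\,ds$ via the pointwise envelope $\mathcal F(s)\le \min\{q_2,\ (b-p)q_1/(b-s)\}$ coming from the optimality of $p$ together with the cap $\mathcal F\le q_2$, split the integral at the same crossover point $\tilde s=b-(b-p)q_1/q_2$, and evaluate identically. The only (minor) difference is that the paper obtains the identity $\mathbb E[\indicator_{[p<s\le b]}\,s]=bq_2-pq_1-\int_p^b\mathcal F$ by integration by parts with a density, whereas you use the tail-integral formula; your version is slightly cleaner in that it does not presuppose a density and handles atoms at $p$ or $b$ without comment.
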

    \begin{proof}
   
        Recall that the price $p$ maximizes the buyer's profit for the distribution $\mathcal{F}$. We use this to bound from above $\mathcal{F}(p')$ for $p < p' < b$. Let $u = q_1\cdot (b-p)$ be the expected profit of the buyer when the price is $p$. Then:
        \begin{align}\label{bp-proof-cdf-bound}
             u  \geq \mathcal{F}(p')(b-p')  
            \iff 
            \mathcal{F}(p') & \leq \frac{u}{(b-p')}.
        \end{align}
        For $p' = b-\frac{u}{q_2}$, we get that the bound \eqref{bp-proof-cdf-bound} is tight and equal to $q_2$\footnote{Observe that $b-\frac{u}{q_2} \geq p$ since $ b-\frac{q_1(b-p)}{q_2} \underbrace{\geq}_{q_1\leq q_2} p$.}.
        Let $f$ be the seller's probability density function given that the buyer's value is $b$.
        We want to bound from below the expression:
        \begin{align*}
            \mathbb E_{s\sim\mathcal F}[\indicator_{[ p < s \leq b]}\cdot s] & = \int_{p}^b sf(s) \d s \underbrace{=}_{\text{integration by parts}} \eval{s\mathcal F(s)}_p^b -\int_p^b \mathcal F(s) \d s \geq b\cdot q_2 -p\cdot q_1 -\int_p^b \mathcal F(s) \d s.
        \end{align*}
        We use the following bounds on $\mathcal F(s)$: for $p \leq s \leq b -\frac{u}{q_2}$ we have $\mathcal{F}(s) \leq \frac{u}{b-s}$ (by Inequality~\eqref{bp-proof-cdf-bound}) and for $b -\frac{u}{q_2} \leq s \leq b$ we have $\mathcal{F}(s) \leq q_2$. Now, 
        \begin{align*}
            \mathbb E_{s\sim \mathcal F}[\indicator_{[ p < s \leq b]}\cdot s] & \geq b\cdot q_2 -p\cdot q_1 -\int_p^{b-\frac{u}{q_2}} \frac{u}{b-s} \d s -\int_{b-\frac{u}{q_2}}^{b} q_2 \d s =  b\cdot q_2 -p\cdot q_1 -u +\eval{u\cdot \lnn{b-s}}_{p}^{b-\frac{u}{q_2}}\\
            & = b\cdot q_2 -p\cdot q_1 -u +u\cdot \lnn{\frac{\frac{u}{q_2}}{b-p}} = b\cdot q_2 -p\cdot q_1 -(q_1(b-p)) +q_1(b-p)\lnn{\frac{q_1}{q_2}} \\
            & = q_1(b-p)\lnn{\frac{q_1}{q_2}} + b\cdot(q_2 - q_1).
        \end{align*}
    \end{proof}

    Overall, we get that the approximation ratio is bounded from above by
    \begin{equation*}
        \frac{b\cdot q_2}{b\cdot q_1 + q_1\cdot (b-p)\cdot \ln{\frac{q_1}{q_2}} + b\cdot (q_2-q_1)} \underbrace{\leq}_{\substack{p \geq 0 ,\   \ln{\frac{q_1}{q_2}} \leq 0}} \frac{b\cdot q_2}{ b\cdot q_1 \cdot \ln{\frac{q_1}{q_2}}+b\cdot q_2} = \frac{1}{ \frac{q_1}{q_2} \cdot \ln{\frac{q_1}{q_2}} + 1} \underbrace{\leq}_{\footnotemark{}} \frac{e}{e-1}.
    \end{equation*}

\footnotetext{Recall that $q_1 \leq q_2$, and so $0 \leq \frac{q_1}{q_2} \leq 1$. Thus, the function $f(x)=\frac{1}{1+x\ln x}$ is maximized when $x=\frac{1}{e}$, and its maximal value is $\frac{e}{e-1}$.} 

\subsection{Proof of Theorem~\ref{bp-cons-app}: Part II}

We consider a slight variant of the buyer-offering mechanism. Given $\varepsilon>0$,  set $\delta=\varepsilon\cdot OPT$, where $OPT$ is the expected value of the optimal social welfare. The $\varepsilon$-buyer offering mechanism makes a profit-maximizing take-it-or-leave-it offer $p$ for the seller, where $p$ belongs to the set of offers that are a multiple of $\delta$. I.e., $p=k\cdot \delta$ for some $k \in \mathbb N$. This mechanism is Bayesian incentive compatible since the set of possible prices for a buyer with value $b$ contains at most  $ \lceil \frac{b}{k} \rceil$ elements. Since this set is finite for every value $b$, it has a maximum-profit element. In addition, the seller obviously has a dominant strategy. 

As in the first part, we prove the approximation guarantee for every value $b$ of the buyer. Let ${\distributionlbbp}_s(s)$ be the marginal distribution of the seller given $b$. Obtain a ``discretized'' ${\distributionlbbp}'_s(s)$ by ``pushing'' the mass of ${\distributionlbbp}_s(s)$ in all points that are not a multiple of $\delta$ to the nearest (from above) multiple of $\delta$. Note that the buyer offering mechanism is now defined since the support of ${\distributionlbbp}'_s(s)$ is finite. Thus, there is an offer with a maximum profit since for every offer that is not in the support there is an offer in the support with at least the same profit. Furthermore, observe that for every $b$, if the buyer offering mechanism makes an offer $p$ when the marginal distribution is ${\distributionlbbp}'_s(s)$ then $p$ is also the offer that $\varepsilon$-buyer offering mechanism makes when the marginal distribution is ${\distributionlbbp}_s(s)$. Observe that ${\distributionlbbp}_s(s)$ and ${\distributionlbbp}'_s(s)$ are very close to each other, and thus the expected welfare that the buyer offering mechanism provides for ${\distributionlbbp}'_s(s)$ and the expected welfare of the $\varepsilon$-buyer offering mechanism for ${\distributionlbbp}_s(s)$ differ only by $\delta$. The second part of the theorem now follows since, by the first part, the buyer-offering mechanism provides an $\frac e {e-1}$ approximation for ${\distributionlbbp}'_s(s)$ and because the optimal welfare of ${\distributionlbbp}'_s(s)$ and the optimal welfare of ${\distributionlbbp}_s(s)$ differ only by $\delta$.

\subsection{Tightness of Analysis}

We now present an instance of a distribution $\distributionlbbp$ where the buyer-offering mechanism yields an approximation ratio no better than $\frac{e}{e-1}$ to the optimal welfare. Subsequently, in Subsection \ref{one-sided-lb}, we establish a more robust result that states that no one-sided dominant strategy mechanism can offer an approximation ratio better than $\frac e {e-1}$. Since the buyer-offering mechanism is a one-sided dominant strategy mechanism, the result of Subsection \ref{one-sided-lb} also implies that the analysis of the buyer-offering mechanism is precise. Nevertheless, we provide a direct analysis of the buyer-offering mechanism in this section for simplicity.

Let $\distributionlbbp$ be a joint distribution over the buyer and seller values in which the value of the buyer is always $1$, and the value of the seller is in $[0, \frac{e-1}{e}]$. The seller's value is distributed as follows:
$${\distributionlbbp}_s(s) = \begin{cases}
        \frac{\frac{1}{e}}{1-s} & s \in [0, \frac{e-1}{e}];\\
        \, \, 1 &  s>\frac{e-1}{e}.
        \end{cases}
$$
Note that the seller's distribution is an equal profit distribution for a buyer with value of $1$. I.e., for the buyer, every price $p \in [0, \frac{e-1}{e}]$ yields the same expected profit. Thus, by tie-breaking, we may assume that the buyer-offering mechanism uses price $0$ (alternatively, to eliminate the use of tie-breaking, one can change ${\distributionlbbp}_s$ and slightly increase the probability that the seller's value is $0$ -- the analysis remains almost identical). %We can also increase by $\varepsilon > 0$ the probability that the seller's value is $0$ and get that the buyer strictly prefers the price of $0$ over the other prices.

We now analyze the approximation ratio of the buyer-offering mechanism for the distribution $\distributionlbbp$:

\begin{align*}
    \frac{\welfare{OPT}{\distributionlbbp}}{\welfare{ALG}{\distributionlbbp}} &=  \frac{b\cdot\Pr_{{\distributionlbbp}_s}(s\leq p) + b\cdot\Pr_{{\distributionlbbp}_s}(p < s\leq b)}{b\cdot\Pr_{{\distributionlbbp}_s}(s\leq p) + \mathbb{E}_{s \sim {\distributionlbbp}_s}[s| p < s \leq b]\cdot\Pr_{{\distributionlbbp}_s}(p < s\leq b)}\\
    & = \frac{1}{\frac{1}{e} + \mathbb{E}_{s \sim {\distributionlbbp}_s}[s| 0 < s \leq 1] \cdot (1-\frac{1}{e})}\\
    & = \frac{1}{\frac{1}{e} + \frac{1}{e}\cdot \ln{\frac{1}{e}} + 1-\frac{1}{e}} = \frac{e}{e-1}.
\end{align*}

\subsection{Double Auctions}\label{subsec-double-auctions}

In double auctions there are multiple buyers and sellers, each seller $i$ owns a single item and his value for it is $s_i$, all items are identical, and each buyer $j$ wants one unit and his value for it is $b_j$. Similarly, to bilateral trade, we wish to approximate the optimal welfare, which in the double auction case with $k$ sellers is equal to the sum of the $k$ largest values among the sellers and buyers. 

Our positive result for bilateral trade also implies a positive result for double auctions. 
Similarly to the work of Babaioff et al. \cite{BothWorlds}, to obtain a solution for the double auction case, we combine McAfee's trade reduction mechanism \cite{TradeReduction} with the mechanism we use for the bilateral trade case (the buyer offering) in the following manner: compute the maximal number of efficient trades (where trade is efficient only if the value of the buyer is larger than the value of the seller). Run the trade reduction mechanism if there are at least two efficient trades. If there is only one efficient trade, run the buyer offering mechanism for the bilateral trade problem with the buyer being the highest-value buyer and the seller being the lowest-value seller. The distribution over the seller's value is the conditional distribution over the value of this seller given all values except his own and that he is the lowest value seller and the additional requirement that the price must be at least as large as the value of the second highest buyer.

Similarly to \cite{BothWorlds}, this mechanism is Bayesian incentive compatible and ex-post individually rational. Observe that the approximation ratio of this mechanism is at least $2$. If there are at least two efficient trades, the approximation ratio of the trade reduction mechanism is $1-\frac{1}{k}$ where $k$ is the number of efficient trades, which gives us a $2$ approximation guarantee. If there is only one efficient trade, we get an approximation guarantee that is at least as good as the approximation guarantee of the buyer-offering mechanism for the bilateral trade case, which is $\frac{e}{e-1}$. Lastly, if the number of efficient trades is $0$, we do nothing and get an optimal approximation.

% \import{}{DSIC.tex}

\section{The Limits of One-Sided Dominant Strategy Mechanisms}\label{one-sided-lb}

% \ascomment{Sorry, I made a mistake and the distribution was not an equal revenue distribution, because the point with the "extra" probability -$(s,b)=(\varepsilon, 0)$ was below the other points $0< 1$ and so the seller actually preferred lower prices, anyway I fixed it now, but I made changes throughout the section for that. The previous version is commented out and in the main file.}

We now prove that no one-sided dominant strategy mechanism can provide an approximation ratio better than $\frac e {e-1}$. This shows, in particular, the optimality of the buyer-offering mechanism as its approximation ratio is $\frac e {e-1}$ (Theorem \ref{bp-cons-app}). It also shows, for example, that taking the better of the buyer-pricing mechanism and a fixed price mechanism does not improve the approximation ratio. In Subsection~\ref{subsec-one-sided-distribution}, we present a specific distribution for which every one-sided dominant strategy mechanism does not provide an approximation ratio better than $\frac e {e-1}$. We provide the formal analysis of the impossibility in Subsection~\ref{subsec-one-sided-lower-bound}.

\subsection{The Distribution $\distributionlbonesided$}\label{subsec-one-sided-distribution}

For $k\in \mathbb N, \varepsilon>0$, let $\distributionlbonesided$ be a joint distribution over the buyer and seller values in which the buyer receives values in $[1,k] \cup \{ k+1 \}$ with probability density function of $\frac{1}{(b+\varepsilon)^{2}}$ for $1 \leq b \leq k$ and of $\frac{1}{k+\varepsilon}+\frac{\varepsilon}{1+\varepsilon}$ for $b= k+1$. 
For every value $b$ in $[1,k] \cup \{ k+1 \}$, the seller's value is distributed according to an (almost) equal profit distribution of the buyer. I.e., when the buyer's value is $b$, the cumulative distribution function of the value of the seller is $\distributionlbonesidedcdfs$:

$$\distributionlbonesidedcdfs(s) = \begin{cases}
        \frac{\frac{b}{e}}{b-s+\varepsilon} & 0\leq s \leq  b\cdot \frac{e-1}{e} + \varepsilon;\\
        1 &  s>b\cdot \frac{e-1}{e} + \varepsilon.
        \end{cases}
$$
Note that the buyer's marginal distribution always sums up to 1:
\begin{multline*}
\distributionlbonesided_b(k+1)= \mathrm{\Pr}_{b \sim \distributionlbonesided_b}(b=k+1) + \int_1^k \frac{1}{(b+\varepsilon)^2}\d b = \frac{1}{k+\varepsilon}+\frac{\varepsilon}{1+\varepsilon} + \eval{-\frac{1}{b+\varepsilon}}_{1}^{k} \\
= \frac{1}{k+\varepsilon}+\frac{\varepsilon}{1+\varepsilon} - \frac{1}{k+\varepsilon}+\frac{1}{1+\varepsilon} = 1. 
\end{multline*}
Moreover, for every value in the buyer's support, the conditional distribution of the seller sums up to 1:  $\distributionlbonesidedcdfs(b\cdot \frac{e-1}{e}+ \varepsilon)=   \frac{\frac{b}{e}}{b-(b\cdot \frac{e-1}{e}+ \varepsilon) +\varepsilon} = 1$.

This distribution has two useful properties. The first is that when the seller's value is $0$, the buyer's distribution is very close to an equal revenue distribution (it is implied by the marginal probability density function of the buyer that is close to $\frac{1}{b^2}$). The second property is that for every value $b$ in the buyer's support, the seller's distribution is very close to an equal profit distribution.

Fix some mechanism $M$ that is one-sided dominant strategy for one of the players. If $M$ is dominant strategy for the buyer, our bound on the welfare is achieved by utilizing the first property. In this case, only the seller's value can affect the value of the offer. Given that the seller's value is $0$, the buyer's distribution is very close to an equal revenue distribution. We show that when the seller's value is $0$, the seller will strictly prefer higher prices. Since the mechanism is Bayesian incentive compatible for the seller, the highest take-it-or-leave it offer is when $s=0$. Very roughly speaking, this implies that the welfare of the mechanism is low: if the highest offer is low, then the contribution to the welfare of trades when $s=0$ is high, but no trade is done for larger values of the seller, which happens with significant probability. On the other hand, if the value of the highest offer $p$ is large, trade is less likely in the $s=0$ case. 

If $M$ is dominant strategy for the seller, we utilize the second property. Now, only the buyer's value can affect the offer price. However, the buyer strictly prefers lower offers as lower prices will yield higher profit. Since the mechanism is incentive compatible for the buyer, the offer price is the same for every value $b$ of the buyer (otherwise, the buyer will prefer the lower offer and deviate from his equilibrium strategy). Then, every mechanism for $\distributionlbonesided$ that offers a take-it-or-leave-it-offer $p$ to the seller (one-sided dominant strategy for the buyer) is a fixed price mechanism. It only remains to show that a fixed price mechanism has low welfare.    

%Observe that with probability almost $\frac{1}{e}$ (the probability that the seller's value is $0$) we get an equal revenue distribution but with probability $1$ we get an equal profit distribution. This asymmetry comes from the fact that the buyer-offering is much stronger than the seller-offering when approximating the welfare, the buyer-offering has an approximation ratio of $\frac{e}{e-1}$, while the seller-offering has an unbounded approximation ratio (consider for example the case of an equal revenue distribution).   

\subsection{Analysis of One-Sided Mechanisms for the Distribution $\distributionlbonesided$}\label{subsec-one-sided-lower-bound}
\begin{theorem}\label{one-sided-lower-bound} 
% Let $k\geq 2$ and $\varepsilon >0 $. Every one-sided dominant strategy mechanism for $\distributionlbonesided$ provides an approximation ratio of at least $\frac{\ln k + \frac{\varepsilon}{k}}{(\frac{e-1}{e})\ln k + 0.6 + \frac{\varepsilon}{k}}$ to the optimal welfare.
% This approximation ratio approaches $\frac e {e-1}$ as $k$ approaches $\infty$ and $\varepsilon$ approaches $0$.
 Let $k\geq 2$ and $\varepsilon >0 $. Every one-sided dominant strategy mechanism for $\distributionlbonesided$ provides 
an approximation ratio of at least $\frac{e}{e-1}$ as $\varepsilon$ approaches $0$ and $k$ approaches $\infty$.
\end{theorem}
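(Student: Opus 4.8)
The plan is to fix an arbitrary one-sided dominant strategy mechanism $M$ and split into the two cases sketched in Subsection~\ref{subsec-one-sided-distribution}: either $M$ is dominant strategy for the buyer (so the offer to the seller, if any, depends only on $s$), or $M$ is dominant strategy for the seller (so the offer to the buyer depends only on $b$). In both cases I want to reduce $M$ on $\distributionlbonesided$ to an essentially fixed-price-like object and then lower bound the optimal welfare while upper bounding the mechanism's welfare, finally taking $\varepsilon\to 0$ and $k\to\infty$. The first thing to record is $\welfare{OPT}{\distributionlbonesided}$: since the buyer's value dominates the seller's on the whole support (the seller's value given $b$ is capped at $b\cdot\frac{e-1}{e}+\varepsilon<b$), $\welfare{OPT}{\distributionlbonesided}=\mathbb E[b]=\int_1^k \frac{b}{(b+\varepsilon)^2}\,\mathrm d b + (k+1)\bigl(\tfrac{1}{k+\varepsilon}+\tfrac{\varepsilon}{1+\varepsilon}\bigr)$, which behaves like $\ln k + O(1)$ as $\varepsilon\to0$; more precisely I will want the leading $\ln k$ term and keep the lower-order terms controlled.

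\textbf{Case 1: $M$ is dominant strategy for the seller.} Here the offer can depend only on $b$. Because the seller's conditional distribution $\distributionlbonesidedcdfs$ is (almost) an equal-profit distribution for a buyer of value $b$, the buyer's profit from a take-it-or-leave-it offer $p$ is (almost) constant on $[0,b\frac{e-1}{e}+\varepsilon]$ and strictly smaller for larger $p$ — so Bayesian incentive compatibility for the buyer forces the outcome to be (up to $O(\varepsilon)$ slack) the same for all buyer types: essentially a fixed price $p^\ast$ offered to the seller. Then I bound the welfare of a fixed price mechanism with price $p^\ast$ on $\distributionlbonesided$: trade happens iff $s\le p^\ast$ (and $b$ is irrelevant to whether trade happens, only to the value generated); when trade occurs we collect $b$, otherwise $s$. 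Integrating over the buyer marginal $\approx 1/b^2$ and the seller marginal gives welfare at most something like $\Pr(s\le p^\ast)\cdot\mathbb E[b] + (\text{small})$, and crucially for a fixed $p^\ast$ the probability that $s\le p^\ast$ under $\distributionlbonesidedcds$ for large $b$ tends to $\frac{b/e}{b-p^\ast+\varepsilon}\to 1/e$ — wait, but $p^\ast$ is a single number while $b$ ranges up to $k$, so for the bulk of the buyer mass $p^\ast$ is a negligible fraction of $b$ and $\Pr(s\le p^\ast\mid b)\approx \frac{b/e}{b+\varepsilon}\approx 1/e$; hence the trade probability is about $1/e$, the welfare is about $\frac1e\ln k$ plus lower order, and the ratio tends to $e$ — which is even worse than $\frac{e}{e-1}$, so this case is comfortably fine. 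I must be careful to handle small buyer types and the $p^\ast$ near $k$ regime, but those contribute only $O(1)$ to welfare against a $\Theta(\ln k)$ optimum.

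\textbf{Case 2: $M$ is dominant strategy for the buyer.} This is the delicate case and the main obstacle. Now the offer to the seller depends only on $s$. Conditioned on $s=0$, the buyer's marginal is (almost) equal-revenue on $[1,k]$, so the expected payment to the seller from any deterministic offer $p$ against this distribution is (almost) constant; I will argue, using the $\frac{\varepsilon}{1+\varepsilon}$ atom at $b=k+1$ and the $+\varepsilon$ perturbations, that in fact the seller \emph{strictly} prefers larger offers when $s=0$, so Bayesian incentive compatibility for the seller pins down that the largest offer over all seller types is the one used at $s=0$; call it $p_0$. Thus for every seller type $s$, the offered price $p(s)\le p_0$, and since ex-post IR forces $p(s)\ge s$, no trade occurs whenever $s>p_0$. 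Then I upper bound welfare: (i) the contribution from $s>p_0$ is just $\mathbb E[s\cdot\indicator_{s>p_0}]$, which I bound using $\distributionlbonesidedcdfs$; (ii) the contribution from $s\le p_0$ is at most $\mathbb E[b\cdot\indicator_{s\le p_0}]$. For (ii) I want to say that when $s\le p_0$, the event $s\le p_0$ has probability at most $\distributionlbonesidedcdfs(p_0)\le \frac{b/e}{b-p_0+\varepsilon}$ given $b$ (valid whenever $p_0\le b\frac{e-1}{e}+\varepsilon$; otherwise the bound is just $1$), and then integrate $b\cdot\frac{b/e}{b-p_0+\varepsilon}\cdot\frac{1}{(b+\varepsilon)^2}$ over $b$. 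This is exactly the integral that produced the $\ln\frac{q_1}{q_2}$ term and the constant $\frac{e}{e-1}$ in the proof of Theorem~\ref{bp-cons-app}: setting $q_1\approx 1/e$ (the trade probability far from $p_0$) and $q_2\approx 1$ and optimizing over where $p_0$ sits, the worst case is again $p_0$ a vanishing fraction of $k$, giving welfare $\approx(1-\frac1e)\cdot\frac{e-1}{e}\cdot$ something — more carefully, the buyer-offering analysis already shows any such structure yields ratio at most $\frac{e}{e-1}$ and the distribution $\distributionlbonesided$ is engineered (as in the Tightness subsection with the single value $1$) so that the adversary's best choice of $p_0$ achieves it. The hard part is making the ``seller strictly prefers larger offers at $s=0$, hence $p_0$ is the global max offer'' argument fully rigorous against the almost-equal-revenue distribution, and then carrying the $O(\varepsilon)$ and $O(1/\ln k)$ error terms through the optimization over $p_0$ so that the bound is $\frac{e}{e-1}$ in the limit rather than $\frac{e}{e-1}-o(1)$ from the wrong side; I would do the optimization over $p_0\in[0,k\frac{e-1}{e}]$ explicitly, substituting $t=p_0/k$ to see the ratio as a function of $t$ maximized (from the mechanism's adversarial standpoint) with a clean limit, and treat the $p_0>k\frac{e-1}{e}$ tail separately since there trade is impossible for a positive-measure set of large-$b$ seller types and welfare is again only $O(1)$.
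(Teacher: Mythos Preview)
Your high-level structure matches the paper's: split into the two cases (buyer has dominant strategy vs.\ seller has dominant strategy), use ``seller with $s=0$ strictly prefers higher offers'' and ``buyer strictly prefers lower offers'' to collapse each case to a one-parameter family indexed by a single price, and then bound welfare. But there is a genuine error in your Case~1 and Case~2 is too hand-wavy to stand.

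\textbf{The error in Case 1.} You write that the mechanism's welfare is ``something like $\Pr(s\le p^\ast)\cdot\mathbb E[b] + (\text{small})$'' and conclude the ratio tends to $e$. The ``small'' term is the contribution when the seller keeps the item, and it is not small. For the (almost) equal-profit seller distribution, $\mathbb E[s\cdot\indicator_{s>0}\mid b]=(1-\tfrac{2}{e})b$ in the $\varepsilon\to 0$ limit (integrate $s\cdot\frac{b/e}{(b-s)^2}$ over $[0,b\tfrac{e-1}{e}]$). So for $p^\ast=0$ the welfare per buyer type is $\tfrac{1}{e}b+(1-\tfrac{2}{e})b=(1-\tfrac{1}{e})b$, giving total welfare $(1-\tfrac{1}{e})\ln k+O(1)$ and ratio exactly $\tfrac{e}{e-1}$, not $e$. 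Case~1 is thus \emph{tight}, not ``comfortably fine''; your upper bound on the mechanism's welfare is simply wrong, and fixing it requires doing the full integration---which is precisely the content of the paper's computation.

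\textbf{How the paper handles it.} The paper does not bound the two cases separately. It defines a single allocation rule $x_p$ (trade if $b=k+1$; else if $s=0$ trade iff $b\ge p$; else trade iff $s\le p$) and shows it dominates the welfare of any mechanism in \emph{either} case with $p$ chosen appropriately. Then one explicit calculation of $x_p$'s welfare---carefully tracking both the trade contribution and the $\mathbb E[s\cdot\indicator_{s>p}]$ contribution (the term you dismissed), integrated against the buyer marginal---shows the ratio is at least $\tfrac{e}{e-1}$ in the $\varepsilon\to 0$, $k\to\infty$ limit, uniformly in $p$. Your Case~2 sketch (``this is exactly the integral that produced the $\ln\tfrac{q_1}{q_2}$ term\ldots the worst case is $p_0$ a vanishing fraction of $k$\ldots'') does not substitute for this: the Theorem~\ref{bp-cons-app} analysis is pointwise in $b$ with a fixed seller distribution, whereas here you must integrate over $b$ while the seller's conditional distribution varies with $b$, and then optimize over $p_0$. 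That joint computation is the substance of the proof, and your proposal does not contain it.
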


To prove this theorem we use the family of allocation rules $x_p$ (Definition~\ref{x-p-one-sided}). We show that the welfare of every one-sided dominant strategy mechanism for the seller is at most the welfare of $x_p$, for some $p$ (Claim~\ref{lb-one-sided-seller}). Similarly, we show that the welfare of every one-sided dominant strategy mechanism for the buyer is no better than the welfare of $x_p$, for some $p$ (Claim~\ref{lb-one-sided-buyer}). We conclude the proof of the theorem by bounding the approximation ratio of every possible allocation rule $x_p$ (Lemma~\ref{one-sided-lb-max-price-at-s-0}).

\begin{definition}\label{x-p-one-sided}
For every $p\geq 0$, let $x_p$ be the following allocation rule for the distribution $\distributionlbonesided$:
$$
x_p(b,s) = 
\begin{cases}
        1 & b=k+1; \\
        1  & s = 0 \text{ and }  k \geq b \geq p;\\
        0  & s = 0 \text{ and } b < p \text{ and } b \neq k+1;\\
        1  & s \neq  0 \text{ and } s \leq p;\\
        0 &  \text{otherwise}.
        \end{cases}
$$
\end{definition}
Observe that for some values of $p$, this allocation rule $x_p$, is not implementable. It is used simply to bound the welfare of every one-sided dominant strategy mechanism. The next three claims suffice to prove the theorem. Their proofs can be found in Subsections \ref{subsec-lb-one-sided-buyer}, \ref{subsec-lb-one-sided-seller}, and \ref{subsec-lb-max-price-at-s-0}.

\begin{claim} \label{lb-one-sided-buyer}
    % Every one-sided dominant strategy mechanism for the buyer provides an approximation ratio of at least $\frac{e}{e-1}$ to the optimal welfare, for the distribution $\distributionlbonesided$ for large enough values of $k$.
    Let $k\geq 2$ and $\varepsilon >0 $. 
    Let $M$ be a one-sided dominant strategy mechanism for the buyer. There exists $p \geq 0$, such that the welfare of $M$ is at most the welfare of $x_p$, both with respect to the distribution $\distributionlbonesided$.    
\end{claim}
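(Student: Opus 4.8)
The plan is to show that any one-sided dominant strategy mechanism $M$ for the buyer behaves, from a welfare standpoint, like one of the allocation rules $x_p$. The key structural observation is that since $M$ is dominant strategy for the buyer, the buyer can only accept or reject a take-it-or-leave-it offer, so the offered price depends only on the seller's value $s$; write $p(s)$ for this price (on those $s$ where trade is possible). The crucial case is $s=0$: here the seller's conditional distribution of the buyer is (almost) an equal revenue distribution, and I would first show that a seller with value $0$ strictly prefers a \emph{higher} take-it-or-leave-it offer. Intuitively, raising the price slightly strictly increases expected revenue because of the near-equal-revenue shape (the small $\varepsilon$ perturbation tilts the equal-revenue distribution so that higher prices pay strictly more). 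Since $M$ is Bayesian incentive compatible for the seller, this forces the price offered when $s = 0$ to be the maximum over all $s$ of $p(s)$; call this value $p$.

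Next I would translate this into a pointwise welfare comparison with $x_p$. For the point mass at $b = k+1$: the buyer's value is so high that any individually rational price is below it, so in $M$ trade happens whenever $M$ makes any offer; $x_{p}$ trades with probability $1$ there, so $x_p$ dominates $M$ on this part — here I may need the observation that even if $M$ never trades at $b=k+1$ for some $s$, $x_p$ trading can only help welfare. For $s = 0$: $M$ offers price $p$, so trade occurs exactly when $b \ge p$ (and for $b = k+1$), which is precisely what $x_p$ does; so welfare contributions match. For $s \neq 0$: here $M$ offers some price $p(s) \le p$, and trade occurs only if $b \ge p(s)$; but whenever trade occurs at a seller value $s$, welfare is $b \ge s$, whereas $x_p$ trades exactly when $s \le p$ and contributes $b$. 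The point is that $x_p$ trades on a superset of the $(b,s)$ pairs where a welfare \emph{gain} over no-trade is realized, and when it trades it realizes the full buyer value; I would argue contribution-by-contribution that for every $s \neq 0$ the expected welfare of $x_p$ conditioned on that $s$ is at least that of $M$. Summing (integrating) over $s$ against the marginal of the seller gives $\welfare{x_p}{\distributionlbonesided} \ge \welfare{M}{\distributionlbonesided}$.

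The main obstacle I anticipate is the $s \neq 0$ comparison: $x_p$ does \emph{not} simply trade more often than $M$ — for $s \le p$ it trades with certainty regardless of $b$, which could in principle be \emph{worse} than $M$ if $M$ cleverly declined trades where $s > b$. So the argument must use that when $s \le p$ and $s > b$, forcing trade replaces welfare $s$ with welfare $b < s$, a loss; I need to check this loss is outweighed, or rather that we should not naively claim $x_p$ dominates $M$ for every fixed $s$ but instead bound things after integrating, possibly using the near-equal-profit structure of the seller's distribution and the fact that the loss region $\{s > b\}$ under $\distributionlbonesidedcdfs$ has controlled mass. An alternative and cleaner route, which I would try first, is to not insist $x_p$ dominates $M$ but to directly bound $\welfare{M}{\distributionlbonesided}$ from above by choosing $p$ to be the $s=0$ price and noting that \emph{all} welfare $M$ can extract is (i) trades at $b=k+1$, (ii) trades at $s=0$ with $b \ge p$, and (iii) trades at $s \neq 0$, for which the welfare is at most $b$ and which require an offer $\le p$ hence $s \le p$; item (iii) is then bounded above by the corresponding term of $x_p$ because $x_p$ trades on all of $\{s \le p, s\neq 0\}$ and collects $b$ there — and $b \ge s$ always on the relevant region. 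This reframing sidesteps the pointwise issue by only ever comparing against the no-trade baseline on each piece.
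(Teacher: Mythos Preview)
Your approach is essentially the paper's: reduce to a price function $p(s)$, prove that a seller with value $0$ strictly prefers higher take-it-or-leave-it prices (this is exactly Lemma~\ref{lemma-seller-prefers-higher-prices}), deduce by seller Bayesian IC that $p(0)=\max_s p(s)$, set $p=p(0)$, and then argue that $x_p$ trades whenever $M$ does.

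The detour in your last two paragraphs is unnecessary. In the distribution $\distributionlbonesided$, the seller's value conditional on $b$ is supported on $[0,\,b\cdot\tfrac{e-1}{e}+\varepsilon]$, so $s<b$ holds \emph{always}. Hence every trade is welfare-improving, and the containment ``$M$ trades $\Rightarrow$ $x_p$ trades'' immediately yields $\welfare{x_p}{\distributionlbonesided}\ge \welfare{M}{\distributionlbonesided}$; there is no loss region $\{s>b\}$ to worry about and no need for the integrated or ``reframed'' argument you sketch. Concretely: if $M$ trades at $(s,b)$ then $s\le p_s\le p$ by individual rationality and the maximality of $p(0)$, so for $s\neq 0$ the rule $x_p$ also trades; for $s=0$ and $b\in[1,k]$ both rules use threshold $p$; and $x_p$ always trades at $b=k+1$. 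Any additional trades $x_p$ makes have $b>s$ and hence only increase welfare. That is the entire comparison step in the paper's proof.
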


\begin{claim} \label{lb-one-sided-seller}
% Every one-sided dominant strategy mechanism for the seller provides an approximation ratio of at least $\frac{e}{e-1}$ to the optimal welfare, for the distribution ${\distributionlbonesided}^k$ for large enough values of $k$.
    Let $k\geq 2$ and $\varepsilon >0 $. 
    Let $M$ be a one-sided dominant strategy mechanism $M$ for the seller. There exists $p \geq 0$, such that the welfare of $M$ is at most the welfare of $x_p$, both with respect to the distribution $\distributionlbonesided$.     
\end{claim}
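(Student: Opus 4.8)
The plan is to show that, on the distribution $\distributionlbonesided$, any mechanism $M$ that is dominant strategy incentive compatible for the seller and Bayesian incentive compatible for the buyer must coincide with a fixed-price mechanism whose price $p$ lies in $[0,1]$, and that every such fixed-price mechanism has welfare at most that of $x_p$ on $\distributionlbonesided$.

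First I would establish the standard single-agent characterization of mechanisms that are dominant strategy incentive compatible for the seller. Fix a buyer value $b$. Seller dominant strategy incentive compatibility together with ex-post individual rationality imply that there is a threshold $t(b)\in[0,b]$ such that trade occurs exactly on $\{s:s\le t(b)\}$ (up to the behavior at the endpoint), and that whenever trade occurs the seller is paid $t(b)$, independently of $s$: if trade occurred at $s_1$ but not at some $s_2<s_1$, the type-$s_2$ seller would gain by reporting $s_1$; and if two traded types received distinct prices, the lower-paid one would gain by reporting the higher-paid one. Hence $M$ is, for each $b$, the take-it-or-leave-it offer ``sell at price $t(b)$'' to the seller.

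Next I would use the (near) equal-profit structure of $\distributionlbonesided$ together with buyer Bayesian incentive compatibility to pin down $t(\cdot)$. A buyer whose true value is $b$ and who reports $b'$ obtains interim utility $g_b(t(b')):=\distributionlbonesidedcdfs(t(b'))\cdot(b-t(b'))$. From the explicit form of $\distributionlbonesidedcdfs$, a short computation shows that $g_b$ is \emph{strictly} decreasing on $[0,b]$: the seller's conditional distribution is equal-profit up to the lower-order $\varepsilon$ terms, and those terms tilt the buyer strictly toward lower offers. Buyer Bayesian incentive compatibility then forces $t(b)\le t(b')$ for every pair $b,b'$ in the buyer's support, hence $t(\cdot)$ is a constant $p$; and since buyer ex-post individual rationality requires $p=t(b)\le b$ for every $b$ in the support, while the smallest such value is $1$, we get $p\le 1$. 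Thus $M$ is the fixed-price-$p$ mechanism for some $p\in[0,1]$ (up to tie-breaking). I expect this step to be the main obstacle, since one must carefully handle the atom of $\distributionlbonesidedcdfs$ at $s=0$, the point mass at $b=k+1$ versus the continuum $[1,k]$, the tie-breaking at the threshold, and the degenerate no-trade case, and verify that the monotonicity of $g_b$ is strict --- which is precisely what rules out a non-constant $t(\cdot)$.

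Finally I would compare the welfare of this fixed-price-$p$ mechanism with that of $x_p$. Since $p\le 1$ is at most every buyer value, for $b\in[1,k]$ the event $\{s\le p\}$ is exactly where both the fixed-price mechanism and $x_p$ trade (the atom at $s=0$ included, since there $b\ge p$), both earning welfare $b$ on that event and welfare $s$ off it; so the two mechanisms have identical welfare on $b\in[1,k]$, and wherever their tie-breaking differs $x_p$ trades, which is efficient here, so $x_p$ is only better. For $b=k+1$, $x_p$ always trades and earns welfare $k+1$, whereas the fixed-price mechanism trades only on $\{s\le p\}$ and hence earns strictly less, since the seller's value is always below $k+1$ on this branch. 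Summing over $b$ yields $\welfare{M}{\distributionlbonesided}\le\welfare{x_p}{\distributionlbonesided}$, which is the claim.
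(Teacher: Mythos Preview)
Your proposal is correct and follows essentially the same route as the paper: reduce the seller-DSIC mechanism to a take-it-or-leave-it offer $t(b)$, use the strict monotonicity of the buyer's profit in the offered price (the paper's Lemma~\ref{lemma-buyer-prefers-lower-prices}) together with buyer Bayesian incentive compatibility to force $t(\cdot)$ to be a constant $p$, and then observe that $x_p$ trades in every instance the fixed-price-$p$ mechanism does. The only cosmetic difference is that you additionally pin down $p\le 1$ and use it in the welfare comparison, whereas the paper's comparison works for any $p$ directly; both arguments are valid.
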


 \begin{claim} \label{one-sided-lb-max-price-at-s-0}
    Fix $p\geq 0$. When $\varepsilon$ approaches $0$ and $k$ approaches $\infty$, the allocation rule $x_p$ provides 
    an approximation ratio no better than $\frac{e}{e-1}$ to the optimal welfare for the distribution $\distributionlbonesided$.
\end{claim}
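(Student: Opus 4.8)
The plan is to bound the welfare of $x_p$ by $\welfare{x_p}{\distributionlbonesided}\le \frac{e-1}{e}\welfare{OPT}{\distributionlbonesided}+C$ for an absolute constant $C$, with the bound \emph{uniform over all $p\ge 0$} (this uniformity is useful, since the price $p$ produced by Claims~\ref{lb-one-sided-buyer} and~\ref{lb-one-sided-seller} may depend on $k$ and $\varepsilon$). I would carry out the computation at $\varepsilon=0$, where $\distributionlbonesidedcdfs$ is exactly the equal-profit distribution $\frac{b/e}{b-s}$ on $[0,b\cdot\frac{e-1}{e}]$ with an atom of mass $\frac1e$ at $s=0$; for $\varepsilon>0$ every quantity below is perturbed only by a factor $1+O(\varepsilon)$, so the limit $\varepsilon\to 0$ recovers the $\varepsilon=0$ value. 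Since $b\cdot\frac{e-1}{e}<b$ throughout the support, trade is always efficient, and hence $\welfare{OPT}{\distributionlbonesided}=\mathbb{E}_{b\sim\distributionlbonesided_b}[b]=\int_1^k b\cdot\frac{1}{b^2}\,db+(k+1)\cdot\frac1k=\ln k+\frac{k+1}{k}$.

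Next I would compute $W_b(p):=\mathbb{E}\bigl[\text{welfare of }x_p \mid \text{buyer value }b\bigr]$. For $b=k+1$ the rule always trades, so $W_{k+1}(p)=k+1$. For $b\in[1,k]$, by Definition~\ref{x-p-one-sided} the rule trades exactly when $s=0$ and $b\ge p$, or when $0<s\le p$; splitting $W_b(p)$ into the trade contribution (value $b$) and the no-trade contribution (value $s$, integrated against the density $\frac{b/e}{(b-s)^2}$ by parts, as in the proof of Lemma~\ref{minimal-expectation}) and distinguishing where $p$ lies relative to $b\cdot\frac{e-1}{e}$ and $b$, I expect the closed form
\[
W_b(p)=
\begin{cases}
b\cdot\frac{e-1}{e}-\frac{b}{e}\ln\bigl(1-\frac{p}{b}\bigr), & 0\le p\le b\cdot\frac{e-1}{e};\\
b, & b\cdot\frac{e-1}{e}<p\le b;\\
b\cdot\frac{e-1}{e}, & p>b.
\end{cases}
\]
The qualitative fact making a large price useless is the third line: once $p>b$ the mechanism stops trading on the atom $s=0$ (where the welfare is then $0$), so $W_b(p)$ drops back to $b\cdot\frac{e-1}{e}$.

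The heart of the argument is a uniform bound on the excess $\gamma_b(p):=W_b(p)-b\cdot\frac{e-1}{e}\ge 0$, namely $\int_1^k \frac{\gamma_b(p)}{b^2}\,db\le K_0$ for an absolute constant $K_0$ independent of $p$ and $k$. Here $\gamma_b(p)=0$ for $b<p$; on the range $p\le b<\frac{e}{e-1}p$ we have $\gamma_b(p)=\frac{b}{e}$, contributing at most $\frac1e\int_p^{\frac{e}{e-1}p}\frac{db}{b}=\frac1e\ln\frac{e}{e-1}$; and on the range $b\ge\frac{e}{e-1}p$ we have $\frac{p}{b}\le\frac{e-1}{e}$, so $-\ln(1-\frac{p}{b})\le\frac{p/b}{1-p/b}\le\frac{ep}{b}$, whence $\frac{\gamma_b(p)}{b^2}\le\frac{p}{b^2}$ and $\int_{\frac{e}{e-1}p}^{k}\frac{p}{b^2}\,db\le\frac{e-1}{e}$ (the boundary situations $p\ge k$ or $p<\frac{e-1}{e}$ only make these estimates easier). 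Thus $K_0=\frac1e\ln\frac{e}{e-1}+\frac{e-1}{e}$ works.

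Combining, since $W_b(p)=b\cdot\frac{e-1}{e}+\gamma_b(p)$ on $[1,k]$ and $W_{k+1}(p)=k+1$,
\[
\welfare{x_p}{\distributionlbonesided}=\frac{e-1}{e}\ln k+\int_1^k\frac{\gamma_b(p)}{b^2}\,db+\frac{k+1}{k}\le \frac{e-1}{e}\welfare{OPT}{\distributionlbonesided}+\frac1e\cdot\frac{k+1}{k}+K_0\le \frac{e-1}{e}\welfare{OPT}{\distributionlbonesided}+C,
\]
so the approximation ratio of $x_p$ is at least $\bigl(\frac{e-1}{e}+C/\welfare{OPT}{\distributionlbonesided}\bigr)^{-1}$, which converges to $\frac{e}{e-1}$ as $k\to\infty$ and $\varepsilon\to 0$, uniformly in $p$ (at $\varepsilon=0$ the convergence is from below, since $\gamma_b(p)\ge0$ also yields $\welfare{x_p}{\distributionlbonesided}\ge\frac{e-1}{e}\ln k+\frac{k+1}{k}$). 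I expect the main obstacle to be the uniform-in-$p$ excess bound: the point to extract is that enlarging $p$ cannot help, because while it makes $x_p$ trade whenever $0<s\le p$, it simultaneously destroys the contribution of the mass at $s=0$ for every $b<p$, and the $1/b^2$ weight keeps whatever gain survives bounded by an absolute constant. Deriving the closed form for $W_b(p)$ with the atom at $s=0$ handled correctly is the other spot requiring care; the reduction to $\varepsilon=0$, and the fact that the argument covers every $p\ge 0$ and hence every $p$ arising from Claims~\ref{lb-one-sided-buyer}--\ref{lb-one-sided-seller}, are routine.
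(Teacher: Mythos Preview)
Your proposal is correct. The overall strategy matches the paper's: compute $\welfare{OPT}{\distributionlbonesided}\sim\ln k$, bound $\welfare{x_p}{\distributionlbonesided}$ by $\frac{e-1}{e}\ln k+O(1)$ uniformly in $p$, and take the ratio as $k\to\infty$. The execution, however, is organized differently and is a bit cleaner than the paper's.

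The paper integrates jointly over $(s,b)$, splitting the welfare of $x_p$ into three pieces (the $b=k+1$ term, the $s=0$ column, and the $s>0$ region), carries $\varepsilon$ through each integral via dominated convergence, and then combines the resulting logarithms; the coefficient of $\ln p$ conveniently cancels and one is left with $(1-\tfrac1e)\ln k$ plus an explicit constant. You instead condition on $b$ first and obtain the closed form $W_b(p)$, which isolates the ``excess'' $\gamma_b(p)=W_b(p)-b\cdot\frac{e-1}{e}$ as the only thing to control; then a single elementary inequality $-\ln(1-x)\le \frac{x}{1-x}$ turns $\int_1^k\gamma_b(p)/b^2\,db$ into an absolute constant. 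Your framing makes the uniformity in $p$ transparent (and automatically covers the edge cases $p<1$ and $p>k\cdot\frac{e-1}{e}$, which the paper's integrals treat somewhat implicitly), while the paper's direct computation yields a slightly sharper numerical constant. The $\varepsilon\to 0$ step you sketch as a $1+O(\varepsilon)$ perturbation is exactly what the paper justifies via dominated convergence, so there is no issue there.
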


\subsubsection{Mechanisms with Dominant Strategy for the Buyer (Proof of Claim~\ref{lb-one-sided-buyer})}\label{subsec-lb-one-sided-buyer}
    Every one-sided dominant strategy mechanism for the buyer is a take-it-or-leave-it offer for the buyer, where the price may only depend on the seller's value. 
    Fix such a mechanism $M=(x,p)$ for the distribution $\distributionlbonesided$. For every value $s$ in the seller's support, denote the take-it-or-leave-it offer of the mechanism $M$ by $p_s$.
    Now, by Lemma~\ref{lemma-seller-prefers-higher-prices}, the price $p_0$ offered to the buyer when the seller's value is $0$, should be no lower than any price in $\{p_s | s \in [0, (k+1)\cdot\frac{e-1}{e} +\varepsilon]\}$. This is true since if there is a price $p_s' > p_0$, and the seller prefers higher prices for $p_0$ (Lemma~\ref{lemma-seller-prefers-higher-prices}), the seller will play the strategy that sets the price to $p_s'$, in contradiction to the incentive compatibility of the mechanism.  
    Recall that the item can be traded only if the seller's value is at most the price. Thus, the mechanism $M$ can only trade when $s<p_s\leq p_0$.
    Now, let $p=p_0$, and consider the allocation rule $x_p$ (Definition~\ref{x-p-one-sided}). Observe that the allocation rule $x_p$ trades the item in every instance that $M$ trades the item and might trade the item when $M$ does not. Therefore, the welfare of $x_p$ is at least the welfare of $M$.
    \begin{lemma} \label{lemma-seller-prefers-higher-prices}
    For $s=0$ and $p \in [1,k]$, the expected profit of the seller from a take-it-or-leave-it offer with price $p$ to the buyer is higher than a take-it-or-leave-it offer with price $p' < p$. 
    
    \end{lemma}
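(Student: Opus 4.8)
The plan is to reduce the statement to an elementary one-variable computation by writing the seller's expected profit as an explicit function of the offered price. When the seller's value is $s=0$, retaining the item is worth nothing, so from a take-it-or-leave-it offer of price $p$ the seller earns $p$ with probability $\Pr_{b\sim\distributionlbonesided_b}[b\ge p]$ and $0$ otherwise; hence his expected profit is exactly $p\cdot\Pr_{b\sim\distributionlbonesided_b}[b\ge p]$, where $\distributionlbonesided_b$ denotes the buyer's marginal distribution.

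The first step is to evaluate the acceptance probability on the relevant range. For $p\in[1,k]$, integrating the buyer density $\frac{1}{(b+\varepsilon)^2}$ from $1$ to $p$ and recalling that all the remaining mass (including the point mass at $b=k+1$) sits above $p$, one obtains $\Pr_{b\sim\distributionlbonesided_b}[b\ge p]=\frac{\varepsilon}{1+\varepsilon}+\frac{1}{p+\varepsilon}$, so the seller's profit on $[1,k]$ equals $g(p):=\frac{p\varepsilon}{1+\varepsilon}+\frac{p}{p+\varepsilon}$. The second step is to differentiate: $g'(p)=\frac{\varepsilon}{1+\varepsilon}+\frac{\varepsilon}{(p+\varepsilon)^2}>0$ because $\varepsilon>0$, so $g$ is strictly increasing on $[1,k]$. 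This settles every competing price $p'$ with $1\le p'<p$. The third step handles prices $p'<1$: every buyer value is at least $1$, so such an offer is accepted with probability $1$ and yields profit $p'<1=g(1)\le g(p)$, the last inequality by monotonicity of $g$. Putting the two cases together shows that the offer $p$ strictly beats every $p'<p$, which is the claim.

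I do not expect a genuine obstacle here; the argument is essentially a routine verification that the stated marginal integrates as advertised (the surrounding text has already carried out the analogous computation). The two places that warrant a line of care are making sure the point mass at $b=k+1$ is correctly counted inside $\Pr_{b\sim\distributionlbonesided_b}[b\ge p]$ --- it is, since $p\le k<k+1$, regardless of the tie-breaking convention --- and isolating the sub-case $p'<1$, where the acceptance probability saturates at $1$ and the clean closed form $g$ ceases to apply.
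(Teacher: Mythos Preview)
Your argument computes the seller's expected profit with respect to the \emph{marginal} distribution of the buyer, $\distributionlbonesided_b$. That is the wrong distribution. Because the values are correlated, the seller's interim profit at $s=0$ is taken over the \emph{conditional} buyer distribution $\distributionlbonesided_{b|0}$; this is exactly the object that enters the seller's Bayesian incentive-compatibility constraint (see the definition of seller Bayesian IC in Section~\ref{sec-prelim}), and it is also why the lemma is phrased ``for $s=0$''. Concretely, the conditional density of $b$ at $s=0$ is proportional to $f^{k,\varepsilon}_b(b)\cdot\Pr(s=0\mid b)=\tfrac{1}{(b+\varepsilon)^2}\cdot\tfrac{b}{e(b+\varepsilon)}=\tfrac{b}{e(b+\varepsilon)^3}$, not $\tfrac{1}{(b+\varepsilon)^2}$; the point mass at $b=k+1$ is likewise reweighted by $\Pr(s=0\mid b=k+1)$ and then renormalized by $\Pr(s=0)$. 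Your closed form $g(p)=\tfrac{p\varepsilon}{1+\varepsilon}+\tfrac{p}{p+\varepsilon}$ and the resulting easy derivative are therefore not the quantities the lemma is about.

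The paper's proof carries out the correct computation: it writes $1-\distributionlbonesided_{b|0}(p)$ as an integral of the joint density $f^{k,\varepsilon}(0,b)$ divided by $\Pr(s=0)$, reduces the profit (up to the positive constant $e\cdot\Pr(s=0)$) to a function of the form $p\bigl(\text{const}+\tfrac{\varepsilon+2p}{2(p+\varepsilon)^2}\bigr)$, and then checks $g'(p)>0$ by hand; this requires genuinely more work than your derivative, precisely because the conditional tail is more delicate than the marginal one. Your handling of the case $p'<1$ and of the point mass at $k+1$ is fine in spirit, but it rests on the wrong profit expression, so the proof as written does not establish the lemma.
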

    \begin{proof}[Proof of Lemma~\ref{lemma-seller-prefers-higher-prices}]
    Recall that the conditional cumulative distribution function of the buyer, given that the seller's value is $0$ is denoted by  $\distributionlbonesided_{b|0}$. 
    The profit of the seller with value $0$ from a take-it-or-leave-it offer of $p \in [1,k] \cup \{k+1\}$ is $(1-\distributionlbonesided_{b|0}(p))\cdot p$.
    To analyze this expression, we first provide an explicit expression for $1-\distributionlbonesided_{b|0}(p)$, where $p \in [1,k]$.
    By definition, we have:
    \begin{align*}
    1- \distributionlbonesided_{b|0}(p) &= \left(\int_p^{k} \dfrac{f^{k, \varepsilon} (0, b)}{ \mathrm{\Pr}_{s \sim \distributionlbonesided_s}(s=0)} \d b \right) + \frac{\mathrm{\Pr}_{\distributionlbonesided}(0, k+1)}{\mathrm{\Pr}_{s \sim \distributionlbonesided_s}(s=0)}\\
    & =\frac{1}{e\cdot \mathrm{\Pr}_{s \sim \distributionlbonesided_s}(s=0)} \left(\frac{k+1}{k+1+\varepsilon}\left(\frac{1}{k+\varepsilon}+\frac{\varepsilon}{1+\varepsilon}\right) +\int_p^{k} \frac{b}{(b+\varepsilon)^3} \d b \right)\\
    & =\frac{1}{e\cdot \mathrm{\Pr}_{s \sim \distributionlbonesided_s}(s=0)} \left(\frac{k+1}{k+1+\varepsilon}\left(\frac{1}{k+\varepsilon}+\frac{\varepsilon}{1+\varepsilon}\right) +\eval{ -\frac{\varepsilon +2b}{2(b+\varepsilon)^2}}_p^{k}\right)\\
    & = \frac{1}{e\cdot \mathrm{\Pr}_{s \sim \distributionlbonesided_s}(s=0)} \left(\frac{k+1}{k+1+\varepsilon}\left(\frac{1}{k+\varepsilon}+\frac{\varepsilon}{1+\varepsilon}\right) + \left(-\frac{\varepsilon +2k}{2(k+\varepsilon)^2}+ \frac{\varepsilon +2p}{2(p+\varepsilon)^2}\right)\right).
    \end{align*}
    Let $g(p) = p \left(\frac{k+1}{k+1+\varepsilon}\left(\frac{1}{k+\varepsilon}+\frac{\varepsilon}{1+\varepsilon}\right) + \left(-\frac{\varepsilon +2k}{2(k+\varepsilon)^2}+ \frac{\varepsilon +2p}{2(p+\varepsilon)^2}\right)\right)$. We now prove that $g$ is strictly increasing by showing that its derivative is positive. We get that the profit of the seller with value $0$ from a take-it-or-leave-it offer of  $p \in [1,k]$ is strictly increasing in $p$, as $g(p) = {p\cdot \left(1- \distributionlbonesided_{b|s=0}(b)\right)}\cdot {e\cdot \mathrm{\Pr}_{s \sim \distributionlbonesided_s}(s=0)}$.

    \begin{align*}
        g'(p) & = \frac{k+1}{k+1+\varepsilon}\left(\frac{1}{k+\varepsilon}+\frac{\varepsilon}{1+\varepsilon}\right) + \left(-\frac{\varepsilon +2k}{2(k+\varepsilon)^2}+ \frac{\varepsilon +2p}{2(p+\varepsilon)^2}\right) -p\frac{p}{(\varepsilon+p)^3}\\
        & = \frac{2(k+1)(k+\varepsilon)-(\varepsilon +2k)(k+1+\varepsilon)}{2(k+1+\varepsilon)(k+\varepsilon)^2} +\frac{\varepsilon(k+1)}{(1+\varepsilon)(k+1+\varepsilon)} + \frac{\varepsilon +2p}{2(p+\varepsilon)^2} -\frac{p^2}{(\varepsilon+p)^3}\\
        & = \frac{-k\varepsilon +\varepsilon - \varepsilon^2}{2(k+1+\varepsilon)(k+\varepsilon)^2} +\frac{\varepsilon(k+1)}{(1+\varepsilon)(k+1+\varepsilon)} + \frac{\varepsilon +2p}{2(p+\varepsilon)^2} -\frac{p^2}{(\varepsilon+p)^3}\\
        & = \frac{(-k\varepsilon +\varepsilon - \varepsilon^2)(1+\varepsilon)+2\varepsilon(k+1)(k+\varepsilon)^2}{2(k+1+\varepsilon)(k+\varepsilon)^2(1+\varepsilon)} + \frac{\varepsilon +2p}{2(p+\varepsilon)^2} -\frac{p^2}{(\varepsilon+p)^3}\\        
        & > \frac{(1+\varepsilon)(\varepsilon - \varepsilon^2) +k\varepsilon(2-1-\varepsilon)}{2(k+1+\varepsilon)(k+\varepsilon)^2(1+\varepsilon)} + \frac{\varepsilon +2p}{2(p+\varepsilon)^2} -\frac{p^2}{(\varepsilon+p)^3}\\
        & > \frac{\varepsilon +2p}{2(p+\varepsilon)^2} -\frac{p^2}{(\varepsilon+p)^3}.\\
    \end{align*}
    where in the last inequality we assume that $\varepsilon - \varepsilon^2 >0$, since we can choose $\varepsilon>0$ to be as small as we want. We prove that $\frac{\varepsilon +2p}{2(p+\varepsilon)^2} -\frac{p^2}{(\varepsilon+p)^3} >0$, which implies that $g'(p) > 0$ for every $p \in [1,k]$:

    \begin{align*}
        & \frac{\varepsilon +2p}{2(p+\varepsilon)^2} > \frac{p^2}{(\varepsilon+p)^3} \iff (\varepsilon +2p)(p+\varepsilon) > 2p^2 \iff 2p^2 +3p\varepsilon +\varepsilon^2 > 2p^2 \iff 3p\varepsilon +\varepsilon^2 >0.
    \end{align*}
    This proves that the expected profit of the seller with value $0$ from a take-it-or-leave-it offer of $p \in [1,k]$ is smaller than his expected profit from an offer of $p' \in [1,k]$ that is smaller than $p$. Observe that the expected profit of the seller with value $0$ from a take-it-or-leave-it offer of $p<1$ is even smaller than his expected profit from an offer of price $1$ (as reducing the offer below $1$, does not increase the probability that the buyer will buy the item). The profit of an offer $p=k+1$ is even larger profit than an offer of $k$, as the probability $1-\distributionlbonesided_{b|s=0}(k+1)$ is $1-\distributionlbonesided_{b|s=0}(k)$ but the price is larger ($k+1>k$). This concludes the proof of the lemma. 
\end{proof}
    
\subsubsection{Mechanisms with Dominant Strategy for the Seller (Proof of Claim~\ref{lb-one-sided-seller})}\label{subsec-lb-one-sided-seller}

    Every one-sided dominant strategy mechanism for the seller is a take-it-or-leave-it offer to the seller, where the offer depends only on the value of the buyer. 
    Fix a mechanism $M=(x,p)$ for the distribution ${\distributionlbonesided}$. For every $b \in [1,k]\cup \{ k+1 \}$, the mechanism offers a price $p_b$.
    By Lemma~\ref{lemma-buyer-prefers-lower-prices}, the expected profit of the buyer with value $b \in [1,k]\cup \{ k+1 \}$ is higher when $p_b$ is smaller. Since the mechanism is Bayesian incentive compatible for the buyer we claim that it must be a fixed price, i.e., $p_b$ is equal for every $b \in [1,k]\cup \{ k+1 \}$. This is true since if there are two different values $p_b > p_{b'}$ and the buyer strictly prefers lower prices (Lemma~\ref{lemma-buyer-prefers-lower-prices}), the buyer will play the strategy that sets the price to $p_{b'}$, in contradiction to the incentive compatibility of the mechanism.
    Let $p$ be the fixed price and consider the allocation rule $x_p$ (see Definition~\ref{x-p-one-sided}).
    Recall that the item can be traded only if the seller's value is at most the price. Thus, the mechanism $M$ only trades the item when $s\leq p$. Observe that the allocation rule $x_p$ trades the item in every instance that $M$ trades the item and might trade the item when $M$ does not. Thus, the welfare of $x_p$ is at least the welfare of $M$.

    \begin{lemma} \label{lemma-buyer-prefers-lower-prices}
    For $b \in [1,k] \cup \{ k+1 \}$ and $p \in [0,b\cdot \frac{e-1}{e}+ \varepsilon]$, the expected profit of the buyer from a take-it-or-leave-it offer with price $p$ to the seller is higher than a take-it-or-leave-it offer with price $p' > p$.  
\end{lemma}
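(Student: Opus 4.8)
The plan is to compute the buyer's expected profit as an explicit function of the offered price and to split the analysis at the threshold $p^{\ast}:=b\cdot\frac{e-1}{e}+\varepsilon$, beyond which the seller's conditional cdf $\distributionlbonesidedcdfs$ has already reached $1$. Recall that an offer of price $p$ to the seller yields the buyer expected profit $\distributionlbonesidedcdfs(p)\cdot(b-p)$.

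\emph{Offers in $[0,p^{\ast}]$.} In this regime $\distributionlbonesidedcdfs(p)=\frac{b/e}{b-p+\varepsilon}$, so I would write the expected profit as
$$\frac{b}{e}\cdot\frac{b-p}{b-p+\varepsilon}=\frac{b}{e}\left(1-\frac{\varepsilon}{b-p+\varepsilon}\right).$$
On this interval $b-p+\varepsilon>0$ and it strictly decreases as $p$ grows, so (using $\varepsilon>0$) the whole expression is strictly decreasing in $p$. This settles the lemma whenever both $p$ and $p'$ lie in $[0,p^{\ast}]$. The argument is insensitive to whether $b\in[1,k]$ or $b=k+1$, so the atom at $k+1$ needs no separate treatment.

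\emph{Offers $p'>p^{\ast}$.} Since the seller's value never exceeds $p^{\ast}$, here $\distributionlbonesidedcdfs(p')=1$, hence the buyer's expected profit is $b-p'<b-p^{\ast}=b/e-\varepsilon$. Plugging $p=p^{\ast}$ into the formula from the previous paragraph gives expected profit exactly $\frac{b}{e}\cdot\frac{b/e-\varepsilon}{b/e}=b/e-\varepsilon$; combined with the monotonicity on $[0,p^{\ast}]$, every $p\in[0,p^{\ast}]$ yields expected profit at least $b/e-\varepsilon$. Hence any $p\in[0,p^{\ast}]$ strictly beats any $p'>p^{\ast}$, which together with the first case establishes the claim for all $p'>p$.

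I do not expect a genuine obstacle here; the only point requiring a little care is that the two expressions for the expected profit — the one valid on $[0,p^{\ast}]$ and the value $b-p$ valid beyond the threshold — agree at $p^{\ast}$, which is precisely what keeps the comparison strict even when $p$ equals the threshold.
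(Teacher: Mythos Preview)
Your proof is correct and follows the same core idea as the paper: write the buyer's expected profit explicitly and show it is strictly decreasing in the offer on $[0,p^{\ast}]$. You obtain monotonicity by the algebraic rewriting $\frac{b}{e}\bigl(1-\frac{\varepsilon}{b-p+\varepsilon}\bigr)$, whereas the paper differentiates $g(p)=(b-p)\cdot\frac{b}{e(b-p+\varepsilon)}$ and observes $g'(p)=\frac{-b\varepsilon}{e(b-p+\varepsilon)^2}<0$; these are equivalent. One genuine addition in your argument: you explicitly treat offers $p'>p^{\ast}$ (where the seller's conditional cdf has already reached $1$), checking that the two profit formulas agree at $p^{\ast}$ and hence that any such $p'$ is strictly dominated by every $p\in[0,p^{\ast}]$. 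The paper's proof restricts attention to $p'\in[0,p^{\ast}]$ and leaves this edge case implicit, so your version is actually the more complete of the two.
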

    \begin{proof}[Proof of Lemma~\ref{lemma-buyer-prefers-lower-prices}]
    
    Let $b\in [1,k]\cup \{ k+1 \}$. The conditional cumulative distribution of the seller's value is $\distributionlbonesidedcdfs$. Now, for every $p \in  [0,b\cdot \frac{e-1}{e}+ \varepsilon]$, the expected profit of the buyer from a take-it-or-leave-it offer of price $p$ to the seller is $(b-p)\cdot \frac{b}{e(b-p+\varepsilon)}$. We define the function $g(p) = (b-p)\cdot \frac{b}{e(b-p+\varepsilon)}$ for every $p \in [0,b\cdot \frac{e-1}{e}+ \varepsilon]$, and show that it is a strictly decreasing function. By definition, the function $g(p)$ is the expected profit of the buyer from a take-it-or-leave-it offer of price $p$ to the seller. Intuitively, if $\distributionlbonesidedcdfs$ was exactly the equal profit distribution, i.e., $\distributionlbonesidedcdfs (p) = \frac{b}{e(b-p)}$ for every $p\in  [0,b\cdot \frac{e-1}{e}+ \varepsilon]$, then $g$ was a constant function with value $\frac{b}{e}$. However, since $\distributionlbonesidedcdfs$ is a slightly skewed equal profit distribution, lower prices yield strictly higher profits, and so $g$ is a strictly decreasing function. Formally, $g(p)$ is a decreasing function since its derivative is negative for every value $p$ in its range:
   
    $$
    g'(p) = \frac{b}{e}\left(\frac{-\left(b-p+\varepsilon\right)--\left(b-p\right)}{\left(b-p+\varepsilon\right)^2}\right)=\frac{-b\varepsilon}{e\left(b-p+\varepsilon\right)^2}\underbrace{<}_{b>0} 0.
    $$

\end{proof}

\subsubsection{Bounding the Approximation Ratio (Proof of Claim~\ref{one-sided-lb-max-price-at-s-0})}\label{subsec-lb-max-price-at-s-0}
We start with computing the optimal welfare of the distribution $\distributionlbonesided$. Observe that in every instance in the distribution $\distributionlbonesided$, the buyer has a higher value than the seller, thus trade occurs in the optimal allocation rule. The optimal welfare is therefore:
\begin{align*}
     \left(k+1\right)\left(\frac{1}{k+\varepsilon}+ \frac{\varepsilon}{1+ \varepsilon}\right) + \int_{1}^k \frac{1}{\left(b+\varepsilon\right)^2}\cdot b \d b & = \left(k+1\right)\left(\frac{1}{k+\varepsilon}+ \frac{\varepsilon}{1+ \varepsilon}\right) + \eval{\frac{\varepsilon}{\varepsilon + b} +\lnn{b+\varepsilon}}_1^k \\
    & = \left(k+1\right)\left(\frac{1}{k+\varepsilon}+ \frac{\varepsilon}{1+ \varepsilon}\right) + 
    \varepsilon\left(\frac{1}{k+\varepsilon}- \frac{1}{1+ \varepsilon}\right) + \lnn{\frac{k+\varepsilon}{1+\varepsilon}}\\
    & = \left(\frac{k+1+\varepsilon}{k+\varepsilon}+ \frac{k\varepsilon}{1+ \varepsilon}\right) + \lnn{\frac{k+\varepsilon}{1+\varepsilon}}.
\end{align*}
This expression approaches $\ln k + 1+\frac{1}{k}$ as $\varepsilon$ approaches $0$.

We consider the possible values of the buyer's value and compute the contribution of each value to the expected welfare when the item is traded according to $x_p$. When $b=k+1$, $x_p$ always sells the item (as does the optimal allocation rule). This  contributes $(k+1)\left(\frac{1}{k+\varepsilon}+ \frac{\varepsilon}{1+ \varepsilon}\right)$ to the expected welfare.

For every $b \in [1,k]$, with probability $\frac{b}{e(b+\varepsilon)}$, the seller's value is $0$. Then, according to $x_p$, the item is traded only when $b\geq p$. These instances contribute $\int_p^k f^{k, \varepsilon}_b(b) \cdot f_s^{k, \varepsilon}(0) \cdot b \d b$ to the expected welfare.
When  $b \in [1,k]$, and the seller's value is larger than $0$, the item is traded only when $s \leq p$ according to $x_p$. These instances contribute  

$\int_{1}^k f^{k, \varepsilon}_b(b) [(b\cdot (\distributionlbonesidedcdfs(p)- \distributionlbonesidedcdfs(0)) + \int_{p}^{b\cdot \frac{e-1}{e}} s\cdot \distributionlbonesidedpdfs(s)  \d{s})] \d{b}$ 
to the expected welfare.  Thus, the expected welfare is the sum of these three expressions:

\begin{align}
    & (k+1)\left(\frac{1}{k+\varepsilon}+ \frac{\varepsilon}{1+ \varepsilon}\right) \label{first-expression-x-p-lemma}\\
     & \int_p^k f^{k, \varepsilon}_b(b) \cdot f_s^{k, \varepsilon}(0) \cdot b \d b \label{second-expression-x-p-lemma}\\
     & \int_{1}^k f^{k, \varepsilon}_b(b) \left(b\cdot \left(\distributionlbonesidedcdfs(p)- \distributionlbonesidedcdfs(0)\right) + \int_{p}^{b\cdot \frac{e-1}{e}} s\cdot \distributionlbonesidedpdfs(s)  \d{s}\right) \d{b} \label{third-expression-x-p-lemma}
\end{align}

Next, we consider each expression separately, bound it from above, and take its limit as $\varepsilon$ goes to $0$. We start with the first expression~\eqref{first-expression-x-p-lemma}:

\begin{equation*}
    \lim_{\varepsilon \to 0^+} (k+1)\left(\frac{1}{k+\varepsilon}+ \frac{\varepsilon}{1+ \varepsilon}\right) = 1+\frac{1}{k}.
\end{equation*}

We continue with the second expression~\eqref{second-expression-x-p-lemma}. By the dominated convergence theorem, we can swap the order of the integral and the limit operator\footnote{\label{dominated-conv-thm-footnote}All the expressions we consider are bounded (for example, by expected welfare of the distribution, which is at most $\frac{2}{k+2} + \lnn{k+1}$). In addition, the sequence of functions $f_{\varepsilon}$ converges point-wise to the function $f_0$, for every $f$ that we consider.}.

\begin{align*}
     \lim_{\varepsilon \to 0^+} \int_p^k f^{k, \varepsilon}_b(b) \cdot f_s^{k, \varepsilon}(0) \cdot b \d b & =\lim_{\varepsilon \to 0^+} \int_p^k \frac{1}{(b+\varepsilon)^2}\cdot \frac{b}{e\cdot (b+\varepsilon)} \cdot b \d b = \lim_{\varepsilon \to 0^+} \int_p^k \frac{b^2}{e\cdot (b+\varepsilon)^3} \d b \\
     & = \int_p^k \frac{b^2}{eb^3} \d b = \frac{1}{e}\lnn{\frac{k}{p}}.\\
\end{align*}

Finally, for the third expression~\eqref{third-expression-x-p-lemma}, we first break it into three expressions:
\begin{align*}
    & \int_{1}^k f^{k, \varepsilon}_b(b) \left(b\cdot \left(\distributionlbonesidedcdfs(p)- \distributionlbonesidedcdfs(0)\right) + \int_{p}^{b\cdot \frac{e-1}{e}} s\cdot \distributionlbonesidedpdfs(s)  \d{s}\right) \d{b} = \\
    & \int_{1}^{\left(p-\varepsilon\right)\cdot \frac{e}{e-1}}\left(1-\frac{b}{e(b+\varepsilon)}\right)\frac{b}{(b+\varepsilon)^2} \d{b}
     +  \int_{(p-\varepsilon)\cdot \frac{e}{e-1}}^{k} \frac{1}{(b+\varepsilon)^2}b\left(\frac{b}{e(b-p+\varepsilon)}-\frac{b}{e(b+\varepsilon)}\right) \d b \\
     + &  \int_{(p-\varepsilon)\cdot \frac{e}{e-1}}^{k} \frac{1}{(b+\varepsilon)^2}\int_{p}^{b\cdot \frac{e-1}{e}+\varepsilon}s\cdot \frac{b}{e(b+\varepsilon -s)^2}  \d{s}] \d{b}.\\
\end{align*}

Then, we bound each expression and take its limit as $\varepsilon$ approaches $0$. Again, by the dominated convergence theorem, we can swap the order of the integral and the limit operator$^{\ref{dominated-conv-thm-footnote}}$. 
\begin{align*}
    & \lim_{\varepsilon \to 0^+}\int_{1}^{(p-\varepsilon)\cdot \frac{e}{e-1}}\left(1-\frac{b}{e(b+\varepsilon)}\right)\frac{b}{(b+\varepsilon)^2} \d{b} = \left(1-\frac{1}{e}\right)\lnn{\frac{pe}{e-1}}.\\ 
    & \lim_{\varepsilon \to 0^+} \int_{(p-\varepsilon)\cdot \frac{e}{e-1}}^{k} 
    \frac{1}{(b+\varepsilon)^2}b\left(\frac{b}{e(b-p+\varepsilon)}- \frac{b}{e(b+\varepsilon)}\right) \d b \begin{aligned}[t] & = \int_{p\cdot \frac{e}{e-1}}^{k} 
    \frac{1}{b^2}b\left(\frac{b}{e(b-p)}- \frac{b}{eb}\right) \d b \\
    & = \frac{1}{e}\left(\lnn{\frac{k-p}{\frac{p}{e-1}}} -\lnn{\frac{k}{p\frac{e}{e-1}}}\right) \\
    & = \frac{1}{e}\left(\lnn{k-p} +\lnn{e-1} -\ln k + \lnn{\frac{e}{e-1}}\right). \\
    \end{aligned} \\
    & \lim_{\varepsilon \to 0^+}  \int_{(p-\varepsilon)\cdot \frac{e}{e-1}}^k  \begin{aligned}[t] & 
 \frac{1}{(b+\varepsilon)^2} \int_{p}^{b\cdot \frac{e-1}{e}+\varepsilon}s\cdot \frac{b}{e(b+\varepsilon -s)^2}  \d{s} \d b \\
    & \leq  \lim_{\varepsilon \to 0^+}  \int_{(p-\varepsilon)\cdot \frac{e}{e-1}}^k \frac{1}{(b+\varepsilon)^2} \int_{0}^{b\cdot \frac{e-1}{e}+\varepsilon}s\cdot \frac{b}{e(b+\varepsilon -s)^2}  \d{s} \d b \\
    & = \int_{p\cdot \frac{e}{e-1}}^k \frac{1}{b^2} \int_{0}^{b\cdot \frac{e-1}{e}}s\cdot \frac{b}{e(b-s)^2}  \d{s} \d b \\
    & = \frac{1}{e}\int_{p\cdot \frac{e}{e-1}}^k \frac{1}{b} [\eval{\frac{b}{b-s} + \lnn{b-s}]}_0^{b\cdot \frac{e-1}{e}}\\
    &= \frac{1}{e}\int_{p\cdot \frac{e}{e-1}}^k \frac{1}{b}\left(e-1 +\lnn{\frac{1}{e}}\right) \d b  = \frac{e-2}{e}\lnn{\frac{k}{\frac{pe}{e-1}}}.\\
    \end{aligned} 
\end{align*}

Now, by summing all the expressions together, we get that as $\varepsilon$ approaches $0$, the welfare of $x_p$ approaches:
\begin{multline*}
    1+\frac{1}{k} + \frac{1}{e}\lnn{\frac{k}{p}}+ \left(1-\frac{1}{e}\right)\lnn{\frac{ep}{e-1}} + \frac{1}{e}\left(\lnn{k-p}-\ln k + \lnn{e-1} +\lnn{\frac{e}{e-1}}\right)  \\
    + \left(1-\frac{2}{e}\right)\left(\ln k -\ln p -\lnn{\frac{e}{e-1}}\right) = 1+\frac{1}{k} + \ln k (1-\frac{2}{e}) +\frac{1}{e} \lnn{k-p}+ \ln p \left(-\frac{1}{e}+ 1-\frac{1}{e}-1+\frac{2}{e}\right) \\
     + \lnn{\frac{e}{e-1}}\left(1-\frac{1}{e}+\frac{1}{e}-1+\frac{2}{e}\right)+\frac{\lnn{e-1}}{e} \underbrace{\leq}_{p \geq 0} 1+\frac{1}{k} + (1-\frac{1}{e})\ln k + \frac{2}{e}\lnn{\frac{e}{e-1}}+\frac{\lnn{e-1}}{e} \\
    \leq 1+\frac{1}{k}+ 0.6 + (1-\frac{1}{e})\ln k.\\
\end{multline*}

Thus, as $\varepsilon$ approaches $0$, the approximation ratio approaches $\frac{\ln k + 1+\frac{1}{k}}{(1-\frac{1}{e})\ln k +0.6+ 1+\frac{1}{k}}$. As $k$ approaches $\infty$, the upper bound approaches $\frac{e}{e-1}$.

\section{Impossibilities for Bayesian Mechanisms}\label{sec-impossibilities-bayesian}

In this section, we prove several impossibility results for Bayesian incentive-compatible mechanisms. Specifically, in Subsection~\ref{2-2-ind}, we prove a lower bound of $1.113$ for the welfare of independent distributions; in Subsection~\ref{k-k-cor-lb} we prove a lower bound of $1+\frac{\ln 2}{2}$ for the welfare of correlated distributions, and in Subsection~\ref{gft-section} we prove a lower bound of $2$ for the gains from trade of independent distributions.

For the welfare lower bounds (Subsection~\ref{2-2-ind}, and Subsection~\ref{k-k-cor-lb}), the approximation ratio is better than the ratio of $ 1.071$ that is obtained by Blumrosen and Mizrahi \cite{BM}.
In the gains from trade lower bound (Subsection~\ref{gft-section}), the approximation ratio is better than the ratio of $\frac{e}{2} \approx 1.359$, also in \cite{BM}. However, our results only apply to deterministic Bayesian mechanisms, whereas their results apply to randomized ones. We note that all major mechanisms considered in the literature (e.g., fixed price auction, buyer and seller offering mechanisms) are deterministic\footnote{Our impossibilities also apply to mechanisms which are a probability distribution over deterministic ex-post individually rational mechanisms, like the random offerer mechanism \cite{GDTCA}. To see this, consider such a mechanism $A$. A simple averaging argument shows that in the support of $A$, there must be a (deterministic and ex-post individually rational) mechanism $A'$ that its approximation ratio is at least the approximation ratio of $A$. Our impossibilities directly apply to $A'$, and the approximation ratio of $A$ is no better than the approximation ratio of $A'$.}.

In the proofs, we rely on representing joint distributions and allocation rules in tables notation. Thus, this section starts with Subsection~\ref{ill-rules}, which explains this notation. The proofs of Section~\ref{k-k-cor-lb} and Section~\ref{gft-section} share a similar structure. Both rely on $L$-shaped distributions, presented and analyzed in Section~\ref{L-shape-sec}.

\subsection{Preparations: Representing Distributions and Mechanisms}\label{ill-rules}

We consider only discrete distributions in this section (since this section aims to prove impossibilities, this only strengthens our results). We represent a discrete (joint) distribution using tables: each row corresponds to one of the possible values of the buyer, and each column corresponds to one of the possible values of the seller. Each cell $(s,b)$ corresponds to an instance in the support of the distribution. The value in the cell $(s,b)$ is the probability the instance $(s,b)$ is realized. We represent an allocation rule for the distribution by adding $*$ to a cell $(s,b)$ if the item is traded in the instance $(s,b)$. 
%see Example~\ref{2*2-ex} for a concrete example of both depictions of a distribution and an allocation rule. 

\begin{example}\label{2*2-ex}
An illustration of a distribution where both the seller and the buyer have support of size 2.
% Consider a distribution $\mathcal{F}_2$ where the buyer's support is $b_1, b_2$ and the seller's support is $s_1, s_2$ and $b_2>s_2>b_1>s_1$. The probability that the buyer has value $b_1$ and the seller has value $s_1$ is $\prbuyervalue{1}\cdot q_1$, the probability that $b=b_1$ and $s=s_2$ is $\prbuyervalue{1}\cdot q_2$, the probability that $b=b_2$ and $s=s_1$ is $\prbuyervalue{2}\cdot q_1$, and the probability that $b=b_2$ and $s=s_2$ is $\prbuyervalue{2}\cdot q_2$. 

\begin{figure}[H]
    \centering
\renewcommand{\arraystretch}{1.4}
\begin{NiceTabularX}{6cm}{{@{}YYY@{}}}[hvlines]
% \CodeBefore
%   \begin{tikzpicture}
%     \fill [\buyercolor] (1-|1) |- (2-|2) -- cycle ;
%     \fill [\sellercolor] (1-|1) -| (2-|2) -- cycle ;
%     \cellcolor{\sellercolor}{1-2,1-3}
%     \cellcolor{\buyercolor}{2-1,3-1}
%   \end{tikzpicture}
% \Body
\diagbox{ buyer}{seller} &$s_1$  & $s_2$  \\
$b_1$ &  $\prbuyervalue{1}^*$ & $\prbuyervalue{2}$\\
$b_2$ & $\prbuyervalue{3}^*$ & $\prbuyervalue{4}^*$ \\
\end{NiceTabularX}
\caption{An illustration of a distribution and an allocation function. Here, the buyer has two possible values $b_1 < b_2$, and the seller also has two possible values $s_1<s_2$. The probability of the instance $(s_1, b_1)$ is $\prbuyervalue{1}$, the probability of the instance $(s_2, b_1)$ is $\prbuyervalue{2}$, the probability of the instance $( s_1, b_2)$ is $\prbuyervalue{3}$, and the probability of the instance $(s_2, b_2)$ is $\prbuyervalue{4}$ ($\prbuyervalue{1}+\prbuyervalue{2}+\prbuyervalue{2}+\prbuyervalue{4} =1$). The item is traded in all instances except $(s_2, b_1)$, as is indicated by the symbol $*$ added to a cell if the item is traded in the instance that corresponds to that cell.
}
\label{2*2-fig}
\end{figure}
\end{example}

\subsection{Warm-Up: Impossibilities for Welfare Approximation via $2\times 2$ 
Distributions}\label{2-2-ind}

% \section{Impossibility for Welfare Approximation via $2\times 2$ 
% Distributions}\label{2-2-ind}

% \note{Maybe it is not as much as a warm-up now that we have the L-shaped distribution? Maybe we want to move it to the appendix so that the other lower bounds will be before the 18-th page}

We now prove an impossibility result of Bayesian incentive-compatible mechanisms for independent distribution, achieved by distributions where each player has only two possible values.

% We now prove two impossibility results for Bayesian incentive-compatible mechanisms. The first impossibility is for approximating the welfare with independent distributions and the second is for approximating the gains from trade with a correlated distribution. Both are achieved by distributions where each player has only two possible values.

% \subsubsection{Approximating the Welfare when the Distributions are Independent}\label{2-2-ind}

Consider the family of distributions depicted in Figure~\ref{ind:all-allocations}. In every distribution that belongs to the family, the values of the buyer and the seller are independent. The buyer's value is $1$ with probability $\prbuyervalue{1}$ and is $b_2$ with probability $\prbuyervalue{2} = 1- \prbuyervalue{1}$. The seller's value is $0$ with probability $q_1$ and is $s_2$ with probability $q_2 = 1- q_1$. We have that $b_2>s_2>1>0$. The welfare maximizing allocation function of such distributions is depicted in Figure~\ref{ind:opt-allocation}. Every such distribution is called \emph{simple}.

\begin{theorem}\label{ind:2-2-lb-thm}
    There is a simple distribution $\mathcal F$ such that no Bayesian incentive compatible mechanism provides an approximation ratio better than $1.113$ for the distribution $\mathcal{F}$.
\end{theorem}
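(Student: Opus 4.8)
The plan is to analyze the family of simple distributions (Figure~\ref{ind:all-allocations}) and show that for a suitable choice of the parameters $\prbuyervalue{1}, \prbuyervalue{2}, q_1, q_2, b_2, s_2$ every Bayesian incentive-compatible, ex-post individually rational deterministic mechanism achieves at most a $1/1.113$ fraction of $\welfare{OPT}{\mathcal F}$. Since each player has only two values, a deterministic mechanism is fully described by which of the four cells $(s,b)$ are traded (the $*$-pattern of Figure~\ref{2*2-fig}) together with the prices in the traded cells. So the first step is to enumerate the $2^4 = 16$ possible allocation patterns and eliminate those that are never compatible with ex-post IR or that cannot possibly beat the target ratio. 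The cell $(0,b_2)$ should always be traded (trading there is free welfare and imposing IR there is trivial since $0 \le p \le b_2$), and the cell $(s_2,1)$ should never be traded in an efficient-looking mechanism (here $b=1<s_2$, so trade is welfare-losing and moreover IR forces $1 \ge p \ge s_2$, impossible); this already cuts the list drastically. The remaining freedom is whether to trade in $(0,1)$ and in $(s_2,b_2)$, giving essentially a small number of ``shapes'' of allocation rules to consider.

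Next, for each surviving shape I would impose the Bayesian IC constraints for the buyer and for the seller (from the definitions in Section~\ref{sec-prelim}). Because the distributions are independent, the interim trade probabilities and interim expected prices factor nicely, and the IC constraints reduce to a handful of linear inequalities relating the prices in the traded cells to the parameters. The key qualitative facts I expect to extract: (i) if the mechanism trades in $(0,b_2)$ and $(s_2,b_2)$ but not in $(0,1)$, then buyer-IC for the low buyer together with ex-post IR pins the price in the $s_2$-column to be close to $s_2$, limiting how much surplus trade there generates; (ii) if the mechanism trades in $(0,1)$ as well, then seller-IC constrains the price offered to the $s=0$ seller, and this forces losing the trade in some high-value cell. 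In each case the surviving welfare is bounded by some explicit function of the parameters that is strictly below $\welfare{OPT}{\mathcal F}$.

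Then comes the optimization over the instance: I would pick the parameters to make the best achievable mechanism-welfare-to-$\welfare{OPT}$ ratio as large (bad) as possible, i.e. solve the max-min. Concretely I expect the extremal instance to balance two competing mechanisms — one that sacrifices the $(0,1)$ trade and one that sacrifices the $(s_2,b_2)$ trade — so that both give the same fraction of optimal welfare; setting those two expressions equal and optimizing over $b_2$ and $s_2$ (with $\prbuyervalue{1},q_1$ chosen accordingly) should yield the bound $1.113$. This is the step I expect to be the main obstacle: not any single calculation, but the bookkeeping of making sure \emph{every} IC-and-IR-feasible allocation pattern, not just the two ``natural'' ones, is dominated by the bound at the chosen parameters — a stray pattern (e.g. trading only in $(0,1)$ and $(s_2,b_2)$, or some degenerate mechanism that trades almost nothing) could in principle do better and must be checked. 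Once all $16$ patterns are ruled out uniformly, the theorem follows by taking $\mathcal F$ to be the optimized simple distribution.
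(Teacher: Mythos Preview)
Your proposal is essentially the paper's approach: enumerate the $2^3$ allocation patterns (ruling out trade at $(s_2,1)$ by IR), show via the BIC inequalities that the welfare-optimal pattern and the ``diagonal'' pattern are infeasible under a suitable parameter condition, and then balance the two surviving fixed-price-style patterns (Figures~\ref{ind:col-allocation} and~\ref{ind:row-allocation}) by choosing parameters so their welfares coincide and the resulting ratio is maximized. One caution: your early claim that the cell $(0,b_2)$ ``should always be traded'' is not justified as stated---adding a trade there can break IC even though IR is trivial, and indeed the diagonal pattern trading only at $(0,1)$ and $(s_2,b_2)$ is exactly the stray case you flag later and must be eliminated by a separate BIC argument (the paper's Lemma~\ref{ind-no-diag-allocation-cond}), not by a welfare-dominance shortcut.
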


Towards proving this theorem, fix a simple distribution $\mathcal F$. We now consider the possible allocation functions for $\mathcal{F}$.
Since $\mathcal{F}$ has $4$ instances in its support, there are $2^4 = 16$ possible allocation functions. However, in one of these instances $(s_2, b_1)$, trade cannot occur since $b_1 < s_2$, resulting in $2^3=8$ possible allocation functions.

\begin{figure}[H]
     \centering
     % \hspace{1em}
     \begin{subfigure}[]{0.4\textwidth}
         \centering
         \renewcommand{\arraystretch}{1.4}
         \begin{NiceTabularX}{6cm}{YYY}[hvlines]
    % \CodeBefore
    % \begin{tikzpicture}
    %     \fill [\buyercolor] (1-|1) |- (2-|2) -- cycle ;
    %     \fill [\sellercolor] (1-|1) -| (2-|2) -- cycle ;
    %     \cellcolor{\sellercolor}{1-2,1-3}
    %     \cellcolor{\buyercolor}{2-1,3-1}
    % \end{tikzpicture}
    % \Body
    \diagbox{ buyer}{seller} &$0$  & $s_2$  \\
    $1$ &  ${\prbuyervalue{1}\cdot q_1}^*$ & ${\prbuyervalue{1}\cdot q_2}$\\
    $b_2$ & ${\prbuyervalue{2}\cdot q_1}^*$ & ${\prbuyervalue{2}\cdot q_2^*}$ \\
    \end{NiceTabularX}
         \caption{}
         \label{ind:opt-allocation}
     \end{subfigure}
     \hspace{1em}
     \vspace{1em}
     \begin{subfigure}[]{0.4\textwidth}
         \centering
         \renewcommand{\arraystretch}{1.4}
         \begin{NiceTabularX}{6cm}{YYY}[hvlines]
    % \CodeBefore
    % \begin{tikzpicture}
    %     \fill [\buyercolor] (1-|1) |- (2-|2) -- cycle ;
    %     \fill [\sellercolor] (1-|1) -| (2-|2) -- cycle ;
    %     \cellcolor{\sellercolor}{1-2,1-3}
    %     \cellcolor{\buyercolor}{2-1,3-1}
    % \end{tikzpicture}
    % \Body
    \diagbox{ buyer}{seller} &$0$  & $s_2$  \\
    $1$ &  ${\prbuyervalue{1}\cdot q_1}^*$ & ${\prbuyervalue{1}\cdot q_2}$\\
    $b_2$ & ${\prbuyervalue{2}\cdot q_1}$ & ${\prbuyervalue{2}\cdot q_2}^*$ \\
    \end{NiceTabularX}
         \caption{}
         \label{ind:diag-allocation}
     \end{subfigure}
     % \hspace{1em}
     \begin{subfigure}[]{0.4\textwidth}
         \centering
         \renewcommand{\arraystretch}{1.4}
         \begin{NiceTabularX}{6cm}{YYY}[hvlines]
    % \CodeBefore
    % \begin{tikzpicture}
    %     \fill [\buyercolor] (1-|1) |- (2-|2) -- cycle ;
    %     \fill [\sellercolor] (1-|1) -| (2-|2) -- cycle ;
    %     \cellcolor{\sellercolor}{1-2,1-3}
    %     \cellcolor{\buyercolor}{2-1,3-1}
    % \end{tikzpicture}
    % \Body
    \diagbox{ buyer}{seller} &$0$  & $s_2$  \\
    $1$ &  ${\prbuyervalue{1}\cdot q_1}^*$ & ${\prbuyervalue{1}\cdot q_2}$\\
    $b_2$ & ${\prbuyervalue{2}\cdot q_1}^*$ & ${\prbuyervalue{2}\cdot q_2}$ \\
    \end{NiceTabularX}
         \caption{}
         \label{ind:col-allocation}
     \end{subfigure}
     \hspace{1em}
     \begin{subfigure}[]{0.4\textwidth}
         \centering
         \renewcommand{\arraystretch}{1.4}
        \begin{NiceTabularX}{6cm}{YYY}[hvlines]
    % \CodeBefore
    % \begin{tikzpicture}
    %     \fill [\buyercolor] (1-|1) |- (2-|2) -- cycle ;
    %     \fill [\sellercolor] (1-|1) -| (2-|2) -- cycle ;
    %     \cellcolor{\sellercolor}{1-2,1-3}
    %     \cellcolor{\buyercolor}{2-1,3-1}
    % \end{tikzpicture}
    % \Body
    \diagbox{ buyer}{seller} &$0$  & $s_2$  \\
    $1$ &  ${\prbuyervalue{1}\cdot q_1}$ & ${\prbuyervalue{1}\cdot q_2}$\\
    $b_2$ & ${\prbuyervalue{2}\cdot q_1}^*$ & ${\prbuyervalue{2}\cdot q_2}^*$ \\
    \end{NiceTabularX}
         \caption{}
         \label{ind:row-allocation}
     \end{subfigure}     
        \caption{All possible allocation functions of $\mathcal{F}$ that trade the item in at least two instances.}
        \label{ind:all-allocations}
\end{figure}

Observe that any of the four allocation functions that trade the item in at most one instance will result in lower welfare compared to at least one of the allocation functions shown in Figures \ref{ind:col-allocation} and \ref{ind:row-allocation}. Specifically, the allocation functions in figures \ref{ind:1-alloc-1} and \ref{ind:1-alloc-2} have lower welfare than the allocation in Figure~\ref{ind:col-allocation}, and the allocation functions in figures \ref{ind:1-alloc-3} and \ref{ind:1-alloc-0} have lower welfare than the allocation in Figure~\ref{ind:row-allocation}. Therefore, it is sufficient to obtain a lower bound on the approximation ratio of the allocation functions shown in Figures \ref{ind:col-allocation} and \ref{ind:row-allocation} to obtain a lower bound on the approximation of all the allocation functions shown in Figure~\ref{ind:all-allocations-at-most-1}.

Hence, from now on, we consider only the four allocation functions depicted in Figure~\ref{ind:all-allocations}, as the welfare of any other implementable allocation function for $\mathcal{F}$ is at most the welfare of one of those four functions. 

% This leaves us with two more allocation functions we need to consider: the welfare maximizing allocation function in figure ~\ref{ind:opt-allocation}, and the allocation in figure ~\ref{ind:diag-allocation}.

We choose the values of $b_2, s_2, \prbuyervalue{1}, \prbuyervalue{2}, q_1, q_2$ so that the welfare maximizing allocation function, depicted in Figure~\ref{ind:opt-allocation}, is not implementable by a Bayesian incentive compatible mechanism (Lemma~\ref{ind-no-opt-cond}). Our choice of values also rules out the existence of a Bayesian implementation of the allocation rule in Figure~\ref{ind:diag-allocation} (Lemma~\ref{ind-no-diag-allocation-cond}). Note that the allocation rules in which the item is traded only when $b=b_2$ (depicted in Figure~\ref{ind:row-allocation}) or only when $s=0$ (depicted in Figure~\ref{ind:col-allocation}) are always implementable. In fact, they are implementable by a fixed price mechanism: the former by fixed price of $b_2$ (or $s_2$), and the latter by fixed price of $0$ (or $1$).

Hence, we are left with the allocation functions depicted in Figure~\ref{ind:col-allocation}, Figure~\ref{ind:row-allocation}. We choose the values of the parameters such that the welfare of these two allocation functions will be equal (Lemma~\ref{ind-same-welfare}) and as low as possible. Together, we will get that for the distribution $\mathcal{F}$, there is no Bayesian incentive compatible mechanism that provides an approximation ratio better than $1.113$ (Subsection \ref{subsec-ind:2-2-lb-thm}). 

\begin{figure}[H]
     \centering
     % \hspace{1em}
     \begin{subfigure}[]{0.4\textwidth}
         \centering
         \renewcommand{\arraystretch}{1.4}
         \begin{NiceTabularX}{6cm}{YYY}[hvlines]
    % \CodeBefore
    % \begin{tikzpicture}
    %     \fill [\buyercolor] (1-|1) |- (2-|2) -- cycle ;
    %     \fill [\sellercolor] (1-|1) -| (2-|2) -- cycle ;
    %     \cellcolor{\sellercolor}{1-2,1-3}
    %     \cellcolor{\buyercolor}{2-1,3-1}
    % \end{tikzpicture}
    % \Body
    \diagbox{ buyer}{seller} &$0$  & $s_2$  \\
    $1$ &  ${\prbuyervalue{1}\cdot q_1}^*$ & ${\prbuyervalue{1}\cdot q_2}$\\
    $b_2$ & ${\prbuyervalue{2}\cdot q_1}$ & ${\prbuyervalue{2}\cdot q_2}$ \\
    \end{NiceTabularX}
         \caption{}
         \label{ind:1-alloc-1}
     \end{subfigure}
     \hspace{1em}
     \vspace{1em}
     \begin{subfigure}[]{0.4\textwidth}
         \centering
         \renewcommand{\arraystretch}{1.4}
         \begin{NiceTabularX}{6cm}{YYY}[hvlines]
    % \CodeBefore
    % \begin{tikzpicture}
    %     \fill [\buyercolor] (1-|1) |- (2-|2) -- cycle ;
    %     \fill [\sellercolor] (1-|1) -| (2-|2) -- cycle ;
    %     \cellcolor{\sellercolor}{1-2,1-3}
    %     \cellcolor{\buyercolor}{2-1,3-1}
    % \end{tikzpicture}
    % \Body
    \diagbox{ buyer}{seller} &$0$  & $s_2$  \\
    $1$ &  ${\prbuyervalue{1}\cdot q_1}$ & ${\prbuyervalue{1}\cdot q_2}$\\
    $b_2$ & ${\prbuyervalue{2}\cdot q_1}^*$ & ${\prbuyervalue{2}\cdot q_2}$ \\
    \end{NiceTabularX}
         \caption{}
         \label{ind:1-alloc-2}
     \end{subfigure}
     % \hspace{1em}
     \begin{subfigure}[]{0.4\textwidth}
         \centering
         \renewcommand{\arraystretch}{1.4}
         \begin{NiceTabularX}{6cm}{YYY}[hvlines]
    % \CodeBefore
    % \begin{tikzpicture}
    %     \fill [\buyercolor] (1-|1) |- (2-|2) -- cycle ;
    %     \fill [\sellercolor] (1-|1) -| (2-|2) -- cycle ;
    %     \cellcolor{\sellercolor}{1-2,1-3}
    %     \cellcolor{\buyercolor}{2-1,3-1}
    % \end{tikzpicture}
    % \Body
    \diagbox{ buyer}{seller} &$0$  & $s_2$  \\
    $1$ &  ${\prbuyervalue{1}\cdot q_1}$ & ${\prbuyervalue{1}\cdot q_2}$\\
    $b_2$ & ${\prbuyervalue{2}\cdot q_1}$ & ${\prbuyervalue{2}\cdot q_2}^*$ \\
    \end{NiceTabularX}
         \caption{}
         \label{ind:1-alloc-3}
     \end{subfigure}
     \hspace{1em}
     \begin{subfigure}[]{0.4\textwidth}
         \centering
         \renewcommand{\arraystretch}{1.4}
        \begin{NiceTabularX}{6cm}{YYY}[hvlines]
    % \CodeBefore
    % \begin{tikzpicture}
    %     \fill [\buyercolor] (1-|1) |- (2-|2) -- cycle ;
    %     \fill [\sellercolor] (1-|1) -| (2-|2) -- cycle ;
    %     \cellcolor{\sellercolor}{1-2,1-3}
    %     \cellcolor{\buyercolor}{2-1,3-1}
    % \end{tikzpicture}
    % \Body
    \diagbox{ buyer}{seller} &$0$  & $s_2$  \\
    $1$ &  ${\prbuyervalue{1}\cdot q_1}$ & ${\prbuyervalue{1}\cdot q_2}$\\
    $b_2$ & ${\prbuyervalue{2}\cdot q_1}$ & ${\prbuyervalue{2}\cdot q_2}$ \\
    \end{NiceTabularX}
         \caption{}
         \label{ind:1-alloc-0}
     \end{subfigure}     
        \caption{All allocation functions of $\mathcal{F}$ that trade the item in at most one instance.}
        \label{ind:all-allocations-at-most-1}
\end{figure}

\begin{lemma} \label{ind-no-opt-cond}
    If $s_2-\frac{\prbuyervalue{1}}{\prbuyervalue{2}}> \frac{q_2}{q_1}(b_2-s_2) +1$, then the welfare-maximizing allocation function (depicted in Figure~\ref{ind:opt-allocation}) is not implementable by a Bayesian incentive compatible mechanism. 
\end{lemma}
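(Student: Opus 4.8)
We must show: if $s_2 - \frac{\prbuyervalue{1}}{\prbuyervalue{2}} > \frac{q_2}{q_1}(b_2 - s_2) + 1$, then the welfare-maximizing allocation (Figure~\ref{ind:opt-allocation}) — trade in all instances except $(s_2, b_1)$ — admits no Bayesian incentive-compatible, ex-post IR implementation.

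Here's a plan:

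The plan is to assume towards a contradiction that some ex-post IR, Bayesian IC mechanism $M = (x,p)$ implements this allocation, and derive contradictory constraints on the payments. Since the allocation is fixed, the relevant payments are $p(0,1)$, $p(0,b_2)$, $p(s_2,b_2)$ (these are the three traded cells); ex-post IR forces $0 \le p(0,1) \le 1$, $0 \le p(0,b_2) \le b_2$, and $s_2 \le p(s_2,b_2) \le b_2$. Because the distribution is independent, the conditional distributions coincide with the marginals, so the buyer's Bayesian IC constraints compare his interim utility at his true value against the interim utility he'd get by reporting the other value, where the expectation over the seller is with weights $q_1$ (for $s=0$) and $q_2$ (for $s=s_2$).

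First I would write down the buyer's two Bayesian IC inequalities. When the buyer has value $b_2$ and reports truthfully, he trades in both seller-states, getting interim utility $q_1(b_2 - p(0,b_2)) + q_2(b_2 - p(s_2,b_2))$; if he deviates to reporting $1$, he trades only when $s=0$ (at price $p(0,1)$), getting $q_1(b_2 - p(0,1))$. IC requires the first to be at least the second. When the buyer has value $1$ and reports truthfully he gets $q_1(1 - p(0,1))$; deviating to $b_2$ he would trade in both states at prices $p(0,b_2), p(s_2,b_2)$, getting $q_1(1 - p(0,b_2)) + q_2(1 - p(s_2,b_2))$ — but note this can be negative, and crucially since the seller has a reject option only via the allocation being fixed, we must be careful: actually in a direct mechanism the buyer reporting $b_2$ while truly valued $1$ does get the item in both states and must pay, so this term is as written. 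IC requires $q_1(1-p(0,1)) \ge q_1(1-p(0,b_2)) + q_2(1-p(s_2,b_2))$. I would also record the seller's Bayesian IC constraints similarly (seller value $s_2$ vs. reporting $0$, and vice versa), giving inequalities in the same payments weighted by $\prbuyervalue{1}, \prbuyervalue{2}$.

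Then the heart of the argument: combine these inequalities to force $p(s_2,b_2)$ to be simultaneously large (from the seller's IC when his value is $s_2$, or from ex-post IR: $p(s_2,b_2) \ge s_2$) and small (from the buyer's IC when his value is $1$ — rearranging gives an upper bound on $q_2(p(s_2,b_2) - 1)$ in terms of $q_1(p(0,b_2) - p(0,1))$, hence in terms of $q_1 \cdot b_2$-ish quantities), and likewise squeeze $p(0,1)$ and $p(0,b_2)$ using the seller's side. I expect that eliminating the payments yields exactly an inequality of the form $s_2 - \frac{\prbuyervalue{1}}{\prbuyervalue{2}} \le \frac{q_2}{q_1}(b_2 - s_2) + 1$, the negation of the hypothesis — contradiction. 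Concretely I would: (i) use ex-post IR to bound $p(0,b_2) \le b_2$, $p(0,1) \ge 0$, $p(s_2,b_2) \ge s_2$; (ii) plug into the buyer's value-$1$ IC inequality; (iii) separately extract from the seller's IC constraints (true value $0$ vs. report $s_2$) a lower bound on how much cheaper state $s=0$ must be for the seller, which translates into $p(0,1), p(0,b_2)$ being bounded below in a way governed by $\frac{\prbuyervalue{1}}{\prbuyervalue{2}}$; (iv) chain the bounds.

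The main obstacle I anticipate is bookkeeping: there are several payments and four IC inequalities (two per player), and it is easy to use the wrong direction or forget that a deviating buyer still receives and pays for the item. The subtle modeling point — that in a direct revelation mechanism the "deviation" utilities are computed with the fixed allocation/payment of the \emph{reported} type, not with an outside option — is where an error would most likely creep in, so I would state it explicitly. A secondary subtlety is checking that the seller's IC constraints are the binding ones for pinning down $p(0,\cdot)$ versus $p(s_2,b_2)$; it may turn out only a subset of the four IC inequalities is needed, and identifying the minimal sufficient subset (likely: buyer-IC at value $1$, plus seller-IC at value $0$) is the real content. Once the right two inequalities are isolated and ex-post IR is substituted, the algebra should collapse directly to the stated threshold.
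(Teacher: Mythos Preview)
Your overall strategy matches the paper's exactly: assume an ex-post IR, Bayesian IC implementation exists, then use one seller IC inequality and one buyer IC inequality together with ex-post IR to squeeze the payment $p(0,b_2)$ between two bounds whose compatibility is precisely the negation of the hypothesis. You are also right that seller-IC at $s=0$ (deviating to $s_2$) is one of the two binding constraints; combined with $p(0,1)\le 1$ and $p(s_2,b_2)\ge s_2$ it gives $p(0,b_2)\ge s_2-\frac{\prbuyervalue{1}}{\prbuyervalue{2}}$, exactly as in the paper.

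The one concrete correction: the binding buyer constraint is buyer-IC at value $b_2$ (deviating to report $1$), \emph{not} at value $1$. Buyer-IC at value $1$ rearranges to $q_1\,p(0,b_2)+q_2\,p(s_2,b_2)\ge q_1\,p(0,1)+q_2$, which is another \emph{lower} bound on the payments and cannot yield the needed upper bound on $p(0,b_2)$. By contrast, buyer-IC at $b_2$ reads
\[
q_1\bigl(b_2-p(0,b_2)\bigr)+q_2\bigl(b_2-p(s_2,b_2)\bigr)\ \ge\ q_1\bigl(b_2-p(0,1)\bigr),
\]
and plugging in $p(s_2,b_2)\ge s_2$ and $p(0,1)\le 1$ (the bounds $p(0,1)\ge 0$ and $p(0,b_2)\le b_2$ from your step (i) are too weak here) gives $p(0,b_2)\le \frac{q_2}{q_1}(b_2-s_2)+1$. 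Chaining the two bounds on $p(0,b_2)$ yields $s_2-\frac{\prbuyervalue{1}}{\prbuyervalue{2}}\le \frac{q_2}{q_1}(b_2-s_2)+1$, contradicting the hypothesis.
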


\begin{proof}[Proof of Lemma~\ref{ind-no-opt-cond}]
Let $M=(x,p)$ be a mechanism for $\mathcal F$ that implements the welfare-maximizing allocation function. Let $p = p(0, b_2)$. The expected profit of the seller when his value is $0$ and he follows his equilibrium strategy is $\prbuyervalue{1} \cdot p(0, 1) + \prbuyervalue{2}\cdot p $. The expected profit of the seller when his value is $0$ and he plays the equilibrium strategy of $s_2$ is $ \prbuyervalue{2}\cdot p(s_2, b_2)$. The mechanism is incentive compatible for the seller, and thus $\prbuyervalue{1} \cdot p(0,1) + \prbuyervalue{2}\cdot p \geq \prbuyervalue{2}\cdot p(s_2, b_2)$ (Inequality (\ref{seller-ic}), for $s=0$ and $s'=s_2$,).
By individual rationality, the price is at most the value of the buyer $p(0,1)\leq 1$. Also, the price is at least the value of the seller $p(s_2, b_2) \geq s_2$. Together, we get:  
\begin{equation}\label{ind-opt-lemma-p-first-constraint}
\begin{split}
     \prbuyervalue{1} + \prbuyervalue{2}\cdot p  \geq \prbuyervalue{1} \cdot p(0,1) + & \prbuyervalue{2}\cdot p  \geq \prbuyervalue{2}\cdot p(s_2, b_2) \geq \prbuyervalue{2}\cdot s_2 \\
    p  & \geq s_2 -\frac{\prbuyervalue{1}}{\prbuyervalue{2}}.
\end{split}
\end{equation}
Similarly, by Bayesian incentive compatibility for the buyer (Inequality (\ref{buyer-ic}) with $b=b_2$ and $b'=1$):

\begin{equation}\label{ind-opt-lemma-p-second-constraint}
\begin{split}
   b_2 -q_1 \cdot p - q_2\cdot s_2 \underbrace{\geq}_{\substack{\text{individual} \\ \text{rationality}}} b_2 -q_1 \cdot p - q_2\cdot p(s_2,b_2) & \geq b_2\cdot q_1 -q_1\cdot p(0, 1) \underbrace{\geq}_{\substack{\text{individual} \\ \text{rationality}}} q_1\cdot b_2 -q_1\\
    \frac{q_2}{q_1}(b_2-s_2) + 1  \geq p.
\end{split}
\end{equation}

Inequalities~\eqref{ind-opt-lemma-p-first-constraint} and ~\eqref{ind-opt-lemma-p-second-constraint} imply that $ s_2 -\frac{\prbuyervalue{1}}{\prbuyervalue{2}}  \leq \frac{q_2}{q_1}(b_2-s_2) + 1$. 

% \begin{align*}
%     s_2 -\frac{\prbuyervalue{1}}{\prbuyervalue{2}}  \leq \frac{q_2}{q_1}(b_2-s_2) + 1
% \end{align*}

% which is a contradiction to our assumption. 

% then using Bayesian incentive compatible constraints we show that there is no value of $p$ such that $x(1,0)=x(b_2, 0)=x(b_2, s_2)=1$ (which is the optimal allocation).

\end{proof}

\begin{lemma} \label{ind-no-diag-allocation-cond}
    If $s_2-\frac{\prbuyervalue{1}}{\prbuyervalue{2}}> \frac{q_2}{q_1}(b_2-s_2) +1 $ then the allocation rule depicted in Figure~\ref{ind:diag-allocation} is not implementable by a Bayesian incentive compatible mechanism. 
\end{lemma}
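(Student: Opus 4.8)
The plan is to argue by contradiction, reusing essentially the seller half of the argument from the proof of Lemma~\ref{ind-no-opt-cond}. Suppose $M=(x,p)$ is a Bayesian incentive compatible, ex-post individually rational mechanism implementing the allocation rule of Figure~\ref{ind:diag-allocation}; that is, the item is traded exactly at the instances $(0,1)$ and $(s_2,b_2)$, and not at $(0,b_2)$ (the cell $(s_2,1)$ is never traded since $1<s_2$).

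First I would write down the seller's incentive constraint for a seller of value $0$ who considers playing the equilibrium strategy of a seller of value $s_2$. Under the realized allocation, a truthful seller of value $0$ trades only against the buyer of value $1$, which happens with probability $\prbuyervalue{1}$, so his interim utility is $\prbuyervalue{1}\cdot p(0,1)$; mimicking $s_2$, he trades only against the buyer of value $b_2$, which happens with probability $\prbuyervalue{2}$, so his interim utility becomes $\prbuyervalue{2}\cdot p(s_2,b_2)$. Seller Bayesian incentive compatibility (Inequality~(\ref{seller-ic}) with $s=0$, $s'=s_2$) thus gives $\prbuyervalue{1}\cdot p(0,1)\ge \prbuyervalue{2}\cdot p(s_2,b_2)$.

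Next I would apply ex-post individual rationality twice: $p(0,1)\le 1$ since the price cannot exceed the buyer's value, and $p(s_2,b_2)\ge s_2$ since the price cannot be below the seller's value. Chaining these with the incentive inequality yields $\prbuyervalue{1}\ge \prbuyervalue{1}\cdot p(0,1)\ge \prbuyervalue{2}\cdot p(s_2,b_2)\ge \prbuyervalue{2}\cdot s_2$, hence $s_2-\frac{\prbuyervalue{1}}{\prbuyervalue{2}}\le 0$. But by hypothesis $s_2-\frac{\prbuyervalue{1}}{\prbuyervalue{2}}> \frac{q_2}{q_1}(b_2-s_2)+1$, and since $b_2>s_2$ and $q_1,q_2>0$ the right-hand side is at least $1>0$; this contradicts $s_2-\frac{\prbuyervalue{1}}{\prbuyervalue{2}}\le 0$, so no such mechanism $M$ exists.

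I do not expect a genuine obstacle here: the argument invokes only the seller's incentive constraint together with ex-post individual rationality — the buyer's constraint (which was the second half of the proof of Lemma~\ref{ind-no-opt-cond}) is not needed, and in fact the hypothesis is considerably stronger than required, since any condition forcing $s_2>\prbuyervalue{1}/\prbuyervalue{2}$ would already give the contradiction. The only delicate point is the bookkeeping of which cells are traded under the allocation of Figure~\ref{ind:diag-allocation}, so that the two interim utilities of a value-$0$ seller are computed from the correct entries; everything else is immediate.
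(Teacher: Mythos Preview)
Your proof is correct, but it takes a genuinely different route from the paper's own argument. The paper invokes the \emph{buyer's} Bayesian incentive constraint: a buyer with value $b_2$ who mimics value $1$ yields, after applying individual rationality, the inequality $\frac{q_2}{q_1}(b_2-s_2)\ge b_2-1$; combining this with the hypothesis gives $s_2-\frac{\prbuyervalue{1}}{\prbuyervalue{2}}>b_2$, contradicting $b_2>s_2$. You instead invoke the \emph{seller's} constraint: a seller with value $0$ who mimics value $s_2$ yields, after individual rationality, $s_2\le \frac{\prbuyervalue{1}}{\prbuyervalue{2}}$, which already contradicts the hypothesis since its right-hand side exceeds $1$. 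Your route is more direct and, as you observe, shows that the stated hypothesis is stronger than necessary --- any assumption forcing $s_2>\frac{\prbuyervalue{1}}{\prbuyervalue{2}}$ would do. The paper's route, by contrast, actually uses the $q$-dependent portion of the hypothesis together with the ordering $b_2>s_2$. Both arguments are valid; they simply exploit different halves of the two-sided incentive structure.
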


\begin{proof}[Proof of Lemma~\ref{ind-no-opt-cond}]
    Let $M=(x,p)$ be a Bayesian incentive compatible mechanism that implements the allocation rule depicted in Figure~\ref{ind:diag-allocation}. Inequality~(\ref{buyer-ic}) with $b=b_2, b'=1$ implies that:

\begin{equation*}\label{ind-diag-lemma-p-second-constraint}
\begin{split}
   b_2\cdot q_2 - q_2\cdot s_2 \underbrace{\geq}_{\substack{\text{individual} \\ \text{rationality}}} b_2\cdot q_2 - q_2\cdot p(s_2,b_2)  \geq & b_2\cdot q_1 -q_1\cdot p(0, 1) \underbrace{\geq}_{\substack{\text{individual} \\ \text{rationality}}} q_1\cdot b_2 -q_1\\
    \frac{q_2}{q_1}(b_2-s_2)  \geq & b_2 -1.
 \end{split}
\end{equation*}

Together with the assumption $s_2-\frac{\prbuyervalue{1}}{\prbuyervalue{2}}> \frac{q_2}{q_1}(b_2-s_2) +1 $
we get that $s_2 -\frac{\prbuyervalue{1}}{\prbuyervalue{2}} \geq b_2$, and this is a contradiction as $b_2> s_2$, and thus the allocation rule  depicted in Figure~\ref{ind:diag-allocation} is not implementable by a Bayesian incentive compatible mechanism. 
\end{proof}

\begin{lemma} \label{ind-same-welfare}
    For $\prbuyervalue{1}\cdot q_1=\prbuyervalue{2}\cdot q_2(b_2-s_2)$, the expected welfare of the allocation rules of Figure~\ref{ind:col-allocation} and in Figure~\ref{ind:row-allocation} are equal.
\end{lemma}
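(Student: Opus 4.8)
The plan is to directly write down the expected welfare of each of the two allocation rules and compare them term by term. Recall the support consists of the four cells $(0,1)$, $(s_2,1)$, $(0,b_2)$, $(s_2,b_2)$ with probabilities $\prbuyervalue{1}q_1$, $\prbuyervalue{1}q_2$, $\prbuyervalue{2}q_1$, $\prbuyervalue{2}q_2$ respectively, and that in every cell where the item is \emph{not} traded the welfare contribution is the seller's value, while in every cell where it \emph{is} traded the contribution is the buyer's value. In the cell $(0,1)$ the welfare is $1$ whether or not trade occurs (the buyer's value is $1$ and the seller's value is $0$, but here I must be careful: if there is no trade the welfare is $s=0$, so trade does matter in that cell). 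Let me instead tabulate carefully: the relevant difference comes only from the cells in which the two allocation rules disagree.

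First I would identify where the two rules differ. The allocation rule of Figure~\ref{ind:col-allocation} trades exactly in the column $s=0$, i.e.\ in cells $(0,1)$ and $(0,b_2)$; the rule of Figure~\ref{ind:row-allocation} trades exactly in the row $b=b_2$, i.e.\ in cells $(0,b_2)$ and $(s_2,b_2)$. They agree on cell $(0,b_2)$ (both trade) and on cell $(s_2,1)$ (neither trades), so the only disagreement is on cells $(0,1)$ and $(s_2,b_2)$. In cell $(0,1)$: rule~\ref{ind:col-allocation} trades, contributing $\prbuyervalue{1}q_1\cdot 1$, whereas rule~\ref{ind:row-allocation} does not trade, contributing $\prbuyervalue{1}q_1\cdot 0$. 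In cell $(s_2,b_2)$: rule~\ref{ind:row-allocation} trades, contributing $\prbuyervalue{2}q_2\cdot b_2$, whereas rule~\ref{ind:col-allocation} does not trade, contributing $\prbuyervalue{2}q_2\cdot s_2$. Hence
\[
\welfare{\ref{ind:row-allocation}}{\mathcal F} - \welfare{\ref{ind:col-allocation}}{\mathcal F}
= \prbuyervalue{2}q_2(b_2-s_2) - \prbuyervalue{1}q_1 .
\]
Setting this to zero gives exactly the hypothesis $\prbuyervalue{1}q_1 = \prbuyervalue{2}q_2(b_2-s_2)$, so the two welfares are equal, which is the claim. I do not anticipate any real obstacle here; the only thing to be careful about is bookkeeping the welfare in the cell $(0,1)$ (it is $1$ under trade and $0$ under no trade, since $s=0$) and in the cell $(s_2,b_2)$ (it is $b_2$ under trade and $s_2$ under no trade), and confirming that the other two cells contribute identically to both rules and hence cancel. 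A brief sanity check that the two rules are indeed both implementable (by the fixed prices $1$ and $b_2$ respectively, as already noted in the text) is worth including for completeness but is not needed for the equality itself.
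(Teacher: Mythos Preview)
Your proposal is correct and takes essentially the same approach as the paper: both arguments compute the expected welfare of the two allocation rules and observe that their difference is $\prbuyervalue{2}q_2(b_2-s_2)-\prbuyervalue{1}q_1$, which vanishes under the hypothesis. Your version is slightly more streamlined in that you only tabulate the two cells $(0,1)$ and $(s_2,b_2)$ where the rules disagree, rather than writing out all four terms of each welfare, but this is a presentational difference rather than a different method.
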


\begin{proof}[Proof of Lemma~\ref{ind-same-welfare}] 
    The expected welfare of the allocation rule in Figure~\ref{ind:col-allocation} is:
    
\begin{equation}\label{ind:col-welfare}
    \prbuyervalue{1}\cdot q_1 + \prbuyervalue{1}\cdot q_2 \cdot s_2 + \prbuyervalue{2}\cdot q_1 \cdot b_2 + \prbuyervalue{2}\cdot q_2 \cdot s_2. \nonumber
\end{equation}
The expected welfare of the allocation rule in Figure~\ref{ind:row-welfare} is:
\begin{equation}\label{ind:row-welfare}
    \prbuyervalue{1}\cdot q_2 \cdot s_2 + \prbuyervalue{2}\cdot q_1 \cdot b_2 + \prbuyervalue{2}\cdot q_2 \cdot b_2.\nonumber
\end{equation}           
The two equations are equal when $\prbuyervalue{1}\cdot q_1=\prbuyervalue{2}\cdot q_2(b_2-s_2)$. 
\end{proof}

% \begin{claim}\label{ind:lower-bound}
%     There is no Bayesian incentive compatible mechanism that provides an approximation ratio of less than $1.113$.
% \end{claim}

\subsubsection{Concluding the Proof of Theorem~\ref{ind:2-2-lb-thm}}\label{subsec-ind:2-2-lb-thm}
    We choose values for $b_2, s_2, \prbuyervalue{1}, \prbuyervalue{2}, q_1, q_2$ so that the conditions of Lemmas~\ref{ind-no-opt-cond}, \ref{ind-no-diag-allocation-cond}, and \ref{ind-same-welfare} are met.
    %\begin{align*}\label{ind-restrictions}
    %    s_2-\frac{\prbuyervalue{1}}{\prbuyervalue{2}} & > \frac{q_2}{q_1}(b_2-s_2) +1 \\
    %    \prbuyervalue{1}\cdot q_1 & =\prbuyervalue{2}\cdot q_2(b_2-s_2)
    %\end{align*}
    If the conditions hold, by these lemmas, the approximation ratio of every Bayesian incentive compatible mechanism for $\mathcal{F}$ is at least the approximation ratio of the mechanism depicted in Figure~\ref{ind:row-allocation}. We, therefore, choose the values for the parameters so that this approximation ratio is maximized, i.e., this expression is maximized:
     \begin{equation*}
        \frac{\prbuyervalue{1}\cdot q_1+\prbuyervalue{1}\cdot q_2\cdot s_2 + \prbuyervalue{2}b_2}{\prbuyervalue{1}\cdot q_2\cdot s_2 + \prbuyervalue{2}b_2} = 1+ \frac{\prbuyervalue{1}\cdot q_1}{\prbuyervalue{1}\cdot q_2\cdot s_2 + \prbuyervalue{2}b_2}.
    \end{equation*}
     Rearranging the expressions in the lemmas, we get:
    \begin{align}\label{ind-restrictions}
        s_2 & > b_2\cdot q_2 + \frac{q_1}{\prbuyervalue{2}} \text{ and } b_2 = s_2 +\frac{\prbuyervalue{1}\cdot q_1}{\prbuyervalue{2}\cdot q_2}\nonumber \\
        s_2 &> \frac{\prbuyervalue{1}+1}{\prbuyervalue{2}} \text{ and } b_2 = s_2 +\frac{\prbuyervalue{1}\cdot q_1}{\prbuyervalue{2}\cdot q_2}. \nonumber
    \end{align}
    Set 
    $s_2 =  b_2\cdot q_2 + \frac{q_1}{\prbuyervalue{2}} + \varepsilon$, $b_2 = \frac{1}{\prbuyervalue{2}}+\frac{\prbuyervalue{1}}{\prbuyervalue{2}\cdot q_2}+\frac{\varepsilon}{q_1}$, $\prbuyervalue{1} = 0.57, q_1 = 0.716$.\footnote{The values of $\prbuyervalue{1}, q_1$ are chosen to maximize the function $f(\prbuyervalue{1},q_1)= 1+ \frac{\prbuyervalue{1}\cdot q_1 }{\frac{\prbuyervalue{1}(1-q_1)(1+\prbuyervalue{1})}{(1-\prbuyervalue{1})}+ \frac{\prbuyervalue{1}\cdot q_1}{(1-q_1)}+\prbuyervalue{1}+1}$.}.
    As $\varepsilon$ approaches $0$, $b_2$ approaches 
    $\frac{1}{\prbuyervalue{2}}+\frac{\prbuyervalue{1}}{\prbuyervalue{2}\cdot q_2} \approx 6.993$, $s_2$ approaches$ \frac{1+\prbuyervalue{1}}{\prbuyervalue{2}} \approx 3.651$ and the approximation ratio $1+ \frac{\prbuyervalue{1}\cdot q_1}{\prbuyervalue{1}\cdot q_2\cdot s_2 + \prbuyervalue{2}b_2}$ approaches 1.113.

\subsection{$L$-Shaped Distributions} \label{L-shape-sec}

We now define and analyze $L$-shaped distributions, which are used in Section~\ref{k-k-cor-lb} and Section~\ref{gft-section}. The heart of the proofs in these sections is Claim~\ref{standard-form-claim}. In this section, we define and analyze the basic properties of $L$-shaped distributions and their standard form (Section~\ref{standard-L-shape-sec}) and prove Claim  ~\ref{standard-form-claim}.

Consider the family of $L$-shaped distributions illustrated in Figure~\ref{L-shaped-k-k}. In each distribution in this family, the size of the support of the buyer's and seller's values is $k$. The buyer's support is $b_1 = 1 < b_2 < \dots < b_k$, and the seller's support is $s_1=0 < s_2< \dots < s_k$.
This family of distribution is called $L$-shaped as every distribution in this family satisfies a condition that restricts the trade only to instances inside the $L$-shape.
Consider the set of instances that are outside the $L$-shape, i.e.,  ${\overline{L}} = \{ (s_j, b_i) | s_j \neq 0 \text{ and } b_i \neq b_k\}$. Then, every distribution in the family either has $0$ probability for every instance in $\overline{L}$, or in every instance in $\overline{L}$, the seller's value is larger than the buyer's value. Either way, a trade cannot occur in the instances in $\overline{L}$.

We consider $L$-shaped distributions as they are easier to analyze. Every allocation rule for an $L$-shaped distribution is associated with two threshold values: $b_i \in \{ b_1, \dots b_k \} \cup \{\infty\}$ and $s_j \in \{s_2, \dots, s_k\} \cup \{0\}$ (Observation~\ref{k-k-threshold-obs}). The allocation in the instances $\{ (0,1), (0,b_2), \dots, (0, b_{k-1})\}$ is determined by the threshold $b_i$: if the buyer's value is below $b_i$, the item is not traded; if the buyer's value is at least $b_i$, the item is traded. Similarly, the allocation in the instances $\{ (s_2, b_k), \dots, (s_k, b_k)\}$ is determined by the threshold $s_j$: if the seller's value is above $s_j$, the item is not traded; if the seller's value is at most $s_j$, the item is traded. This property reduces the number of potential allocation rules of Bayesian incentive-compatible mechanisms. Furthermore, it provides a simple characterization of such allocation rules.

\begin{figure}[H]
    \centering
    \renewcommand{\arraystretch}{1.4}
\begin{NiceTabularX}{11cm}{YYYYYY}[hvlines]
\CodeBefore
  \begin{tikzpicture}
    \cellcolor{tradepossiblecolor}{2-2,3-2, 4-2, 5-2, 6-2, 6-3, 6-4, 6-5, 6-6}
  \end{tikzpicture}
\Body
    \diagbox{ buyer}{seller } &$s_1 = 0$  & $s_2$  & $\dots$ & $s_{k-1}$ & $s_k$  \\
    $b_1 = 1$ &  ${+}^*$ & $?$ & $\dots$ & $?$ & $?$\\
    $b_2$ & ${+}^*$ & $?$ & $\dots$ & $?$ & $?$\\
    $\vdots$ &  ${\vdots}$ & $\vdots$ & $\dots$ & $\vdots$ & $\vdots$\\
    $b_{k-1}$ &   ${+}^*$ & $?$ & $?$ & $?$ & $?$\\
    $b_k$ &   ${+}^*$ & ${+}^*$ & $\dots$ & ${+}^*$ & ${+}^*$\\
    \end{NiceTabularX}
    \caption{An L-shaped distribution and the allocation function that maximizes its welfare (and its gains from trade).
    Each cell corresponding to an instance with positive probability is marked by $+$. A cell that corresponds to an instance that might have $0$ probability is marked by $?$. 
    The cells where trade occurs in the optimal allocation rule are colored in \tradepossiblecolorname.}
    \label{L-shaped-k-k}
\end{figure}

% \begin{figure}[H]
%     \centering
%     \renewcommand{\arraystretch}{1.4}
% \begin{NiceTabularX}{11cm}{YYYYYY}[hvlines]
% \CodeBefore
%   \begin{tikzpicture}
%     \cellcolor{positiveprobcolor}{2-2,3-2, 4-2, 5-2, 6-2, 6-3, 6-4, 6-5, 6-6}
%   \end{tikzpicture}
% \Body
%     \diagbox{ buyer}{seller } &$s_1 = 0$  & $s_2$  & $\dots$ & $s_{k-1}$ & $s_k$  \\
%     $b_1 = 1$ &  ${\prbuyervalue{1}}^*$ & $0$ & $\dots$ & $0$ & $0$\\
%     $b_2$ & ${\prbuyervalue{2}}^*$ & $0$ & $\dots$ & $0$ & $0$\\
%     $\vdots$ &  ${\vdots}$ & $\vdots$ & $\dots$ & $\vdots$ & $\vdots$\\
%     $b_{k-1}$ &   ${\prbuyervalue{k-1}}^*$ & $0$ & $\dots$ & $0$ & $0$\\
%     $b_k$ &   ${\prbuyervalue{k}\cdot q_1}^*$ & ${\prbuyervalue{k}\cdot q_2}^*$ & $\dots$ & ${\prbuyervalue{k}\cdot q_{k-1}}^*$ & ${\prbuyervalue{k}\cdot q_k}^*$\\
%     \end{NiceTabularX}
%     \caption{An L-shaped distribution and its welfare maximizing allocation function. 
%     The cells that are colored in\positiveprobcolorname correspond to instances with positive probability.}
%     \label{cor:k-k}
% \end{figure}

% *** L-shaped $s_2>b_{k-1}$ - or there is no prob for the instances outside the L. 

% In an $L$-shaped distribution, either the instances outside of the L have $0$ probability, or they are instances in which the seller's value is above the buyer's value, and so either way, a trade can not occur for instances outside the L.   

\subsubsection{Characterizing Implementable Allocation Functions for L-shaped Distributions}

We now characterize the allocation rules that are implementable by Bayesian incentive-compatible mechanisms in L-shaped distributions.

\begin{observation}\label{k-k-threshold-obs}
    Suppose $M=(x,p)$ is a Bayesian incentive compatible mechanism for an L-shaped distribution $\mathcal{F}_k$. Then, there exist two \emph{threshold values} $b_i \in \{ b_1, \dots b_k\} \cup \{\infty\}$ and $s_j \in \{{s_2, \dots, s_k\}} \cup \{0\}$ that determine the allocation in the instances $\{(0,1), (0,b_2), \dots, (0, b_{k-1})\}$ and $\{(s_2, b_k), \dots, (s_k, b_k)\}$. 
    
    The allocation in the instances $\{(0,1), (0,b_2), \dots, (0, b_{k-1})\}$ is determined by the threshold $b_i$: if the buyer's value is below $b_i$, the item is not traded; if the buyer's value is at least $b_i$, the item is traded. Similarly, the allocation in the instances $\{(s_2, b_k), \dots, (s_k, b_k)\}$ is determined by the threshold $s_j$: if the seller's value is above $s_j$, the item is not traded; if the seller's value is at most $s_j$, the item is traded. 
    
    Moreover, the payment in the instances $\{(0,1), (0,b_2), \dots, (0, b_{k-1})\}$, where the item is traded, is the same. Similarly, the payment in the instances $\{(s_2, b_k), \dots, (s_k, b_k)\}$, where the item is traded, is the same.
\end{observation}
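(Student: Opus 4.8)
The plan is to derive the two threshold structures separately -- first for the bottom row $\{(0,b_i)\}_{i}$ of seller value $0$, using Bayesian incentive compatibility for the buyer, and then for the right column $\{(s_j, b_k)\}_j$ of buyer value $b_k$, using Bayesian incentive compatibility for the seller -- and finally to read off the payment-equality claims from the same inequalities. Throughout I would use the $L$-shape hypothesis: since trade is impossible in every cell of $\overline L$, the only cells where $x$ can be $1$ are the bottom row, the right column, and the corner $(0,b_k)$. This is what decouples the two arguments.

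First I would establish the buyer threshold $b_i$. Fix the seller value $s=0$ and compare two buyer values $b_r < b_t$ among $\{b_1,\dots,b_{k-1}\}$. Because the distribution is $L$-shaped, conditioned on buyer value $b_r$ (with $r<k$) the seller's value is $0$ with probability $1$ (all other cells in row $r$ lie in $\overline L$ and hence have zero probability, or at least never trade), so the buyer-IC inequality (\ref{buyer-ic}) for a buyer of true value $b_r$ deviating to report $b_t$ reads $x(0,b_r)(b_r-p(0,b_r)) \ge x(0,b_t)(b_r-p(0,b_t))$, and symmetrically for the reverse deviation. The standard single-dimensional argument then shows $x(0,\cdot)$ is monotone nondecreasing in the buyer value and that whenever $x(0,b_r)=x(0,b_t)=1$ the prices agree; I would spell out that if $x(0,b_t)=1$ for some $t$ then $x(0,b_r)=1$ for all $r\ge t$ (otherwise the buyer of value $b_t$ gains by reporting $b_r$ and getting a strictly cheaper trade, or a buyer of value $b_r>b_t$ who is denied trade envies $b_t$), which is exactly the existence of a threshold $b_i$ (with $b_i=\infty$ meaning no trade in the bottom row). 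The price equality for the traded cells in the bottom row falls out of the same pair of inequalities: if $x(0,b_r)=x(0,b_t)=1$, then $b_r-p(0,b_r)\ge b_r-p(0,b_t)$ and $b_t-p(0,b_t)\ge b_t-p(0,b_r)$ force $p(0,b_r)=p(0,b_t)$.

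Next I would run the mirror-image argument on the right column using seller-IC (\ref{seller-ic}). Conditioned on buyer value $b_k$, the seller may have any value $s_1,\dots,s_k$; but a seller deviating from true value $s_r$ to report $s_t$ only ever faces the buyer value $b_k$ in the cells that matter, so the relevant comparison is $x(s_r,b_k)p(s_r,b_k)+(1-x(s_r,b_k))s_r \ge x(s_t,b_k)p(s_t,b_k)+(1-x(s_t,b_k))s_r$ together with its reverse. This gives that $x(\cdot,b_k)$ is nonincreasing in the seller value (a seller with a low value who is served would, if a higher-value seller were served, prefer to mimic him; ex-post IR guarantees the served price is at least the seller's own value, so being served is weakly profitable), hence a threshold $s_j$ (with $s_j=0$ meaning trade only at the corner $(0,b_k)$, which is forced anyway since that cell lies on the bottom row). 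Again the two seller-IC inequalities between any two traded cells $(s_r,b_k),(s_t,b_k)$ pin down $p(s_r,b_k)=p(s_t,b_k)$.

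The main obstacle I anticipate is bookkeeping the interaction of the two directions at the shared corner cell $(0,b_k)$ and making sure the $L$-shape hypothesis is invoked correctly: I must check that conditioning on a non-extreme buyer value really does put all mass on $s=0$ (so the buyer's interim utility is just the ex-post utility in the bottom row) and, symmetrically, that conditioning on $b_k$ does not drag in any $\overline L$ cell. Once that is pinned down, everything reduces to the one-dimensional monotonicity-plus-payment-identity argument applied twice, so no delicate estimates are needed; the only care required is in the degenerate cases $b_i=\infty$ and $s_j=0$, which I would handle by noting the threshold statements are vacuous there.
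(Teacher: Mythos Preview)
Your proposal is correct and follows essentially the same approach as the paper: reduce the buyer-IC constraints on the column $s=0$ (and the seller-IC constraints on the row $b=b_k$) to a one-dimensional comparison by using the $L$-shape hypothesis to kill all $\overline L$ contributions, then read off monotonicity and equal payments from the pair of IC inequalities. The paper's proof only spells out the buyer side and declares the seller side ``very similar,'' so your plan to do both explicitly is fine; just tighten the seller paragraph (condition on the seller's value, not on $b_k$, and note that the monotonicity contradiction is a \emph{non}-served low-value seller mimicking a served higher-value one).
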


By Observation~\ref{k-k-threshold-obs}, the only instance inside the $L$-shape whose allocation is not determined by the thresholds $b_t, s_t$ is $(0, b_k)$. Hence, each implementable allocation rule for an $L$-shaped distribution can be characterized by three parameters: $b_t$, $s_t$, and $x( 0,b_k)$. 

Note that while every allocation rule that can be implemented by a Bayesian incentive-compatible mechanism can be described using these three parameters, not every allocation rule described by these parameters can be implemented by a Bayesian incentive compatible mechanism.

\begin{proof}[Proof of Observation~\ref{k-k-threshold-obs}]
    We prove the claims regarding the threshold for the buyer. The proof for the threshold of the seller is very similar.
     Consider an L-shaped distribution $\mathcal{F}_k$. Whenever the buyer's value is $b \in \{ b_1, \dots b_{k-1} \}$, a trade can only occur when the seller's value is $0$. 
     Let $b_i \in\{ b_1, \dots b_{k-1} \} $ be the smallest value in $\{ b_1, \dots b_{k-1} \}$ such that the item is traded in the instance $(0, b_i)$ by $M$. If the item is not traded in the instances $\{(0,1), (0,b_2), \dots, (0, b_{k-1})\}$ we say that the threshold is $\infty$.
     
     Assume that $b_i$ is not a threshold, i.e., there exists $b_j \in \{b_{i+1}, \dots, b_{k-1}\}$ such that in the instance $(0, b_j)$, the item is not traded by $M$. The expected profit of the buyer when his value is $b_j$ and he follows his equilibrium strategy is $\Pr(s=0|\, b=b_j)\cdot (x(0, b_j)\cdot b_j - p(0, b_j)) = -\Pr(s=0|\, b=b_j)p(0, b_j)$, which equals $0$ as by individual rationality if the buyer does not get the item he pays nothing. 
     The expected profit of the buyer when his value is $b_j$ and he plays the equilibrium strategy of $b_i$ is $\Pr(s=0|\, b=b_j)(x(0, b_i)\cdot b_j - p(0, b_i)) = \Pr(s=0|\, b=b_j)(b_j -p(0, b_i))$, which is grater than $0$ as by individual rationality $p(0, b_i) \leq b_i$ and since $ b_j> b_i$. Note that incentive constraints are violated since $\Pr(s=0|\, b=b_j)(x(0, b_j)\cdot b_j - p(0, b_j)) = -\Pr(s=0|\, b=b_j)\cdot p(0, b_j)  < \Pr(s=0|\, b=b_j)(x(0, b_i)\cdot b_j - p(0, b_i)) = \Pr(s=0|\, b=b_j)(b_j -p(0, b_i))$. Thus, the item is traded in the instance $(0, b_j)$ by $M$. 
     Furthermore, observe that it is not enough that the item is traded in $(0, b_j)$; it should also be traded for the same price. Otherwise, for two values $b,b'$ that are at least $b_i$ with $p(0, b) < p(0,b')$, a buyer with value $b'$ will prefer the equilibrium strategy of $b$.
\end{proof}

% To clarify, every allocation rule that is implementable by a Bayesian incentive compatible mechanism, can be described by these three parameters ($(b_t, s_t, x(b_k,0))$), but, not every allocation rule that is described by these three parameters is implementable by a Bayesian incentive compatible mechanism. 

\begin{figure}[H]
    \centering
    \renewcommand{\arraystretch}{1.6}
\begin{NiceTabularX}{12cm}{YYYYYYYY}[hvlines]
\CodeBefore
  \begin{tikzpicture}
%     \fill [\buyercolor] (1-|1) |- (2-|2) -- cycle ;
%     \fill [\sellercolor] (1-|1) -| (2-|2) -- cycle ;
%     \cellcolor{\sellercolor}{1-2,1-3, 1-4, 1-5, 1-6}
%     \cellcolor{\buyercolor}{2-1,3v-1, 4-1, 5-1, 6-1}
    \cellcolor{white}{2-2,3-2, 4-2, 5-2, 6-2, 7-2, 8-2, 8-3, 8-4, 8-5, 8-6, 8-7, 8-8}
    % \draw[rounded corners,ultra thick, draw=black, fill=red, opacity=0.3] (m-5-2.south west) rectangle (m-2-2.north east);
    \end{tikzpicture}
    \Body
    \diagbox{ buyer}{seller } &$s_1 = 0$  & $s_2$  & $\dots$ & $s_j$ & $s_{j+1}$ & $\dots$ & $s_k$  \\
    $b_1 = 1$ &  ${+}$ & $?$& $\dots$& $\dots$ & $\dots$ & $\dots$ & $?$\\
    $\vdots$ &  ${\vdots}$ & $\vdots$ & $\dots$ &$\dots$& $\dots$ & $\dots$ & $\vdots$\\
    $b_{i-1}$ & ${+}$ & $?$ & $\dots$ $\dots$& $\dots$ & $\dots$ & $\dots$ & $?$\\
    $b_{i}$ & ${+}^*$ & $?$ & $\dots$ $\dots$& $\dots$ & $\dots$ & $\dots$ & $?$\\
    $\vdots$ &  ${\vdots}$ & $\vdots$ & $\dots$ &$\dots$& $\dots$ & $\dots$ & $\vdots$\\
    $b_{k-1}$ &   ${+}^*$  & $?$ & $\dots$ $\dots$& $\dots$ & $\dots$ & $\dots$ & $?$\\
    $b_k$ &   ${+}$ & ${+}^*$ & $\dots$ & $+^*$& ${+}$ &$\dots$ & ${+}$\\
    \CodeAfter
    \tikz \node [rounded corners, inner ysep = -0.9mm, draw=\thsbuyercolor, very thick, fit = (2-2) (7-2)] { } ;
    \tikz \node [rounded corners, inner ysep = -0.9mm, draw=\thssellercolor, very thick, fit = (8-3) (8-8)] { } ;
    % \tikz \draw[->, fit = (2-2) -- (2-7)]{} ;
    % \tikz \draw[blue, very thick, fit= (1-1)-(2-2), rectangle]{};
    % \tikz \draw[rounded corners,ultra thick, draw=black, fill=blue, opacity=0.3] (-1,-1) rectangle (2,2);
    % \draw[blue, very thick] (0,0) rectangle (3,2);
    \end{NiceTabularX}

    \caption{An L-shaped distribution $\mathcal{F}_k$ and the allocation rule with the buyer's threshold $b_i$, the seller's threshold $s_j$, and no allocation in the instance $(0, b_k)$.
     $b_i$ is the threshold of the buyer for the instances in the \thsbuyercolorname  rectangle: $(0,b_1), \dots ,(0,b_{k-1})$. 
    The item is only traded in the \thsbuyercolorname rectangle when $b\geq b_i$. Furthermore, by Observation~\ref{k-k-threshold-obs}, the price is the same for every $ b_k> b\geq b_i$.
    $s_j$ is the threshold of the seller for the instances in the \thssellercolorname rectangle: $(s_2,b_k), \dots ,(s_k,b_k)$. 
    The item is only traded in the \thssellercolorname rectangle, when $s\leq s_j$. Furthermore, by Observation~\ref{k-k-threshold-obs}, the price is the same for every $ s_1 < s\leq s_j$. Finally, as the item is not traded in the instance $(0,b_k)$, there is no * in the cell corresponding to this instance.
    Each cell corresponding to an instance with positive probability is marked by $+$. A cell that corresponds to an instance that might have $0$ probability is marked by $?$.}
    \label{cor:k-k-characterization-mech}
\end{figure}

\subsubsection{Standard L-Shaped Distributions}\label{standard-L-shape-sec}

To prove Theorem~\ref{k-k-cor-welfare-thm}, we consider specific type of ``standard'' $L$-shaped distributions (Definition~\ref{standard-L-distribution}). Claim~\ref{standard-form-claim} proves a useful property of these distributions.

\begin{definition}\label{standard-L-distribution}
An $L$-shaped distribution $\mathcal{F}_k$ is in \emph{standard $L$-shaped form} if the following conditions hold:
\begin{enumerate}
    \item For every $1\leq i\leq k-1$ and every $i+1 \leq j<k$: 
    $$
    \frac{\sum_{r=2}^j\Pr(s=s_r| b=b_k)(b_k-s_j)}{\Pr(s=0|b=b_k)} +b_i < s_j -\frac{b_i\cdot \sum_{r=i}^{k-1}\Pr(b=b_r| s=0)}{\Pr(b=b_k|s=0)}.
    % b_i\cdot (q_1 + \frac{\prbuyervalue{i} + \dots + \prbuyervalue{k-1}}{\prbuyervalue{k}}) + b_k\cdot (q_2+\dots +q_j)< s_j\cdot (q_1+\dots +q_j).
    $$ 
    \label{infeasible-k-plus-1-allocations}
    \item For every $1\leq i\leq k-1$: 
    $
    \Pr(0, b_i)\cdot b_i = \Pr(s_{i+1}, b_k) \cdot (b_k -s_{i+1}).
    $\label{same-welfare-cond}
\end{enumerate}
\end{definition}

\begin{claim}\label{standard-form-claim}
Let $k\geq 2$. Let $\mathcal{F}_k$ be an $L$-shaped distribution in standard $L$-shaped form.
Then, the allocation rules that trade the item in exactly $k$ instances, with one of these instances being $(0, b_k)$, have the highest welfare and gains from trade among all Bayesian incentive-compatible implementable allocation rules. 
\end{claim}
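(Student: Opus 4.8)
The plan is to classify all Bayesian incentive-compatible implementable allocation rules for a standard $L$-shaped distribution using the threshold characterization of Observation~\ref{k-k-threshold-obs}, and then show that within this finite menu the welfare-optimal (and simultaneously gains-from-trade-optimal) ones are exactly those that trade in $k$ instances including $(0,b_k)$. By Observation~\ref{k-k-threshold-obs}, every implementable rule is described by a buyer threshold $b_i\in\{b_1,\dots,b_k\}\cup\{\infty\}$, a seller threshold $s_j\in\{s_2,\dots,s_k\}\cup\{0\}$, and the bit $x(0,b_k)\in\{0,1\}$; trade never occurs in $\overline L$ by the $L$-shape property, so these parameters determine the whole rule. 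First I would split into two cases according to $x(0,b_k)$.

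\emph{Case $x(0,b_k)=1$.} Here the rule trades along the entire bottom row $b=b_k$ up to seller-threshold $s_j$ (including the corner $(0,b_k)$), and along the whole left column $s=0$ from buyer-threshold $b_i$ upward. Condition~\ref{infeasible-k-plus-1-allocations} of Definition~\ref{standard-L-distribution} is exactly (a rearrangement of) the obstruction, via the seller's Bayesian IC constraint for deviating from $s_j$ to $0$ together with ex-post IR, to any rule that trades strictly more than $k$ instances inside the $L$: i.e. one cannot have both $b_i\le b_{k-1}$ with a ``long'' left-column tail and $s_j\ge s_2$ with a ``long'' bottom-row tail simultaneously. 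I would reproduce the computation of Lemma~\ref{ind-no-opt-cond}-type: let $p_{\mathrm{col}}=p(0,b_i)\le b_i$ be the common left-column price and $p_{\mathrm{row}}=p(s_j,b_k)\ge s_j$ the common bottom-row price; the seller-IC inequality (\ref{seller-ic}) with $s=0$, $s'=s_j$ gives $\Pr(b\ge b_i\mid s{=}0)\,p_{\mathrm{col}}+\text{(corner term)}\ge \Pr(s{\le} s_j\mid s{=}0)\,p_{\mathrm{row}}$ — and the buyer-IC (\ref{buyer-ic}) comparing $b_k$ against $b_i$ bounds $p_{\mathrm{row}}$ from above — and these two chains contradict Condition~\ref{infeasible-k-plus-1-allocations} unless the rule trades in at most $k$ instances. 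So in this case the feasible rules trade in at most $k$ instances, and a rule attaining exactly $k$ is of the form ``buyer threshold $b_i$, seller threshold $s_{i+1}$'' (the left column contributes $k-i$ trades $b_i,\dots,b_{k-1}$, the corner contributes $1$, the bottom row contributes $i-1$ trades $s_2,\dots,s_i$, totalling $k$); I would note such rules are implementable by a fixed-price-type argument (price $b_i$ works on the left column and corner; price $s_{i+1}$ on the bottom row — one must check the single mechanism with these two prices is Bayesian IC, which follows because the buyer never wants to mimic a type in a different ``block'' and likewise for the seller, using monotonicity of the thresholds).

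\emph{Case $x(0,b_k)=0$.} Then the corner is wasted, so the rule trades at most $k-1$ instances; I would show its welfare is dominated by one of the $k$-instance rules from the first case. Condition~\ref{same-welfare-cond} of Definition~\ref{standard-L-distribution}, $\Pr(0,b_i)\cdot b_i=\Pr(s_{i+1},b_k)\cdot(b_k-s_{i+1})$, is the calibration ensuring that sliding the buyer threshold down by one (gaining the trade $(0,b_i)$ at welfare-gain $\Pr(0,b_i)\,b_i$ — here using $s{=}0$ so the traded welfare is the full $b_i$) is welfare-equivalent to sliding the seller threshold up by one (gaining trade $(s_{i+1},b_k)$ whose marginal welfare contribution over the no-trade value $s_{i+1}$ is $\Pr(s_{i+1},b_k)(b_k-s_{i+1})$); hence all the $k$-instance corner rules have the same welfare, and any rule trading $\le k-1$ instances has welfare strictly less (it is obtained from some $k$-instance rule by deleting a trade, each deletion strictly lowering welfare since every instance inside the $L$ has $b>s$ so trading there strictly helps). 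The identical algebra works verbatim for gains from trade, since on the left column $s=0$ the GFT of a trade equals its welfare contribution $b_i-0=b_i$, and on the bottom row the GFT of $(s_{i+1},b_k)$ is $b_k-s_{i+1}$, which is precisely the marginal welfare term — so Condition~\ref{same-welfare-cond} balances GFT too, and Condition~\ref{infeasible-k-plus-1-allocations} rules out $>k$ trades for GFT by the same IC argument. Combining the two cases: the welfare-maximizing and GFT-maximizing implementable rules are exactly those trading in $k$ instances with $(0,b_k)$ among them.

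The main obstacle I anticipate is the bookkeeping in the $x(0,b_k)=1$ case: one must be careful that the two price parameters $p_{\mathrm{col}},p_{\mathrm{row}}$ are genuinely common across their respective blocks (the ``same price'' clause of Observation~\ref{k-k-threshold-obs} handles this), and that the correct IC inequalities are the seller's deviation $0\to s_j$ and the buyer's deviation $b_k\to b_i$ — getting the conditional-probability weights right in (\ref{seller-ic}) and (\ref{buyer-ic}) so that the resulting inequality is \emph{exactly} the negation of Condition~\ref{infeasible-k-plus-1-allocations}. A secondary subtlety is confirming implementability (not just non-obstruction) of the claimed optimal rules; I expect this to reduce to exhibiting the explicit two-price mechanism and checking the handful of cross-block IC constraints, which the monotone threshold structure makes routine.
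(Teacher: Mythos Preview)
Your overall architecture matches the paper's: split on $x(0,b_k)$, use Condition~\ref{infeasible-k-plus-1-allocations} to kill rules trading too often, and use Condition~\ref{same-welfare-cond} to equalize the welfare (and GFT) of the $k$-trade rules. The Case~$x(0,b_k)=1$ sketch is essentially the paper's Lemma~\ref{k-k-cor-inf-k-plus-1-alloc} combined with Lemma~\ref{k-k-cor-same-welfare-k-alloc}, modulo an indexing slip (the $k$-trade rules are $(b_i,s_i,1)$, not $(b_i,s_{i+1},1)$: the bottom row contributes $j-1$ trades $s_2,\dots,s_j$, so $k-i+1+(j-1)=k$ forces $j=i$).

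The genuine gap is in Case~$x(0,b_k)=0$. You assert ``the corner is wasted, so the rule trades at most $k-1$ instances,'' but this is not automatic: with buyer threshold $b_i$ and seller threshold $s_j$ the rule $(b_i,s_j,0)$ trades in $k-i+j-1$ instances, which can be as large as $2k-3$. Nothing in the threshold characterization alone prevents $j>i$, and there is no direct domination argument reducing this to Case~1 (implementability of $(b_i,s_j,0)$ does not imply implementability of $(b_i,s_j,1)$). The paper handles this with a separate IC argument (Lemma~\ref{k-k-cor-inf-k-alloc-no-p}): for $(b_i,s_j,0)$ with $j\ge i+1$, the buyer's deviation $b_k\to b_i$ gives
\[
\sum_{r=2}^{j}\Pr(s=s_r\mid b=b_k)(b_k-s_j)\ \ge\ \Pr(s=0\mid b=b_k)(b_k-b_i),
\]
and the seller's deviation $0\to s_j$ gives
\[
b_i\sum_{r=i}^{k-1}\Pr(b=b_r\mid s=0)\ \ge\ \Pr(b=b_k\mid s=0)\,s_j,
\]
and one then checks (Observation~\ref{first-condition-gives-second}) that Condition~\ref{infeasible-k-plus-1-allocations} forces the first of these to fail. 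Without this step your Case~2 does not go through.

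A minor point: the claim does not require you to verify implementability of the $k$-trade rules themselves (though some, e.g.\ $(b_1,0,1)$ and $(\infty,s_k,1)$, are obviously fixed-price implementable); it only asserts that their common welfare $W_k$ upper-bounds the welfare of every implementable rule.
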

The proof of Claim~\ref{standard-form-claim} uses Lemma~\ref{k-k-cor-inf-k-plus-1-alloc}, Lemma~\ref{k-k-cor-inf-k-alloc-no-p}, Lemma~\ref{k-k-cor-same-welfare-k-alloc}, and Observation~\ref{first-condition-gives-second}. We first provide their statements and the proof of Claim~\ref{standard-form-claim}. We then prove the three lemmas and the observation.

\begin{lemma}\label{k-k-cor-inf-k-plus-1-alloc}
Assume that for every $1\leq i\leq k-1$ and $i+1 \leq j<k$, it holds that: 
$$
\frac{\sum_{r=2}^j\Pr(s=s_r| b=b_k)(b_k-s_j)}{\Pr(s=0|b=b_k)} +b_i < s_j -\frac{b_i\cdot \sum_{r=i}^{k-1}\Pr(b=b_r| s=0)}{\Pr(b=b_k|s=0)}.  
% b_i\cdot (q_1 + \frac{\prbuyervalue{i} + \dots + \prbuyervalue{k-1}}{\prbuyervalue{k}}) + b_k\cdot (q_2+\dots +q_j)< s_j\cdot (q_1+\dots +q_j).
$$
Then, every allocation rule $(b_i, s_j, 1)$ that trades the item in at least $k+1$ instances is not implementable by a Bayesian incentive compatible mechanism.
\end{lemma}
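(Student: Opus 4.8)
The plan is to show that if a Bayesian incentive-compatible mechanism $M=(x,p)$ implemented such an allocation rule, then two of its incentive constraints — one buyer deviation and one seller deviation — would force an inequality that directly contradicts the hypothesis. First I pin down the structure of the rule. By Observation~\ref{k-k-threshold-obs}, an implementable rule of the form $(b_i,s_j,1)$ is determined by a buyer threshold $b_i$ with $1\le i\le k-1$, a seller threshold $s_j$ with $2\le j\le k$, and trade at $(0,b_k)$; moreover the payment is a common value $p_B$ on the traded cells $\{(0,b_i),\dots,(0,b_{k-1})\}$, a common value $p_S$ on the traded cells $\{(s_2,b_k),\dots,(s_j,b_k)\}$, and I write $p^{*}=p(0,b_k)$. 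Such a rule trades in $(k-i)+1+(j-1)=k-i+j$ instances, hence in at least $k+1$ instances precisely when $j\ge i+1$; these are the rules the lemma is about, and the hypothesis supplies the relevant inequality for each such pair $(i,j)$.

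Now fix a pair with $j\ge i+1$ and suppose $M$ implements $(b_i,s_j,1)$. The first ingredient is the buyer constraint for a buyer with true value $b_k$ who reports $b_i$: since $b_i<b_k$ and the distribution is $L$-shaped, reporting $b_i$ can lead to trade only when $s=0$ (at price $p_B$), so Inequality~(\ref{buyer-ic}) reads
\[
\Pr(s=0\,|\,b=b_k)(b_k-p^{*})+\sum_{r=2}^{j}\Pr(s=s_r\,|\,b=b_k)(b_k-p_S)\ \ge\ \Pr(s=0\,|\,b=b_k)(b_k-p_B).
\]
Using $p_S\ge s_j$ (ex-post individual rationality) and then $p_B\le b_i$, this rearranges to the upper bound $p^{*}\le b_i+\bigl(\sum_{r=2}^{j}\Pr(s=s_r\,|\,b=b_k)(b_k-s_j)\bigr)/\Pr(s=0\,|\,b=b_k)$. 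The second ingredient is the seller constraint for a seller with true value $0$ who reports $s_j$: since $j\ge 2$, reporting $s_j$ can lead to trade only when $b=b_k$ (at price $p_S$), so Inequality~(\ref{seller-ic}) reads
\[
\sum_{r=i}^{k-1}\Pr(b=b_r\,|\,s=0)\,p_B+\Pr(b=b_k\,|\,s=0)\,p^{*}\ \ge\ \Pr(b=b_k\,|\,s=0)\,p_S,
\]
which, using $p_S\ge s_j$ and $p_B\le b_i$, rearranges to the lower bound $p^{*}\ge s_j-b_i\bigl(\sum_{r=i}^{k-1}\Pr(b=b_r\,|\,s=0)\bigr)/\Pr(b=b_k\,|\,s=0)$. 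Combining the lower and upper bounds on $p^{*}$ yields exactly the negation of the hypothesized inequality for the pair $(i,j)$, a contradiction; hence no such $M$ exists.

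I expect the difficulty to be bookkeeping rather than conceptual. The points to watch are: (a) that the conditional distributions $\mathcal{F}_{b|0}$ and $\mathcal{F}_{s|b_k}$ have full support on $\{b_1,\dots,b_k\}$ and $\{s_1,\dots,s_k\}$ respectively — this is exactly the $L$-shape guarantee that the boundary column and row carry positive probability, so the conditional probabilities are well defined and the two deviations compute as claimed; (b) the legitimate use of the constancy of $p_B$ on its cells and of $p_S$ on its cells (Observation~\ref{k-k-threshold-obs}); and (c) applying individual rationality in the correct direction, $p_B\le b_i$ and $p_S\ge s_j$, so that the two rearrangements land precisely on the two sides of the hypothesized inequality. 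The only extra case is the extreme seller threshold $j=k$ (which also yields at least $k+1$ traded instances): the same two deviations and the same chain of inequalities go through verbatim with $s_j$ replaced by $s_k$, giving the analogous contradiction.
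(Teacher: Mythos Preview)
Your approach is essentially identical to the paper's: the paper uses exactly the buyer constraint $b=b_k,\,b'=b_i$ and the seller constraint $s=0,\,s'=s_j$, applies Observation~\ref{k-k-threshold-obs} to make the column prices and row prices constant, invokes individual rationality to get $p_B\le b_i$ and $p_S\ge s_j$, and combines the resulting lower and upper bounds on $p^{*}=p(0,b_k)$ to contradict the hypothesized strict inequality.

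One remark on your final sentence. You flag the boundary case $j=k$ and assert that ``the same chain of inequalities \dots\ giv[es] the analogous contradiction.'' That step is not justified: the hypothesis of the lemma is stated only for $i+1\le j<k$, so for $j=k$ you have no strict inequality to contradict after sandwiching $p^{*}$. The paper's own proof handles this the same way you do implicitly---it observes that $k-i+j\ge k+1$ corresponds to $i+1\le j<k$ and argues only for those pairs---so the $j=k$ case is a loose end in the lemma statement itself rather than a defect of your argument relative to the paper. If you want to close it, you would need the hypothesis extended to $j=k$ (and in the paper's applications the required inequality does hold there), but as written neither proof covers it.
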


\begin{lemma}\label{k-k-cor-inf-k-alloc-no-p}
Assume that for every $1\leq i\leq k-1$ and $i+1 \leq j<k$, it holds that: 
$$b_i \sum_{r=i}^{k-1}\Pr(b=b_r|s=0) <\Pr(b=b_k| s=0) \cdot s_j \text{ OR } \sum_{r=2}^j\Pr(s=s_r|b=b_k)\cdot(b_k-s_j)< \Pr(s=0|b=b_k)(b_k -b_i).$$
Then, every allocation rule $(b_i, s_j, 0)$ that trades the item in at least $k$ instances is not implementable by a Bayesian incentive compatible mechanism.
\end{lemma}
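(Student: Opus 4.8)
The plan is to suppose, toward a contradiction, that a Bayesian incentive compatible mechanism $M=(x,p)$ implements an allocation rule $(b_i,s_j,0)$ that trades the item in at least $k$ instances, and to extract from this a violation of the hypothesis. First I would pin down the structure forced by Observation~\ref{k-k-threshold-obs}: write $p_B$ for the common payment in the traded cells of the $s=0$ column, namely $(0,b_i),\dots,(0,b_{k-1})$, and $p_S$ for the common payment in the traded cells of the $b=b_k$ row, namely $(s_2,b_k),\dots,(s_j,b_k)$. Ex-post individual rationality then gives $0\le p_B\le b_i$ and $s_j\le p_S\le b_k$. Counting trades via Observation~\ref{k-k-threshold-obs} (recall $x(0,b_k)=0$), the rule trades exactly $(k-i)+(j-1)$ times, so ``at least $k$ trades'' forces $1\le i\le k-1$ and $j\ge i+1$; the pairs with $j<k$ are precisely the range over which the hypothesis quantifies (the single extra possibility $j=k$, i.e.\ seller threshold $s_k$, reduces to the same two constraints below and is dispatched using the extra structure of a standard $L$-shaped distribution).

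The heart of the argument is to write down just two incentive constraints and match each to one disjunct. For the seller, compare the true type $s=0$ with the report $s_j$: since $s_j\ge s_2>0$, the $L$-shape forces any trade under a report of $s_j$ to occur only against $b=b_k$, so type $0$'s utility from this misreport is $\Pr(b=b_k|s=0)\cdot p_S$, whereas its equilibrium utility is $\big(\sum_{r=i}^{k-1}\Pr(b=b_r|s=0)\big)\cdot p_B$. Seller Bayesian incentive compatibility (Inequality~(\ref{seller-ic}) with $s=0$, $s'=s_j$) therefore gives
$$\Big(\sum_{r=i}^{k-1}\Pr(b=b_r|s=0)\Big)\,p_B \;\ge\; \Pr(b=b_k|s=0)\,p_S,$$
and substituting $p_B\le b_i$ on the left and $p_S\ge s_j$ on the right yields $b_i\sum_{r=i}^{k-1}\Pr(b=b_r|s=0)\ge\Pr(b=b_k|s=0)\,s_j$, which contradicts the first disjunct. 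Symmetrically, for the buyer, compare the true type $b_k$ with the report $b_i<b_k$: the $L$-shape forces any trade under a report of $b_i$ to occur only against $s=0$, so $b_k$'s utility from this misreport is $\Pr(s=0|b=b_k)(b_k-p_B)$, whereas its equilibrium utility is $\sum_{r=2}^{j}\Pr(s=s_r|b=b_k)(b_k-p_S)$. Buyer Bayesian incentive compatibility (Inequality~(\ref{buyer-ic}) with $b=b_k$, $b'=b_i$) gives
$$\sum_{r=2}^{j}\Pr(s=s_r|b=b_k)(b_k-p_S) \;\ge\; \Pr(s=0|b=b_k)(b_k-p_B),$$
and using $p_S\ge s_j$ to bound the left side from above and $p_B\le b_i$ to bound the right side from below yields $\sum_{r=2}^{j}\Pr(s=s_r|b=b_k)(b_k-s_j)\ge\Pr(s=0|b=b_k)(b_k-b_i)$, which contradicts the second disjunct.

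Since the hypothesis guarantees that for this $(i,j)$ at least one of the two disjuncts holds, one of the two displayed contradictions must occur, so no Bayesian incentive compatible mechanism implements $(b_i,s_j,0)$ with at least $k$ trades. I expect the genuinely delicate point to be not the algebra -- each inequality is immediate once the objects are named correctly -- but the modeling: reading off from the $L$-shape which single cell a deviating report can possibly trade in (a nonzero seller report only ever meets $b_k$; a buyer report below $b_k$ only ever meets $s=0$), which is exactly what collapses the conditional expectations to a single term, and then applying the two individual-rationality bounds in the direction that drives each inequality toward its disjunct (always $p_B\le b_i$ and $p_S\ge s_j$), keeping in mind that $p_B$ and $p_S$ are legitimate single numbers only by virtue of Observation~\ref{k-k-threshold-obs}.
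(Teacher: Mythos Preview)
Your proof is correct and follows essentially the same route as the paper: assume a Bayesian incentive compatible implementation $M=(x,p)$ of $(b_i,s_j,0)$, invoke Observation~\ref{k-k-threshold-obs} to name the two common prices, apply the seller constraint ($s=0$ versus $s'=s_j$) and the buyer constraint ($b=b_k$ versus $b'=b_i$), and then push each through individual rationality ($p_B\le b_i$, $p_S\ge s_j$) to obtain the two non-strict inequalities whose negations are exactly the disjuncts in the hypothesis. The paper's proof does precisely this, in the same order and with the same bounds.

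One small remark: your parenthetical about the case $j=k$ is an observation the paper does not make explicit --- the paper's own proof, like its hypothesis, restricts attention to $i+1\le j<k$. Your instinct that the $j=k$ case needs something extra is right, but appealing to ``the extra structure of a standard $L$-shaped distribution'' steps outside the lemma's stated hypotheses; it would be cleaner either to note that the lemma as used (inside Claim~\ref{standard-form-claim}) is only ever applied alongside that extra structure, or simply to read the conclusion as ranging over the same $(i,j)$ pairs the hypothesis quantifies over, which is what the paper's proof implicitly does.
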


\begin{lemma}\label{k-k-cor-same-welfare-k-alloc}
    Suppose that for every $1\leq i\leq k-1$, it holds that $
    \Pr(0, b_i)\cdot b_i = \Pr(s_{i+1}, b_k) \cdot (b_k -s_{i+1})$.
    Then, all allocation rules $(b_t, s_t, 1)$ that trade the item in exactly $k$ instances have the same welfare and have the same gains from trade. 
    
\end{lemma}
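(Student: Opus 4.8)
The argument is purely combinatorial: I would enumerate the allocation rules of the form $(b_t,s_t,1)$ that trade the item in exactly $k$ instances, show that consecutive rules in a natural chain differ in the allocation of exactly two cells, and check that the standard-form hypothesis makes the welfare and the gains-from-trade contributions of those two cells cancel.

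First, identify the rules. By Observation~\ref{k-k-threshold-obs}, a rule with buyer threshold $b_i$, seller threshold $s_j$, and $x(0,b_k)=1$ trades the item at $(0,b_k)$, at the $k-i$ cells $(0,b_i),\dots,(0,b_{k-1})$ of the $s=0$ column, and at the $j-1$ cells $(s_2,b_k),\dots,(s_j,b_k)$ of the $b=b_k$ row; hence it makes $1+(k-i)+(j-1)$ trades, which equals $k$ exactly when $i=j$. Using the convention that $s_1=0$ names the lowest seller threshold (no trade in the $b=b_k$ row) and that $b_k$ (equivalently $\infty$) names the highest buyer threshold (no trade in the $s=0$ column below $b_k$), the rules in question are exactly $R_i:=(b_i,s_i,1)$ for $i=1,\dots,k$.

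Next, compare $R_i$ with $R_{i+1}$ for $1\le i\le k-1$. Raising the buyer threshold from $b_i$ to $b_{i+1}$ removes trade at the single cell $(0,b_i)$, raising the seller threshold from $s_i$ to $s_{i+1}$ adds trade at the single cell $(s_{i+1},b_k)$, and every other cell of the $L$-shape (in particular $(0,b_k)$) and every cell of $\overline L$ (which is never traded) is allocated identically. Switching a cell $(s,b)$ from ``no trade'' to ``trade'' changes both its welfare contribution and its gains-from-trade contribution by $\Pr(s,b)\,(b-s)$. Therefore the welfare of $R_{i+1}$ minus the welfare of $R_i$, and likewise the difference of gains-from-trade, both equal $\Pr(s_{i+1},b_k)\,(b_k-s_{i+1})-\Pr(0,b_i)\,b_i$, which is $0$ by the standard-form hypothesis $\Pr(0,b_i)\cdot b_i=\Pr(s_{i+1},b_k)\cdot(b_k-s_{i+1})$ (Definition~\ref{standard-L-distribution}). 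Chaining over $i=1,\dots,k-1$ gives that all the $R_i$ have a common welfare value and a common gains-from-trade value.

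There is no substantive obstacle beyond careful bookkeeping; the point that needs a moment's care is the two ends of the chain: $R_1$, whose seller threshold $s_1=0$ makes its $b=b_k$ row contribute only $(0,b_k)$, and $R_k$, whose buyer threshold $b_k$ (identified with $\infty$) makes its $s=0$ column contribute only $(0,b_k)$. One checks that the ``exactly two cells change'' statement still holds at $i=1$ (increasing the seller threshold from $0$ to $s_2$ adds exactly the cell $(s_2,b_k)$) and at $i=k-1$ (increasing the buyer threshold from $b_{k-1}$ to $b_k$ removes exactly the cell $(0,b_{k-1})$), so the telescoping goes through unchanged.
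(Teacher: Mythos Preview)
Your proposal is correct and follows essentially the same approach as the paper: enumerate the $k$ allocation rules that trade in exactly $k$ instances, compare adjacent ones in a chain, and observe that each pair differs only at the two cells $(0,b_i)$ and $(s_{i+1},b_k)$, whose welfare and gains-from-trade contributions cancel by the hypothesis. The only cosmetic difference is that you absorb the endpoints into a uniform notation $R_i=(b_i,s_i,1)$ for $i=1,\dots,k$ (identifying $s_1=0$ and $b_k$ with $\infty$), whereas the paper names $(b_1,0,1)$ and $(\infty,s_k,1)$ separately; the underlying telescoping argument is identical.
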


\begin{observation}\label{first-condition-gives-second}
    If Condition~\ref{infeasible-k-plus-1-allocations} of Definition~\ref{standard-L-distribution} holds then so does the following condition: For every $1\leq i\leq k-1$ and every $i+1 \leq j<k$:
    $$
    b_i \sum_{r=i}^{k-1}\Pr(b=b_r|s=0) <\Pr(b=b_k| s=0) \cdot s_j \text{ OR } \sum_{r=2}^j\Pr(s=s_r|b=b_k)\cdot(b_k-s_j)< \Pr(s=0|b=b_k)(b_k -b_i).
    $$
\end{observation}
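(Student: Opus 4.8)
The plan is to show that Condition~\ref{infeasible-k-plus-1-allocations} of Definition~\ref{standard-L-distribution} is in fact stronger than needed: for every relevant pair $(i,j)$ its \emph{first} disjunct alone already follows, so the disjunction in the target condition is automatically satisfied (the second disjunct is never needed). The argument is entirely a matter of dropping nonnegative terms and rearranging.

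First I would fix $1\leq i\leq k-1$ and $i+1\leq j<k$ and write Condition~\ref{infeasible-k-plus-1-allocations} for this pair. The structural point is that every term on its left-hand side is nonnegative: $b_i\geq 0$; each $\Pr(\cdot)$ is nonnegative; the denominator $\Pr(s=0| b=b_k)$ is \emph{strictly} positive because the instance $(0,b_k)$ carries positive probability in any $L$-shaped distribution (Figure~\ref{L-shaped-k-k}); and $b_k-s_j\geq 0$ since $(s_j,b_k)$ lies inside the $L$-shape, where trade is possible only when $b_k\geq s_j$. Hence deleting the first summand $\frac{\sum_{r=2}^j\Pr(s=s_r| b=b_k)(b_k-s_j)}{\Pr(s=0| b=b_k)}$ from the left-hand side only weakens the inequality, leaving $b_i< s_j-\frac{b_i\sum_{r=i}^{k-1}\Pr(b=b_r| s=0)}{\Pr(b=b_k| s=0)}$.

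Next I would clear the denominator $\Pr(b=b_k| s=0)$ (strictly positive for the same reason) to obtain $b_i\bigl(\Pr(b=b_k| s=0)+\sum_{r=i}^{k-1}\Pr(b=b_r| s=0)\bigr)< \Pr(b=b_k| s=0)\cdot s_j$, and then, since the term $b_i\Pr(b=b_k| s=0)$ is nonnegative, drop it from the left-hand side to get exactly $b_i\sum_{r=i}^{k-1}\Pr(b=b_r| s=0)<\Pr(b=b_k| s=0)\cdot s_j$, which is the first disjunct of the target condition. As $(i,j)$ was an arbitrary relevant pair, the target condition holds.

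There is no real obstacle; the whole thing is sign tracking. The only place that deserves a sentence of justification is the nonnegativity/positivity of the discarded terms, and both facts used — that $(0,b_k)$ has positive probability and that $b_k\geq s_j$ for the indices involved — are immediate from the definition of an $L$-shaped distribution.
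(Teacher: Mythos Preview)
Your argument is correct, but it happens to establish the \emph{other} disjunct than the paper does. The paper multiplies Condition~\ref{infeasible-k-plus-1-allocations} through by $\Pr(s=0\mid b=b_k)$, then bounds the right-hand side from above using $s_j<b_k$ (and dropping the nonnegative subtracted term involving $\sum_{r=i}^{k-1}\Pr(b=b_r\mid s=0)$), arriving at
\[
\sum_{r=2}^{j}\Pr(s=s_r\mid b=b_k)(b_k-s_j)<\Pr(s=0\mid b=b_k)(b_k-b_i),
\]
i.e.\ the second disjunct, for every relevant $(i,j)$. You instead drop the nonnegative first summand on the left of Condition~\ref{infeasible-k-plus-1-allocations}, clear $\Pr(b=b_k\mid s=0)$, and discard $b_i\Pr(b=b_k\mid s=0)\geq 0$, obtaining the first disjunct for every relevant $(i,j)$. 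Both routes are one-line sign-tracking exercises of the same strength; taken together they actually show that under Condition~\ref{infeasible-k-plus-1-allocations} \emph{both} disjuncts hold, which is mildly more than either proof alone claims. Your justification that $b_k\geq s_j$ and that the relevant conditional probabilities are positive matches the level of rigor in the paper.
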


\begin{proof}[Proof of Claim~\ref{standard-form-claim}]
    % We prove that an $L$-shaped distribution  $\mathcal{F}_k$, that satisfies the three conditions in the statement of the claim, is in standard $L$-shaped form.

    Consider an $L$-shaped distribution in a standard form.
    By the second condition (\ref{same-welfare-cond}) in Definition~\ref{standard-L-distribution}, and by Lemma~\ref{k-k-cor-same-welfare-k-alloc}, we have that every allocation rule $(b_t, s_t, 1)$ that trades the item in exactly $k$ instances have the same welfare and the same gains from trade, we denote this welfare by $W_k$ and this gains from trade by $GFT_k$.  
    We prove that if an allocation rule $(b_t, s_t, x( 0, b_k))$ is implementable, then its welfare is at most $W_k$, and its gains from trade is at most $GFT_k$. 
    
    Let $(b_i, s_j, 1)$ be an allocation rule that trades the item in the instance $(0,b_k)$. If it trades the item in more than $k$ instances, then, by Condition~\ref{infeasible-k-plus-1-allocations} of Definition~\ref{standard-L-distribution} and Lemma~\ref{k-k-cor-inf-k-plus-1-alloc}, it cannot be implemented by a Bayesian incentive-compatible mechanism. If it trades the item in exactly $k$ instances, its welfare is $W_k$, and its gains from trade is $GFT_k$. However, if it trades the item in less than $k$ instances, then we claim that its welfare and gains from trade are lower than those of some other allocation rule $(b'_i, s'_j, 1)$, which trades the item in $k$ instances. When $b_i = \infty$, then the allocation rule $(\infty, s_k, 1)$, trades the item in every instance that $(b_i, s_j, 1)$ does and even in instances that it does not. Similarly, when $s_j=0$, then the allocation rule $(b_1, 0, 1)$, trades the item in every instance that  $(b_i, s_j, 1)$ does and even in instances that it does not. Finally, when both $b_i \neq \infty$ and $s_j \neq 0$, then the allocation rule $(b_i, s_i, 1)$  trades the item in every instance that $(b_i, s_j, 1)$ does and even in instances that it does not, since $i > j$, as the number of instances in which the allocation rule  $(b_i, s_i, 1)$ trades the item is $k-i+j$ less than $k$.

    Let $(b_i, s_j, 0)$ be an allocation rule that does not trade the item in the instance $(0,b_k)$. If it trades the item in at least $k$ instances, then, by Condition~\ref{infeasible-k-plus-1-allocations} of Definition~\ref{standard-L-distribution}, Observation~\ref{first-condition-gives-second}, and Lemma~\ref{k-k-cor-inf-k-alloc-no-p}, it cannot be implemented by a Bayesian incentive compatible mechanism. However, if it trades the item in at most $k-1$ instances, its welfare and gains from trade are lower than the welfare and gains from trade of the allocation rule $(b_i, s_j, 1)$, which trades the item in at most $k$ instances and has a welfare of at most $W_k$ and gains from trade of at most $GFT_k$ (as discussed above).
    Therefore, the welfare and gains from trade of any implementable allocation rule $(b_t, s_t, x(0, b_k))$ is at most $W_k$ and at most $GFT_k$, respectively.
\end{proof}

\begin{proof}[Proof of Lemma~\ref{k-k-cor-inf-k-plus-1-alloc}]
    Observe that if an allocation rule $(b_i, s_j, 1)$ trades the item in at least $k+1$ instances, then $b_i \in \{ b_1, \dots, b_{k-1}\}$, and $s_j \in \{ s_2, \dots, s_k \}$. If 
    $b_i \in \{ b_1, \dots, b_{k-1}\}$ and $s_j \in \{ s_2, \dots, s_k \}$, then the allocation rule $(b_i, s_j, 1)$ trades the item in exactly $k-i+j$ instances. Finally, $k-i+j\geq k+1$ for every $1\leq i\leq k-1$ and every $i+1 \leq j<k$. Hence, we prove that for every $1\leq i\leq k-1$ and every $i+1 \leq j<k$, if:
    $$
    \frac{\sum_{r=2}^j\Pr(s=s_r| b=b_k)(b_k-s_j)}{\Pr(s=0|b=b_k)} +b_i < s_j -\frac{b_i\cdot \sum_{r=i}^{k-1}\Pr(b=b_r| s=0)}{\Pr(b=b_k|s=0)}.
    $$
    then the allocation rule $(b_i, s_j, 1)$ is not implementable by a Bayesian incentive compatible mechanism. 
    Fix $1\leq i\leq k-1$ and $ i+1 \leq j < k$ and consider the allocation rule $(b_i, s_j, 1)$ (see Figure~\ref{cor:k-k-characterization-mech}). 
    Assume that this allocation rule is implementable by a Bayesian incentive compatible mechanism $M=(x,p)$, and denote by $p$ the trade price in the instance $( 0, b_k)$. 
    By Bayesian incentive compatibility for the buyer (Inequality \ref{buyer-ic} with $b=b_k, b'=b_i$) and the seller (Inequality \ref{seller-ic} with $s=0, s'=s_j$):
    \begin{align*}
        b_k \cdot (\sum_{r=1}^j \Pr(s=s_r|b=b_k)) - \sum_{r=1}^j \Pr(s=s_r|b=b_k))\cdot p(s_r, b_k) & \geq \Pr(s=0|b=b_k)\cdot (b_k-p(0, b_i))\\
        \Pr(b=b_k | s=0)\cdot p(0, b_k) +\sum_{l=i}^{k-1} \Pr(b=b_l | s=0)\cdot p(0,b_l) & \geq \Pr(b=b_k | s=0) \cdot p(s_j, b_k).
    \end{align*}
    Recall Observation~\ref{k-k-threshold-obs}. The prices $p( 0, b_l)$ are equal and so are the prices $p(s_r, b_k)$. By individual rationality we have  $p(0, b_l) \leq b_i$ and  $p(s_j, b_k) \geq s_j$. 
    \begin{align*}
        (b_k - s_j)\sum_{r=2}^j \Pr(s=s_r| b=b_k) +\Pr(s=0|b=b_k)(b_k - p) \geq  & \Pr(s=0|b=b_k)\cdot (b_k-b_i) \\
        \Pr(b=b_k | s=0)\cdot p + \sum_{l=i}^{k-1} \Pr(b=b_l | s=0)\cdot b_i \geq &  \Pr(b=b_k | s=0) \cdot s_j     
    \end{align*}
    Using these inequalities, we can bound $p$:
        \begin{align*}
        (b_k - s_j)\cdot \sum_{r=2}^j \frac{ \Pr(s=s_r| b=b_k)}{\Pr(s=0|b=b_k)} + b_i &\geq p\\
          s_j -\sum_{l=i}^{k-1} \frac{b_i\cdot \Pr(b=b_l | s=0)}{\Pr(b=b_k|s=0)} & \leq p.
    \end{align*}

Combining the two bounds:
\begin{align*}
     (b_k - s_j)\cdot \sum_{r=2}^j \frac{ \Pr(s=s_r| b=b_k)}{\Pr(s=0|b=b_k)} + b_i \geq s_j -\sum_{l=i}^{k-1} \frac{b_i\cdot \Pr(b=b_l | s=0)}{\Pr(b=b_k|s=0)}.
\end{align*}
\end{proof}

\begin{proof}[Proof of Lemma~\ref{k-k-cor-inf-k-alloc-no-p}]
   If an allocation rule $(b_i, s_j, 0)$ trades the item in at least $k$ instances, it must be that  $b_i \in \{ b_1, \dots, b_{k-1}\}$ and $s_j \in \{ s_2, \dots, s_k \}$. If 
    $b_i \in \{ b_1, \dots, b_{k-1}\}$ and $s_j \in \{ s_2, \dots, s_k \}$, then the allocation rule $(b_i, s_j, 0)$ trades the item in exactly $k-i+j-1$ instances. Finally, $k-i+j-1\geq k$ for every $1\leq i\leq k-1$ and every $i+1 \leq j<k$. Hence, we prove that for every $1\leq i\leq k-1$ and every $i+1 \leq j<k$, if:
    $$
       b_i \cdot \sum_{r=i}^{k-1}\Pr(b=b_r|s=0) <\Pr(b=b_k| s=0) \cdot s_j \text{ OR } \sum_{r=2}^j\Pr(s=s_r|b=b_k)\cdot(b_k-s_j)< \Pr(s=0|b=b_k)(b_k -b_i).
    $$
    then the allocation rule $(b_i, s_j, 0)$ is not implementable by a Bayesian incentive compatible mechanism. 
    For every $1\leq i\leq k-1$ and $i+1\leq j<k$, consider the allocation rule $(b_i, s_j, 0)$ (depicted in Figure~\ref{cor:k-k-characterization-mech}). Assume that this allocation rule is implementable using a Bayesian incentive compatible mechanism $M=(x,p)$. Then, by Bayesian incentive compatibility for the buyer (Inequality (\ref{buyer-ic}) with $b=b_k, b'=b_i$) and the seller (Inequality \ref{seller-ic} with $s=0, s'=s_j$) we get: 
    \begin{align*}
        & \sum_{r=2}^j\Pr(s=s_r|b=b_k)\cdot(b_k-p(s_r, b_k))\geq  \Pr(s=0|b=b_k)(b_k - p( 0, b_i)) \\
        &\sum_{r=i}^{k-1}\Pr(b=b_r|s=0)\cdot p(0, b_l) \geq\Pr(b=b_k| s=0)\cdot p(s_j, b_k).
    \end{align*}
     Recall Observation~\ref{k-k-threshold-obs}. The prices $p( 0, b_l)$ are equal and so are the prices $p(s_r, b_k)$. By individual rationality we have  $p( 0, b_l) \leq b_i$ and  $p(s_j, b_k) \geq s_j$.
    \begin{align*}
         & \sum_{r=2}^j\Pr(s=s_r|b=b_k)\cdot(b_k-s_j)\geq  \Pr(s=0|b=b_k)(b_k - b_i) \\
        &\sum_{r=i}^{k-1}\Pr(b=b_r|s=0)\cdot b_i \geq \Pr(b=b_k| s=0)\cdot s_j.
    \end{align*}
    Hence, if at least one of the inequalities below is violated, there is no Bayesian incentive-compatible mechanism that implements the allocation rule $(b_i, s_j, 0)$.
    % \begin{align*}
    %     (b_k - s_j)\sum_{r=2}^j q_r & \geq q_1(b_k - b_i)\\
    %     b_i \sum_{l=i}^{k-1} \prbuyervalue{i} & \geq \prbuyervalue{k}\cdot q_1 \cdot s_j        
    % \end{align*}   
\end{proof}

\begin{proof}[Proof of Lemma~\ref{k-k-cor-same-welfare-k-alloc}]
    Note that an allocation rule $(b_i, s_j, 1)$ trades the item in $k-i-j$ instances if $b_i < \infty$ and $s_j > 0$, in $j$ instances if $b_i = \infty$, and in $k-i+1$ instances if $s_j = 0$. If both $b_i = \infty$ and $s_j = 0$, the allocation rule trades the item once. Therefore, the allocation rules that sell the item exactly $k$ times are $(b_i, s_i, 1)$ for $2 \leq i \leq k-1$, $(\infty, s_k, 1)$, and $(b_1, 0, 1)$.
    To ensure that all of these allocation rules have the same welfare and same gains from trade, we require that each allocation rule $(b_i, s_i, 1)$ has the same welfare and same gains from trade as its adjacent allocation rule. Specifically, for $2 \leq i \leq k-2$, the adjacent allocation rule is $(b_{i+1}, s_{i+1}, 1)$ (see Figure~\ref{cor:k-alloc-welfare-adjacent} for an example). For $(b_{k-1}, s_{k-1}, 1)$, the adjacent allocation rule is $(\infty, s_k, 1)$. And for $(b_{1}, 0, 1)$, the adjacent allocation rule is $(b_2, s_2, 1)$.

    \begin{figure}[H]
     \centering
     % \hspace{1em}
    \renewcommand{\arraystretch}{1.6}
    \begin{NiceTabularX}{14cm}{YYYYYYYY}[hvlines]
    \diagbox{ buyer}{seller } &$s_1 = 0$  & $s_2$  & $\dots$ & $s_i$ & $s_{i+1}$ & $\dots$ & $s_k$  \\
    $b_1 = 1$ &  ${+}$ & $?$& $\dots$& $\dots$ & $\dots$ & $\dots$ & $?$\\
    $\vdots$ &  ${\vdots}$ & $\vdots$ & $\dots$ &$\dots$& $\dots$ & $\dots$ & $\vdots$\\
    $b_{i}$ & ${+}^*$ & $?$ & $\dots$ $\dots$& $\dots$ & $\dots$ & $\dots$ & $?$\\
    $b_{i+1}$ & ${+}^{*@}$ & $?$ & $\dots$ $\dots$& $\dots$ & $\dots$ & $\dots$ & $?$\\
    $\vdots$ &  ${\vdots}$ & $\vdots$ & $\dots$ &$\dots$& $\dots$ & $\dots$ & $\vdots$\\
    $b_{k-1}$ &   ${+}^{*@}$  & $?$ & $\dots$ $\dots$& $\dots$ & $\dots$ & $\dots$ & $?$\\
    $b_k$ &   ${+}^{*@}$ & ${+}^{*@}$ & $\dots$ & ${+}^{*@}$& ${+}^@$ &$\dots$ & ${+}$\\
    \end{NiceTabularX}
    \caption{The allocation rule $(b_i, s_i, 1)$ and its adjacent allocation rule $(b_{i+1}, s_{i+1}, 1)$ as defined in the proof of Lemma~\ref{k-k-cor-same-welfare-k-alloc}. The allocations according to $(b_i, s_i, 1)$ are marked by $*$, and the allocations according to $(b_{i+1}, s_{i+1}, 1)$ are marked by $@$.
    Each cell that corresponds to an instance with positive probability is marked by $+$, and a cell that corresponds to an instance that might have $0$ probability and might not is marked by $?$.}
    \label{cor:k-alloc-welfare-adjacent}
\end{figure}
    Note that for $2 \leq i \leq k-2$, an allocation rule $(b_i, s_i, 1)$ has the same welfare and the same gains from trade as its adjacent allocation rule $(b_{i+1}, s_{i+1}, 1)$ if the following equality holds:
    \begin{align*}
        \Pr(0,b_i) \cdot b_i  = \Pr(s_{i+1}, b_k)s_{i+1}\cdot (b_k-s_{i+1}).
    \end{align*}
    Similarly, for the allocation rule $(b_{k-1}, s_{k-1}, 1)$ to have the same welfare and same gains from trade as its adjacent allocation rule $(\infty, s_k, 1)$, we need to satisfy the equation:
    \begin{align*}
        \Pr(0, b_{k-1}) \cdot b_{k-1} = \Pr(s_k, b_k)\cdot (b_k-s_k).
    \end{align*}
    For the allocation rule $(b_{1}, 0, 1)$ to have the same welfare and the same gains from trade as its adjacent allocation rule $(b_2, s_2, 1)$, we need the following equation to hold:
    \begin{align*}
        \Pr(0, b_1) \cdot b_1 = \Pr(s_2, b_k) \cdot (b_k-s_2).
    \end{align*}
    Together, for every $1\leq i\leq k-1$, we have $\Pr(0, b_i)\cdot b_i = \Pr(s_{i+1}, b_k) \cdot (b_k -s_{i+1})$, as required.
\end{proof}

\begin{proof}[Proof of Observation~\ref{first-condition-gives-second}]
Assume that Condition~\ref{infeasible-k-plus-1-allocations} of Definition~\ref{standard-L-distribution} holds. For every $1 \leq i \leq k-1$ and $i+1 \leq j < k$: 
    \begin{align*}
    \frac{\sum_{r=2}^j\Pr(s=s_r| b=b_k)(b_k-s_j)}{\Pr(s=0|b=b_k)} +b_i < & s_j -\frac{b_i\cdot \sum_{r=i}^{k-1}\Pr(b=b_r| s=0)}{\Pr(b=b_k|s=0)} \\ \iff& \\
    \sum_{r=2}^j\Pr(s=s_r| b=b_k)(b_k-s_j) < & \Pr(s=0|b=b_k) (s_j -b_i -\frac{b_i\cdot \sum_{r=i}^{k-1}\Pr(b=b_r| s=0)}{\Pr(b=b_k|s=0)}).
\end{align*}
Observe that:
\begin{align*}
     \Pr(s=0|b=b_k) (s_j -b_i -\frac{b_i\cdot \sum_{r=i}^{k-1}\Pr(b=b_r| s=0)}{\Pr(b=b_k|s=0)}) \underbrace{<}_{s_j < b_k}  \Pr(s=0|b=b_k) (b_k -b_i),    
    % \underbrace{\leq}_{\substack{\text{by~\eqref{k-k-welfare-prob-values}} \\\prbuyervalue{i} + \dots + \prbuyervalue{k-1} > \prbuyervalue{k} }
\end{align*}
and so: $\sum_{r=2}^j\Pr(s=s_r|b=b_k)\cdot(b_k-s_j)< \Pr(s=0|b=b_k)\cdot (b_k -b_i).$
\end{proof}

\subsection{Approximating the Welfare with Correlated Values} \label{k-k-cor-lb}

In this section, we prove a lower bound for approximating the welfare.
We consider a family of correlated distributions that are $L$-shaped (Section~\ref{L-shape-sec}). 
In every distribution $\mathcal{F}_k$ (depicted in Figure~\ref{cor:k-k}) in the family, the number of instances in the support is only $2k -1$. When the buyer's value is $b \in \{ 1, b_2, \dots, b_{k-1}\}$, then the seller's value is always $0$. Similarly, when the seller's value is $s \in \{ s_2, \dots, s_k \}$, then the buyer's value is always $b_k$. The probability of the buyer's value being $b_i$ is $\prbuyervalue{i}$ for $i \in [k]$. When the buyer's value is $b_k$, the probability that the seller's value is $s_i$ is $q_i$, for $i \in [k]$. In every instance in its support, the buyer's value is larger than the seller's value (i.e., $s_k < b_k$), and so a trade improves the welfare in each instance in the support.

\begin{figure}[H]
    \centering
    \renewcommand{\arraystretch}{1.4}
\begin{NiceTabularX}{11cm}{YYYYYY}[hvlines]
\CodeBefore
  \begin{tikzpicture}
    \cellcolor{tradepossiblecolor}{2-2,3-2, 4-2, 5-2, 6-2, 6-3, 6-4, 6-5, 6-6}
  \end{tikzpicture}
\Body
    \diagbox{ buyer}{seller } &$s_1 = 0$  & $s_2$  & $\dots$ & $s_{k-1}$ & $s_k$  \\
    $b_1 = 1$ &  ${\prbuyervalue{1}}^*$ & $0$ & $\dots$ & $0$ & $0$\\
    $b_2$ & ${\prbuyervalue{2}}^*$ & $0$ & $\dots$ & $0$ & $0$\\
    $\vdots$ &  ${\vdots}$ & $\vdots$ & $\dots$ & $\vdots$ & $\vdots$\\
    $b_{k-1}$ &   ${\prbuyervalue{k-1}}^*$ & $0$ & $\dots$ & $0$ & $0$\\
    $b_k$ &   ${\prbuyervalue{k}\cdot q_1}^*$ & ${\prbuyervalue{k}\cdot q_2}^*$ & $\dots$ & ${\prbuyervalue{k}\cdot q_{k-1}}^*$ & ${\prbuyervalue{k}\cdot q_k}^*$\\
    \end{NiceTabularX}
    \caption{An L-shaped distribution $\mathcal{F}_k$ and its welfare maximizing allocation function. 
    The colored cells in \tradepossiblecolorname  correspond to instances with positive probability.}
    \label{cor:k-k}
\end{figure}

\begin{theorem} \label{k-k-cor-welfare-thm}
    Let $k\geq 2$. There exists an L-shaped distribution $\mathcal F_k$ such that every Bayesian incentive compatible mechanism for $\mathcal{F}_k$ provides an approximation ratio no better than $1+ \frac{\frac{k}{2}(H_{2k-2}-H_{k-1})}{2(1-\frac{1}{2^k}+\frac{k}{2})}$ to the optimal welfare, where $H_{n}=\sum_{k=1}^n \frac{1}{k}$ is the Harmonic number.
    This ratio approaches $1+ \frac{\ln{2}}{2}$ as $k$ approaches $\infty$.
\end{theorem}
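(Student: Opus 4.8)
The plan is to instantiate the family of $L$-shaped distributions of Figure~\ref{cor:k-k} with a concrete choice of parameters and then appeal to Claim~\ref{standard-form-claim}. I would choose $\prbuyervalue{1},\dots,\prbuyervalue{k}$, values $1=b_1<b_2<\dots<b_k$, $q_1,\dots,q_k$, and $0=s_1<s_2<\dots<s_k<b_k$ so that: (i) $\mathcal{F}_k$ is in standard $L$-shaped form (Definition~\ref{standard-L-distribution}); (ii) $\prbuyervalue{i}\,b_i=\frac{k/2}{k-1+i}$ for every $1\le i\le k-1$, which forces $\sum_{i=1}^{k-1}\prbuyervalue{i}b_i=\frac k2\sum_{r=k}^{2k-2}\frac1r=\frac k2(H_{2k-2}-H_{k-1})$ (note that $\prbuyervalue{1}=\frac12$ since $b_1=1$); and (iii) $\prbuyervalue{k}\,b_k=2\bigl(1-\tfrac1{2^k}+\tfrac k2\bigr)$. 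After imposing (ii)--(iii) the remaining freedom is the interior values $b_2,\dots,b_{k-1}$ (subject to monotonicity) together with, for each $i$, the split of the product $\prbuyervalue{k}q_{i+1}(b_k-s_{i+1})=\prbuyervalue{i}b_i$ dictated by Condition~\ref{same-welfare-cond} between $q_{i+1}$ and $s_{i+1}$; I would fix these so that $\{\prbuyervalue{i}\}$ and $\{q_j\}$ are probability vectors, the orderings on $\{b_i\}$ and $\{s_j\}$ hold, and the inequalities of Condition~\ref{infeasible-k-plus-1-allocations} are satisfied.

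Granting such an $\mathcal{F}_k$, the rest is short. Every instance in the support has $b>s$ (indeed $s_1=0<1\le b_i$ in the first column, and $s_k<b_k$ in the last row), so the welfare-maximizing allocation trades everywhere and $\welfare{OPT}{\mathcal{F}_k}=\sum_{i=1}^{k-1}\prbuyervalue{i}b_i+\prbuyervalue{k}b_k$. By Claim~\ref{standard-form-claim}, the welfare of \emph{every} Bayesian incentive-compatible mechanism for $\mathcal{F}_k$ is at most the common welfare $W_k$ of the allocation rules that trade in exactly $k$ instances, one of which is $(0,b_k)$ (they all share the same welfare by Lemma~\ref{k-k-cor-same-welfare-k-alloc}). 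Evaluating $W_k$ on the convenient representative $(\infty,s_k,1)$ --- which trades exactly the bottom row $(0,b_k),(s_2,b_k),\dots,(s_k,b_k)$, the untraded instances $(0,b_1),\dots,(0,b_{k-1})$ contributing $s=0$ --- gives $W_k=\prbuyervalue{k}b_k$. Hence the approximation ratio of every Bayesian incentive-compatible mechanism is at least
\[
\frac{\welfare{OPT}{\mathcal{F}_k}}{W_k}=1+\frac{\sum_{i=1}^{k-1}\prbuyervalue{i}b_i}{\prbuyervalue{k}b_k}=1+\frac{\tfrac k2\bigl(H_{2k-2}-H_{k-1}\bigr)}{2\bigl(1-\tfrac1{2^k}+\tfrac k2\bigr)} .
\]
Letting $k\to\infty$, $H_{2k-2}-H_{k-1}=\sum_{r=k}^{2k-2}\tfrac1r\to\ln 2$ while the denominator equals $k+2-2^{1-k}=\Theta(k)$, so the bound converges to $1+\tfrac{(k/2)\ln 2}{k}=1+\tfrac{\ln 2}{2}$.

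The main obstacle is step~(i): checking that the parameters can be chosen to meet \emph{all} of the standard-form requirements at once. Condition~\ref{same-welfare-cond} is an equality and can simply be used to define $s_{i+1}$ from $q_{i+1}$ (or vice versa), but one must then verify $0<s_2<\dots<s_k<b_k$ and that the family of inequalities in Condition~\ref{infeasible-k-plus-1-allocations} --- which couple the $b_i$, the $s_j$, and the two conditional distributions --- holds for the chosen values. Intuitively this is arranged by making the seller's distribution conditioned on $b=b_k$ close to an equal-profit distribution for the buyer, and making $b_k$ suitably large relative to $b_{k-1}$ (which is why one wants $\prbuyervalue{k}$ bounded away from, but strictly below, $\tfrac12$, forcing the interior $\prbuyervalue{i}$'s to be small and hence the interior $b_i$'s moderately large); pinning this down for every $k\ge2$, including the small cases, is the technical heart of the proof. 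The remaining items --- that the $\prbuyervalue{i}$ and $q_j$ sum to $1$, the monotonicity $b_{k-1}<b_k$, and the Harmonic-sum identities and the limit --- are routine bookkeeping.
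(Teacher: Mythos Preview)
Your plan is correct and is exactly the paper's approach: the paper instantiates the $L$-shaped family with $\prbuyervalue{i}=2^{-i}$ for $i<k$, $\prbuyervalue{k}=2^{1-k}$, $q_1=\tfrac12$, $q_i=\tfrac{1}{2(k-1)}$ for $i>1$, and $b_i=2^{i-1}\tfrac{k}{k-1+i}$ for $i<k$ (so indeed $\prbuyervalue{i}b_i=\tfrac{k/2}{k-1+i}$), defines $s_j$ from Condition~\ref{same-welfare-cond}, and then verifies Condition~\ref{infeasible-k-plus-1-allocations} and all the orderings in two dedicated lemmas --- precisely the ``technical heart'' you flagged but did not carry out. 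One subtlety you missed: with these choices the two sides of Condition~\ref{infeasible-k-plus-1-allocations} meet with equality for some $(i,j)$, so to get the \emph{strict} inequality the paper sets $b_k=\bigl(1-\prbuyervalue{k}+q_1\prbuyervalue{k}+\prbuyervalue{1}\tfrac{q_1+q_2}{q_2}\bigr)/(q_1\prbuyervalue{k})+\varepsilon$, i.e.\ $\prbuyervalue{k}b_k=2(1-2^{-k}+k/2)+\varepsilon\cdot q_1\prbuyervalue{k}$ rather than your exact equality~(iii), and the stated bound is attained as $\varepsilon\to0$.
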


% \subsubsection{Concluding the Proof of Theorem~\ref{k-k-cor-welfare-thm}}\label{subsec-k-k-cor-welfare-thm}

We choose values for the parameters of the distribution $\mathcal{F}_k$ such that $\mathcal{F}_k$ is in standard $L$-shaped form (Definition~\ref{standard-L-distribution}). Let $\varepsilon > 0$ be a small enough value, we set the values:
\begin{equation}\label{k-k-welfare-prob-values}
    \prbuyervalue{i} = \begin{cases}
                 \frac{1}{2^i} &  1 \leq i \leq k-1 ;\\
                 \frac{1}{2^{k-1}} & i = k.
           \end{cases} \quad
    q_i = \begin{cases}
                 \frac{1}{2}  &   i = 1 ;\\
                 \frac{1}{2(k-1)}  & 1 < i \leq k.
           \end{cases} \quad 
            b_i = \begin{cases}
                 2^{i-1}\frac{k}{k-1+i} &  1 \leq i \leq k-1 ;\\
                 \frac{1-\prbuyervalue{k}+q_1\prbuyervalue{k} + \prbuyervalue{1}(\frac{q_1+q_2}{q_2})}{q_1\prbuyervalue{k}} + \varepsilon & i = k.
           \end{cases}           
\end{equation}
Next, we select the values for $s_2, \dots s_k$ such that Condition~\ref{same-welfare-cond} in Definition~\ref{standard-L-distribution} is met, for every $1 \leq i \leq k-1$:
\begin{equation}\label{k-k-welfare-s-values}
\begin{split}
    & \Pr(0, b_i)\cdot b_i = \Pr(s_{i+1}, b_k) \cdot (b_k -s_{i+1}) \iff \prbuyervalue{i}b_i = \prbuyervalue{k}q_{i+1}\cdot(b_k -s_{i+1}) \\
    & \iff s_j = b_k -\frac{\prbuyervalue{j-1}b_{j-1}}{\prbuyervalue{k}\cdot q_j} \quad  \forall 2 \leq j \leq k.
    \end{split}
\end{equation}

\begin{lemma}\label{cor-choice-of-pars-lemma}
    For the choice of parameters in Equation~\ref{k-k-welfare-prob-values} and Equation~\ref{k-k-welfare-s-values}, the first condition of Definition~\ref{standard-L-distribution} holds.
\end{lemma}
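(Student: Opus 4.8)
The plan is to verify Condition~\ref{infeasible-k-plus-1-allocations} of Definition~\ref{standard-L-distribution} directly by plugging in the explicit values from Equations~\eqref{k-k-welfare-prob-values} and~\eqref{k-k-welfare-s-values}. Recall the condition to be checked: for every $1\le i\le k-1$ and every $i+1\le j<k$,
\[
\frac{\sum_{r=2}^j\Pr(s=s_r\mid b=b_k)(b_k-s_j)}{\Pr(s=0\mid b=b_k)}+b_i \;<\; s_j-\frac{b_i\cdot\sum_{r=i}^{k-1}\Pr(b=b_r\mid s=0)}{\Pr(b=b_k\mid s=0)}.
\]
The first step is to rewrite the conditional probabilities in terms of the $q$'s and $\prbuyervalue{\cdot}$'s: $\Pr(s=s_r\mid b=b_k)=q_r$, $\Pr(s=0\mid b=b_k)=q_1=\tfrac12$, while $\Pr(b=b_r\mid s=0)=\prbuyervalue{r}/(\sum_{t=1}^{k-1}\prbuyervalue{t})$ and $\Pr(b=b_k\mid s=0)=\prbuyervalue{k}/(\sum_{t=1}^{k-1}\prbuyervalue{t})$, so the ratio $\sum_{r=i}^{k-1}\Pr(b=b_r\mid s=0)/\Pr(b=b_k\mid s=0)$ simplifies to $(\sum_{r=i}^{k-1}\prbuyervalue{r})/\prbuyervalue{k}$. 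Using $\prbuyervalue{r}=2^{-r}$ for $r\le k-1$ and $\prbuyervalue{k}=2^{-(k-1)}$, the geometric sum $\sum_{r=i}^{k-1}2^{-r}=2^{-(i-1)}-2^{-(k-1)}$, so this ratio equals $2^{k-i}-1$. Likewise $\sum_{r=2}^j q_r=\tfrac{j-1}{2(k-1)}$, so the left-hand numerator becomes $\tfrac{j-1}{2(k-1)}(b_k-s_j)$ and, dividing by $q_1=\tfrac12$, the whole left side is $\tfrac{j-1}{k-1}(b_k-s_j)+b_i$.

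The second step is to substitute the chosen $s_j$. From Equation~\eqref{k-k-welfare-s-values}, $s_j=b_k-\dfrac{\prbuyervalue{j-1}b_{j-1}}{\prbuyervalue{k}q_j}$, hence $b_k-s_j=\dfrac{\prbuyervalue{j-1}b_{j-1}}{\prbuyervalue{k}q_j}$. With $b_{j-1}=2^{j-2}\tfrac{k}{k-2+j}$, $\prbuyervalue{j-1}=2^{-(j-1)}$, $\prbuyervalue{k}=2^{-(k-1)}$ and $q_j=\tfrac{1}{2(k-1)}$ (valid since $j\ge 3$ here, as $j\ge i+1\ge 2$ and we will handle $j=2$, which only arises for $i=1$, as a boundary case), one computes $b_k-s_j=2^{k-2}\cdot\tfrac{k}{k-2+j}\cdot 2(k-1)\cdot 2^{-1}= 2^{k-2}(k-1)\tfrac{k}{k-2+j}$ — a clean closed form. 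The inequality to prove then reads
\[
\frac{j-1}{k-1}\cdot 2^{k-2}(k-1)\frac{k}{k-2+j}+b_i \;<\; b_k-\,2^{k-2}(k-1)\frac{k}{k-2+j}\cdot\frac{1}{?}\ \cdots
\]
wait—more precisely, after moving terms, the claim becomes $\bigl(\tfrac{j-1}{k-1}+ (2^{k-i}-1)\tfrac{b_i}{b_k-s_j}\bigr)(b_k-s_j)+b_i < b_k$; since all quantities except $b_k$ are bounded by fixed polynomial-times-$2^{k-2}$ expressions independent of the slack parameter $\varepsilon$, while $b_k=\dfrac{1-\prbuyervalue{k}+q_1\prbuyervalue{k}+\prbuyervalue{1}(q_1+q_2)/q_2}{q_1\prbuyervalue{k}}+\varepsilon$ is of order $2^{k}$ and strictly exceeds the left side by a positive margin, the inequality holds for all $k\ge 2$ and suitably small $\varepsilon>0$. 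The third step is therefore to bound the left-hand side uniformly: plug $b_i=2^{i-1}\tfrac{k}{k-1+i}\le 2^{i-1}$, use $2^{k-i}-1<2^{k-i}$, and collect everything into a single bound of the form $C\cdot 2^{k-1}$ for an explicit constant $C$ (roughly $C\le 3$ after the arithmetic), then observe $b_k\ge \dfrac{1+\prbuyervalue{1}(q_1+q_2)/q_2}{q_1\prbuyervalue{k}}\ge 2^{k}$, which dominates.

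The main obstacle I expect is purely bookkeeping: keeping the three index-dependent factors ($\tfrac{j-1}{k-1}$, the geometric ratio $2^{k-i}-1$, and the rational factor $\tfrac{k}{k-2+j}$) under control simultaneously and checking that the worst case over $1\le i\le k-1$, $i+1\le j\le k-1$ still leaves $b_k$ strictly larger. I would handle this by not optimizing over $(i,j)$ exactly but instead crudely upper-bounding every index-dependent factor by its maximum ($\tfrac{j-1}{k-1}\le 1$, $\tfrac{k}{k-2+j}\le \tfrac{k}{k}=1$ for $j\ge 2$, $2^{k-i}-1\le 2^{k-2}$ for $i\ge 2$ and separately treating $i=1$), producing a clean sufficient inequality in $k$ alone; then a one-line check that the definition of $b_k$ makes it hold, with the $+\varepsilon$ term giving strictness. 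The boundary cases $j=2$ (forcing $i=1$) and $i=1$ generally should be written out separately since $b_1=1$ and $q_2$ behave slightly differently, but they are the easiest cases, not the hard ones.
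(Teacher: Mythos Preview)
Your plan has a real gap, not just bookkeeping. Two preliminary slips: in this \emph{correlated} $L$-shape, $\Pr(s=0,b=b_k)=\prbuyervalue{k}q_1$, so
\[
\frac{\sum_{r=i}^{k-1}\Pr(b=b_r\mid s=0)}{\Pr(b=b_k\mid s=0)}
=\frac{\sum_{r=i}^{k-1}\prbuyervalue{r}}{\prbuyervalue{k}q_1}=2^{k-i+1}-2,
\]
not $2^{k-i}-1$; similarly your closed form for $b_k-s_j$ is off by a factor of $2$ (the correct value is $2^{k-1}(k-1)\tfrac{k}{k+j-2}$). These can be fixed, but they point to the substantive problem.

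After the correct substitutions the inequality to be checked is equivalent to
\[
b_i\bigl(2^{k-i+1}-1\bigr)\;+\;\frac{\prbuyervalue{j-1}b_{j-1}(q_1+\cdots+q_j)}{\prbuyervalue{k}q_1q_j}\;<\;b_k,
\]
and the second summand equals $k\cdot 2^{k-1}$ for \emph{every} admissible $j$ (this is not an accident; the $b_j$'s were chosen for it). At $i=1$ the first summand is $b_1(2^{k}-1)=2^k-1$, so the left side equals $k\cdot 2^{k-1}+2^k-1$. But from Equation~\eqref{k-k-welfare-prob-values} one computes directly $b_k=k\cdot 2^{k-1}+2^k-1+\varepsilon$. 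The margin is \emph{exactly} $\varepsilon$. Hence no crude bound of the form ``LHS $\le C\cdot 2^{k-1}$ for a fixed constant $C$, while $b_k\ge 2^k$'' can possibly work: both sides grow like $(k+2)\cdot 2^{k-1}$, and your claimed $C\le 3$ is simply false.

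The paper's proof exploits precisely this tightness. It multiplies through by $q_1\prbuyervalue{k}$, uses the \emph{definition} of $b_k$ to write
\[
b_k\,q_1\prbuyervalue{k}=\bigl(1-\prbuyervalue{k}+q_1\prbuyervalue{k}\bigr)+\prbuyervalue{1}\,\frac{q_1+q_2}{q_2}+\varepsilon\,q_1\prbuyervalue{k},
\]
and then matches the two summands on the left with the two bracketed pieces on the right, proving separately
\[
b_i\Bigl(q_1\prbuyervalue{k}+\sum_{r=i}^{k-1}\prbuyervalue{r}\Bigr)\le 1-\prbuyervalue{k}+q_1\prbuyervalue{k}
\qquad\text{and}\qquad
\frac{\prbuyervalue{j-1}b_{j-1}}{q_j}\sum_{r=1}^{j}q_r\le \prbuyervalue{1}\,\frac{q_1+q_2}{q_2}.
\]
The first holds with equality at $i=1$; the second is in fact an identity for every $j$. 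The leftover $\varepsilon\,q_1\prbuyervalue{k}$ then supplies the strict inequality. The idea you are missing is that $b_k$ was engineered to make Condition~\ref{infeasible-k-plus-1-allocations} hold with zero slack at the extremal indices, so the verification must match terms exactly rather than discard factors.
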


By Lemma~\ref{k-k-paramerts-fit-dist}, our choice of values indeed  yields a legal distribution, and the order over the values is as assumed throughout the proof, i.e., $b_1 < b_2 < \dots < b_k$, $s_1 < s_2 < \dots < s_k$, and $s_k < b_k$, 

\begin{lemma}\label{k-k-paramerts-fit-dist}
     For the choice of parameters in Equation~\ref{k-k-welfare-prob-values} and Equation~\ref{k-k-welfare-s-values}, we get a joint distribution over the values of the buyer and seller, and $b_1 < \dots < b_k$, $s_1 < \dots < s_k$, and $s_k < b_k$ and $k\geq 2$. 
\end{lemma}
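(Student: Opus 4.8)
The statement packages four elementary verifications, and the plan is to handle them in order of increasing effort: (i) that the buyer marginal $(\prbuyervalue{1},\dots,\prbuyervalue{k})$ and the seller conditional $(q_1,\dots,q_k)$ given $b_k$ are genuine probability vectors; (ii) that $b_1<\dots<b_{k-1}$; (iii) that $0=s_1<s_2<\dots<s_k$; and (iv) that $b_{k-1}<b_k$ and $s_k<b_k$. Everything reduces to arithmetic once we have a closed form for $b_k$, so the real work is that one computation. For (i) I would note $\sum_{i=1}^{k-1}2^{-i}+2^{-(k-1)}=1$ and $\frac12+(k-1)\cdot\frac{1}{2(k-1)}=1$, with all entries positive, while the remaining support instances $(0,b_i)$ carry the trivial point-mass seller conditional. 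Recall that $s_2,\dots,s_k$ are not free parameters but are defined through Condition~\ref{same-welfare-cond} of Definition~\ref{standard-L-distribution}: using $\Pr(0,b_i)=\prbuyervalue{i}$ and $\Pr(s_{i+1},b_k)=\prbuyervalue{k}q_{i+1}$, that condition rearranges to $s_j=b_k-\frac{\prbuyervalue{j-1}b_{j-1}}{\prbuyervalue{k}q_j}$ for $2\leq j\leq k$; hence once (iii) and (iv) are established, the $s_j$ are well-defined reals lying in $(0,b_k)$ and the displayed parameters do specify a legal joint distribution.

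For (ii) and the monotonicity half of (iii): since $b_i=2^{i-1}\frac{k}{k+i-1}$, the inequality $b_i<b_{i+1}$ is equivalent to $\frac{1}{k+i-1}<\frac{2}{k+i}$, i.e.\ to $k+i>2$, which holds for all $k\geq2$, $i\geq1$. For the seller values the key observation is that for $j\geq2$ we have $q_j=q_{j+1}=\frac{1}{2(k-1)}$, so $s_j<s_{j+1}$ is equivalent to $\prbuyervalue{j-1}b_{j-1}>\prbuyervalue{j}b_j$; and since $\prbuyervalue{i}b_i=\frac{k}{2(k+i-1)}$ is strictly decreasing in $i$, this holds. Thus $s_2<s_3<\dots<s_k$ regardless of the exact value of $b_k$, and it only remains to secure $s_2>0$ together with $b_{k-1}<b_k$ and $s_k<b_k$.

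The crux is evaluating $b_k$. Substituting the chosen parameters, the bracket collapses: $\frac{q_1+q_2}{q_2}=1+\frac{q_1}{q_2}=1+(k-1)=k$, so the numerator of $b_k-\varepsilon$ equals $1-\prbuyervalue{k}+q_1\prbuyervalue{k}+\prbuyervalue{1}k=1-2^{-(k-1)}+2^{-k}+\frac k2=1-2^{-k}+\frac k2$, while its denominator is $q_1\prbuyervalue{k}=2^{-k}$; therefore $b_k=2^k\bigl(1-2^{-k}+\frac k2\bigr)+\varepsilon=2^k-1+k\,2^{k-1}+\varepsilon$. Now $\frac{\prbuyervalue{1}b_1}{\prbuyervalue{k}q_2}=\frac{1/2}{2^{-k}/(k-1)}=2^{k-1}(k-1)$, so $s_2=b_k-2^{k-1}(k-1)=2^k+2^{k-1}-1+\varepsilon>0$, which together with the previous paragraph yields $0=s_1<s_2<\dots<s_k$; moreover $s_k=b_k-\frac{\prbuyervalue{k-1}b_{k-1}}{\prbuyervalue{k}q_k}<b_k$ since the subtracted term is positive, and $b_{k-1}=2^{k-3}\frac{k}{k-1}<2^{k-1}(k-1)<b_k$ because $4(k-1)^2>k$ for $k\geq2$. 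This closes all four checks, and since none of the strict inequalities is tight they persist for every $\varepsilon>0$. I expect no conceptual obstacle; the only places demanding care are keeping the two regimes $q_1=\frac12$ versus $q_j=\frac{1}{2(k-1)}$ ($j\geq2$) separate and carrying out the $b_k$ simplification without slips.
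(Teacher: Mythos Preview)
Your proposal is correct and follows essentially the same route as the paper's proof: both verify the probability normalizations, reduce $b_i<b_{i+1}$ to $k+i>2$, reduce $s_j<s_{j+1}$ for $j\geq 2$ to the monotonicity of $\prbuyervalue{i}b_i=\frac{k}{2(k+i-1)}$, and handle $s_1<s_2$, $b_{k-1}<b_k$, $s_k<b_k$ by direct arithmetic. The only cosmetic difference is that you first derive the closed form $b_k=2^k-1+k\cdot 2^{k-1}+\varepsilon$ and then read off the remaining inequalities from it (in particular, $s_2=b_k-2^{k-1}(k-1)=2^k+2^{k-1}-1+\varepsilon>0$ simultaneously gives $s_2>0$ and $2^{k-1}(k-1)<b_k$), whereas the paper carries the unsimplified expression for $b_k$ through each check; this makes your verification of $b_{k-1}<b_k$ and $s_1<s_2$ a bit cleaner, but the underlying argument is identical.
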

We will prove these lemmas after using them to bound the approximation ratio $\frac{\welfare{OPT}{\mathcal{F}_k}}{W_{k}}$. %This ratio bounds the approximation ratio of every Bayesian incentive compatible mechanism for $\mathcal{F}_k$.
\begin{align*}
    \frac{\welfare{OPT}{\mathcal{F}_k}}{W_{k}}  = \frac{\prbuyervalue{1} + \prbuyervalue{2}b_2 +\dots \prbuyervalue{k}b_{k}}{\prbuyervalue{k}b_k} =& 1+ \frac{\prbuyervalue{1} + \prbuyervalue{2}b_2 +\dots \prbuyervalue{k-1}b_{k-1}}{\prbuyervalue{k}b_k} = 1+ \frac{\sum_{i=0}^{k-2}\frac{k}{2(k+i)}}{2(1-\frac{1}{2^k}+\frac{k}{2}+ \varepsilon)}\\
     =& 1+ \frac{\frac{k}{2}(H_{2k-2}-H_{k-1})}{2(1-\frac{1}{2^k}+\frac{k}{2}+ \varepsilon)}.
    % = 1+ \frac{\frac{k}{2}(\psi(2k-1)-\psi(k))}{2(1-\frac{1}{2^k}+\frac{k}{2}+ \varepsilon)}.
\end{align*}
% Where $\psi$ is the Digamma function, and the last equality is due to the fact that $\psi(n) = H_{n-1} -\gamma$,
%Where $H_{n} = \sum_{r=1}^n\frac{1}{r}$ is the $n$'th harmonic number.
% , and $\gamma$ is Euler's constant.
As $\varepsilon$ approaches $0$, the bound $\frac{\welfare{OPT}{\mathcal{F}_k}}{W_{k}}$ approaches $1+ \frac{\frac{k}{2}(H_{2k-2}-H_{k-1})}{2(1-\frac{1}{2^k}+\frac{k}{2})}$. As $k$ approaches $\infty$, $\frac{\welfare{OPT}{\mathcal{F}_k}}{W_{k}}$ approaches $1+\frac{\ln{2}}{2}$, because $\lim_{n \to \infty}(H_n -\ln{n})=\gamma$, where $\gamma$ is Euler's constant. This proves the theorem assuming Lemmas \ref{cor-choice-of-pars-lemma} and \ref{k-k-paramerts-fit-dist}. We now provide the proofs of these lemmas.

%\end{proof}

\begin{proof}[Proof of Lemma~\ref{cor-choice-of-pars-lemma}]
    The first condition of Definition~\ref{L-shaped-k-k} holds for every $1 \leq i\leq k-1$ and $i+1 \leq j < k $ if:
     \begin{align*}
     &  \quad \frac{\sum_{r=2}^j\Pr\left(s=s_r| b=b_k\right)\left(b_k-s_j\right)}{\Pr\left(s=0|b=b_k\right)} +b_i  < s_j -\frac{b_i\cdot \sum_{r=i}^{k-1}\Pr\left(b=b_r| s=0\right)}{\Pr\left(b=b_k|s=0\right)}\\
      \iff &\quad  \left(b_k-s_j\right)\frac{\left(q_2 + \dots q_j\right)}{q_1}+ b_i < s_j - \frac{b_i\left(\prbuyervalue{i} +\dots \prbuyervalue{k-1}\right)}{\prbuyervalue{k}q_1} \\
     \underbrace{\iff}_{~\eqref{k-k-welfare-s-values}} & \quad b_i\left(\frac{q_1 \prbuyervalue{k}+ \prbuyervalue{i} + \dots + \prbuyervalue{k-1}}{q_1\prbuyervalue{k}}\right) + \frac{\prbuyervalue{j-1}b_{j-1}}{\prbuyervalue{k}q_j}\left(\frac{q_1+\dots +q_j}{q_1}\right) < b_k \\
    \iff &\quad  b_i\left(q_1\prbuyervalue{k} + \prbuyervalue{i} + \dots + \prbuyervalue{k-1}\right) + \frac{\prbuyervalue{j-1}b_{j-1}}{q_j}\left(q_1+\dots +q_j\right)   <  b_k\cdot q_1\cdot \prbuyervalue{k}\\
    \underbrace{\iff}_{~\eqref{k-k-welfare-prob-values}} &\quad b_i\left(q_1\prbuyervalue{k} + \prbuyervalue{i} + \dots + \prbuyervalue{k-1}\right) + \frac{\prbuyervalue{j-1}b_{j-1}}{q_j}\left(q_1+\dots +q_j\right) <  1-\prbuyervalue{k}+q_1\prbuyervalue{k} + \prbuyervalue{1}\left(\frac{q_1+q_2}{q_2}\right) + \varepsilon \cdot q_1\prbuyervalue{k}.      
    \end{align*}
    Then, it is enough to show that for every $1 \leq i\leq k-1$ and $i+1 \leq j < k $, the two inequalities hold:
    \begin{subequations}
    \begin{equation}\label{cor-lemma-choice-of-pars-first}
        b_i\left(q_1\prbuyervalue{k} + \prbuyervalue{i} + \dots + \prbuyervalue{k-1}\right) \leq 1-\prbuyervalue{k}+q_1\prbuyervalue{k}
    \end{equation}
    \begin{equation}\label{cor-lemma-choice-of-pars-second}
        \frac{\prbuyervalue{j-1}b_{j-1}}{q_j}\left(q_1+\dots +q_j\right) \leq   \prbuyervalue{1}\left(\frac{q_1+q_2}{q_2}\right)
    \end{equation}
    \end{subequations}
    We start with the first Inequality~\ref{cor-lemma-choice-of-pars-first} using our values for the parameters (Equation~\ref{k-k-welfare-prob-values} and Equation~\ref{k-k-welfare-s-values}:
    % We show that this condition holds for every $1 \leq i\leq k-1$ and $i+1 \leq j < k $, for our choice of parameters. 
    \begin{align*}
         & \quad b_i\left(q_1\prbuyervalue{k} + \prbuyervalue{i} + \dots + \prbuyervalue{k-1}\right) \leq 1-\prbuyervalue{k}+q_1\prbuyervalue{k}\\
         \iff &\quad b_i\left(\frac{1}{2^k}+ \frac{1}{2^i}+ \dots + \frac{1}{2^{k-1}}\right)  \leq 1-\frac{1}{2^k} \\ 
         \iff &\quad  b_i \left(\frac{1}{2^{i-1}}-\frac{1}{2^k}\right)   \leq 1-\frac{1}{2^k} \\ 
         \underbrace{\iff}_{~\eqref{k-k-welfare-prob-values}} & \quad \left(2^{i-1}\frac{k}{k-1+i}\right) \left(\frac{1}{2^{i-1}}-\frac{1}{2^k}\right) \leq 1-\frac{1}{2^k}\\
         % \underbrace{\iff}_{\substack{i\leq k-1\\ Equation~\eqref{k-k-welfare-prob-values}}} & \quad (2^{i-1}\frac{k}{k-1+i}) (\frac{1}{2^{i-1}}-\frac{1}{2^k}) \leq 1-\frac{1}{2^k}\\
         \iff &\quad \frac{k}{k-1+i} \left(1- \frac{1}{2^{k-i+1}}\right)  \leq 1-\frac{1}{2^k}.
    \end{align*}
    Since $i\geq 1$ we have that $\frac{k}{k-1+i} \leq 1, -\frac{1}{2^{k-i+1}} \leq -\frac{1}{2^k}$. Thus, $\frac{k}{k-1+i} \left(1- \frac{1}{2^{k-i+1}}\right) \leq 1- \frac{1}{2^k}$, as needed.
    We are left with proving Inequality~(\ref{cor-lemma-choice-of-pars-second}). By recalling that $j\geq i+1 \geq 2$ and our choice of values (Equations (\ref{k-k-welfare-prob-values}) and (\ref{k-k-welfare-s-values})):
        \begin{align*}
        \frac{\prbuyervalue{j-1}b_{j-1}}{q_j}\left(q_1+\dots +q_j\right) &\leq   \prbuyervalue{1}\left(\frac{q_1+q_2}{q_2}\right) \iff b_{j-1} \leq \prbuyervalue{1}\left(\frac{q_1+q_2}{q_2}\right)\cdot \frac{q_j}{\prbuyervalue{j-1}\left(q_1+\dots +q_j\right)}\\
        \underbrace{\iff}_{j\geq 2} \,  b_{j-1} & \leq \frac{\frac{1}{2}\left(\frac{1}{2}+ \frac{1}{2(k-1)}\right)}{\frac{1}{2^{j-1}}\left(\frac{1}{2}+\frac{j-1}{2\left(k-1\right)}\right)} = \frac{2^{j-2}\left(\frac{k}{k-1}\right)}{\frac{k-1+j-1}{k-1}}=2^{j-2}\left(\frac{k}{k-1+j-1}\right) = b_{j-1}.
    \end{align*}

\end{proof}

\begin{proof}[Proof of Lemma~\ref{k-k-paramerts-fit-dist}]

We first verify that the buyer's marginal probability distribution sums up to 1: $\sum_{i=1}^k \prbuyervalue{i} = \sum_{i=1}^{k-1}\frac{1}{2^i} + \frac{1}{2^{k-1}} = 1$. Moreover, for every value in the buyer's support, the conditional distribution of the seller also sums up to $1$. For $b \in \{ b_1, \dots, b_{k-1} \}$, the seller's value is $0$ with probability $1$ and for $b=b_k$, the sum of the seller's conditional probabilities is $\sum_{i=1}^k q_{i} = \frac{1}{2} + (k-1)\frac{1}{2(k-1)} = 1$. 

Next, we verify that indeed $b_i<b_{i+1}$, for every $1 \leq i \leq k-1$. 
For $1 \leq i \leq k-2$ we have:
\begin{align*}
    b_i < b_{i+1} \iff 2^{i-1}\frac{k}{k-1+i} < 2^{i} \cdot \frac{k}{k-1+i+1} \iff k+i < 2(k+i -1) \iff 2<k+i,
\end{align*}
and for $k\geq 2$, the last inequality holds. For $i=k-1$, we get:
\begin{align*}
    b_k > b_{k-1} & \iff  \frac{1-\prbuyervalue{k}+q_1\prbuyervalue{k} + \prbuyervalue{1}\left(\frac{q_1+q_2}{q_2}\right)}{q_1\prbuyervalue{k}} > 2^{k-2} \cdot \frac{k}{k-1+k-1}\\ 
    & \iff \frac{1-\frac{1}{2^{k-1}}+\frac{1}{2^k} +\frac{1}{2}\left(\frac{\frac{1}{2}+\frac{1}{2(k-1)}}{\frac{1}{2(k-1)}}\right)}{\frac{1}{2^k}}> 2^{k-2} \cdot \frac{k}{k-1+k-1} \iff 4\left(1-\frac{1}{2^k}+\frac{k}{2}\right) > \frac{k}{2(k-1)}.
\end{align*}
%and the last inequality holds since $4(1-\frac{1}{2^k}+\frac{k}{2}) >4\frac{k}{2} =2k \underbrace{>}_{k>\frac{5}{4}}\frac{k}{2(k-1)}.$
where the last inequality holds for $k\geq 2$. Next, we show that $s_i < s_{i+1}$, for every $1\leq i \leq k-1$. 
For $i =1$, we have:
\begin{align*}
    s_1 < s_2 & \iff 0 < b_k- \frac{\prbuyervalue{1}b_1}{\prbuyervalue{k}q_2} +\varepsilon \iff 0< \frac{2\left(1-\frac{1}{2^{k-1}}+\frac{1}{2^k} +\frac{1}{2}\left(\frac{\frac{1}{2}+\frac{1}{2(k-1)}}{\frac{1}{2(k-1)}}\right)\right)-\frac{1}{2}2(k-1)}{\prbuyervalue{k}} +\varepsilon\\
    & \iff 0< \frac{2\left(1-\frac{1}{2^k}+\frac{k}{2}\right)-k+1}{\prbuyervalue{k}} +\varepsilon.
\end{align*}
and the inequality clearly holds. For $2 \leq i \leq k-1$, we get:
\begin{align*}
    s_i < s_{i+1} & \iff  b_k- \frac{\prbuyervalue{i-1}b_{i-1}}{\prbuyervalue{k}q_i} < b_k- \frac{\prbuyervalue{i}b_i}{\prbuyervalue{k}q_{i+1}} \iff \frac{\prbuyervalue{i}b_i}{q_{i+1}} < \frac{\prbuyervalue{i-1}b_{i-1}}{q_{i}} \underbrace{\iff}_{q_i=q_{i+1} \text{ for } i>1} \prbuyervalue{i}b_i < \prbuyervalue{i-1}b_{i-1}\\
    & \iff \frac{k}{2(k-1+i)} < \frac{k}{2(k-1+i-1)} \iff k+i -2 < k+i-1 \iff -2< -1.
\end{align*}

Finally, we are left with proving $s_k < b_k$:
\begin{align*}
    s_k < b_k \iff b_k -\frac{\prbuyervalue{k-1}b_{k-1}}{\prbuyervalue{k}q_k} < b_k \iff 0 < \frac{\prbuyervalue{k-1}b_{k-1}}{\prbuyervalue{k}q_k},
\end{align*}
and the last inequality clearly holds, as all values are strictly positive.
\end{proof}

 \subsection{ Approximating the Gains from Trade with Independent Values}\label{gft-section}

This section proves a lower bound for approximating the gains from trade.
We consider a family of independent distributions that are $L$-shaped (Section~\ref{L-shape-sec}). 
In every distribution $\mathcal{F}_k$ (depicted in Figure~\ref{ind:k-k-gft}) in the family, the buyer's support is $b_1=1< b_2< \dots< b_k$, and the seller's support is $S_1=0 < s_2< \dots < s_k$.
The probability of the buyer's value being $b_i$ is $\prbuyervalue{i}$ for $i \in [k]$, and the probability that the seller's value is $s_i$ is $q_i$, for $i \in [k]$. In every instance that is not in the ``$L$'', i.e., in every instance in the set
${\overline{L}} = \{ (s_j, b_i) | s_j \neq 0 \text{ and } b_i \neq b_k\}$, the seller's value is is larger than the buyer's value ( $s_2 > b_{k-1}$) and trade cannot occur.

\begin{theorem} \label{k-k-cor-gft-thm}
    Let $k\geq 2$. There exists an L-shaped distribution $\mathcal F_k$ such that every Bayesian incentive compatible mechanism for $\mathcal{F}_k$ provides an approximation ratio no better than $1+ \frac{H_{k}-1}{H_{k} +1}$ to the optimal gains from trade, where $H_{n}=\sum_{k=1}^n \frac{1}{k}$ is the Harmonic number.
    This ratio approaches $2$ as $k$ approaches $\infty$.
\end{theorem}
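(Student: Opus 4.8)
The argument follows the template of Theorem~\ref{k-k-cor-welfare-thm}, now with an \emph{independent} $L$-shaped distribution and gains from trade in place of welfare. I would fix $\mathcal F_k$ to be the $L$-shaped distribution of Figure~\ref{ind:k-k-gft} with buyer values $1=b_1<\dots<b_k$, seller values $0=s_1<\dots<s_k$, buyer marginals $\prbuyervalue{1},\dots,\prbuyervalue{k}$ and seller marginals $q_1,\dots,q_k$, and choose the parameters (with a small slack $\varepsilon>0$ absorbed into $b_k$, just as in~\eqref{k-k-welfare-prob-values}) so that $\mathcal F_k$ is in standard $L$-shaped form (Definition~\ref{standard-L-distribution}). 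Because the values are independent, $\Pr(s=s_r\mid b=b_k)=q_r$, $\Pr(b=b_r\mid s=0)=\prbuyervalue{r}$, $\Pr(0,b_i)=\prbuyervalue{i}q_1$ and $\Pr(s_{i+1},b_k)=\prbuyervalue{k}q_{i+1}$; hence Condition~\ref{same-welfare-cond} of Definition~\ref{standard-L-distribution} becomes $\prbuyervalue{i}q_1 b_i=\prbuyervalue{k}q_{i+1}(b_k-s_{i+1})$, which simply defines each $s_{i+1}$ from the buyer parameters, and Condition~\ref{infeasible-k-plus-1-allocations} reduces to an explicit system of inequalities that I would verify by a direct computation analogous to Lemma~\ref{cor-choice-of-pars-lemma}. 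I would also check, as in Lemma~\ref{k-k-paramerts-fit-dist}, that the chosen parameters form a legitimate probability distribution with $b_1<\dots<b_k$, $0=s_1<\dots<s_k<b_k$, and --- crucially for $\mathcal F_k$ to be $L$-shaped --- $s_2>b_{k-1}$, so that trade in the rectangle $\overline L$ is indeed impossible.

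Granting standard form, Claim~\ref{standard-form-claim} says that among all Bayesian-incentive-compatible implementable allocation rules the maximum gains from trade equals the common value $GFT_k$ of the rules that trade in exactly $k$ instances, one of them being $(0,b_k)$ (that all such rules share this value is Lemma~\ref{k-k-cor-same-welfare-k-alloc}). Evaluating $GFT_k$ on the fixed-price-$0$ rule $(b_1,0,1)$, which trades exactly along the column $s=0$, gives $GFT_k=q_1\sum_{i=1}^{k}\prbuyervalue{i}b_i$. For the optimum, since $\mathcal F_k$ is $L$-shaped and $s_k<b_k$, an efficient allocation trades precisely on the $L$, so
\[
GFT(\mathrm{OPT})=\sum_{i=1}^{k-1}\prbuyervalue{i}q_1 b_i+\sum_{j=1}^{k}\prbuyervalue{k}q_j(b_k-s_j).
\]
Put $S:=\sum_{j=2}^{k}\prbuyervalue{k}q_j(b_k-s_j)$ and $T:=\prbuyervalue{k}q_1 b_k$. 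Summing Condition~\ref{same-welfare-cond} over $i=1,\dots,k-1$ yields $\sum_{i=1}^{k-1}\prbuyervalue{i}q_1 b_i=S$, whence $GFT_k=S+T$ and $GFT(\mathrm{OPT})=2S+T$, so the approximation ratio of the best implementable mechanism is $\frac{2S+T}{S+T}=1+\frac{S}{S+T}$. It then suffices to choose the free parameters so that $S=\frac{H_k-1}{2}T$, i.e.\ $\sum_{i=1}^{k-1}\prbuyervalue{i}b_i=\frac{H_k-1}{2}\,\prbuyervalue{k}b_k$; this makes the ratio equal $1+\frac{H_k-1}{H_k+1}=\frac{2H_k}{H_k+1}$, which tends to $2$ as $k\to\infty$ since $H_k\to\infty$ (the $\varepsilon$ slack perturbs it only by $O(\varepsilon)$ and is sent to $0$ first).

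I expect the main obstacle to be the parameter engineering, that is, the analogues of Lemmas~\ref{cor-choice-of-pars-lemma} and~\ref{k-k-paramerts-fit-dist}: one must simultaneously (i) satisfy the standard-form inequalities of Condition~\ref{infeasible-k-plus-1-allocations}, which rule out every allocation trading $k$ or more times besides the favorable ones; (ii) keep the seller's nonzero values large enough that $s_2>b_{k-1}$ while still small enough that $s_k<b_k$; and (iii) hit the identity $\sum_{i=1}^{k-1}\prbuyervalue{i}b_i=\frac{H_k-1}{2}\prbuyervalue{k}b_k$ that produces the harmonic number. A workable choice is geometrically decaying buyer probabilities together with buyer values normalized so that $\prbuyervalue{i}b_i$ is proportional to $\frac{1}{i+1}$ (so that $\sum_{i=1}^{k-1}\prbuyervalue{i}b_i$ is proportional to $H_k-1$), and a near-equal-revenue seller marginal, which keeps the $s_j$ increasing; the rest is the routine, if lengthy, verification that all of (i), (ii) and (iii) hold for every $k\ge 2$ once $\varepsilon$ is small enough. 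Everything downstream --- the reduction to $GFT_k$ via Claim~\ref{standard-form-claim}, the two evaluations, and the limit $k\to\infty$ --- is immediate from the $L$-shaped machinery already in place.
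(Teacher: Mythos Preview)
Your proposal is correct and follows essentially the same approach as the paper: an independent $L$-shaped distribution put in standard form, Claim~\ref{standard-form-claim} to reduce to $GFT_k$, the evaluation $GFT_k=q_1\sum_i\prbuyervalue{i}b_i$ via the fixed-price-$0$ rule, and the identity $\sum_{i=1}^{k-1}\prbuyervalue{i}q_1 b_i=S$ from Condition~\ref{same-welfare-cond} to get the ratio $1+S/(S+T)$. The paper's concrete parameter choice is exactly the one you anticipate --- $\prbuyervalue{i}=2^{-i}$, $b_i=2^i/(i+1)$ so $\prbuyervalue{i}b_i=1/(i+1)$, uniform $q_i=1/k$, and $b_k=2/\prbuyervalue{k}+\varepsilon$ --- after which the analogues of Lemmas~\ref{cor-choice-of-pars-lemma} and~\ref{k-k-paramerts-fit-dist} (including $s_2>b_{k-1}$) are the routine verifications you flagged.
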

This ratio is tight for $L$-shaped distributions. To see this, consider two allocation rules: the first trades the item when $s=0$, and the second trades the item when $b=b_k$. Both allocation rules can be implemented by fixed price mechanisms: the first sets a price of $0$ and the second a price of $s_k$. The sum of the gains from trade of the two mechanisms is the optimal gains from trade, so at least one of them guarantees a ratio of $2$.    

\begin{figure}[H]
    \centering
    \renewcommand{\arraystretch}{1.4}
\begin{NiceTabularX}{12cm}{YYYYYY}[hvlines]
\CodeBefore
  \begin{tikzpicture}
    \cellcolor{tradepossiblecolor}{2-2,3-2, 4-2, 5-2, 6-2, 6-3, 6-4, 6-5, 6-6}
  \end{tikzpicture}
\Body
    \diagbox{ buyer}{seller } &$s_1 = 0$  & $s_2$  & $\dots$ & $s_{k-1}$ & $s_k$  \\
    $b_1 = 1$ &  ${\prbuyervalue{1}} \cdot q_1^*$ & ${\prbuyervalue{1}}\cdot q_2$ & $\dots$ & ${\prbuyervalue{1}} \cdot q_{k-1}$ & ${\prbuyervalue{1}} \cdot q_k$\\
    $b_2$ & ${\prbuyervalue{2}}\cdot q_1^*$ & ${\prbuyervalue{2}}\cdot q_2$ & $\dots$ & ${\prbuyervalue{2}}\cdot q_{k-1}$ & ${\prbuyervalue{2}}\cdot q_k$\\
    $\vdots$ &  ${\vdots}$ & $\vdots$ & $\dots$ & $\vdots$ & $\vdots$\\
    $b_{k-1}$ &   ${\prbuyervalue{k-1}}\cdot q_1 ^*$ & ${\prbuyervalue{k-1}}\cdot q_2$ & $\dots$ & ${\prbuyervalue{k-1}}\cdot q_{k-1} $ & ${\prbuyervalue{k-1}}\cdot q_k$\\
    $b_k$ &   ${\prbuyervalue{k}\cdot q_1}^*$ & ${\prbuyervalue{k}\cdot q_2}^*$ & $\dots$ & ${\prbuyervalue{k}\cdot q_{k-1}}^*$ & ${\prbuyervalue{k}\cdot q_k}^*$\\
    \end{NiceTabularX}
    \caption{An independent L-shaped distribution and the allocation function that maximizes its gains from trade. Thecolored cellsd in \tradepossiblecolorname correspond to instances in which a trade can occur.}
    \label{ind:k-k-gft}
\end{figure}

We choose values for the parameters of the distribution $\mathcal{F}_k$ such that $\mathcal{F}_k$ is in standard $L$-shaped form (Definition~\ref{standard-L-distribution}). Let $\varepsilon > 0$ be small enough. We set the values:
\begin{equation}\label{k-k-gft-prob-values}
    \prbuyervalue{i} = \begin{cases}
                 \frac{1}{2^i} &  1 \leq i \leq k-1 ;\\
                 \frac{1}{2^{k-1}} & i = k.
           \end{cases} \quad
    q_i = \frac{1}{k} \quad 
            b_i = \begin{cases}
                 \frac{2^{i}}{1+i} &  1 \leq i \leq k-1 ;\\
                 \frac{2}{\prbuyervalue{k}} + \varepsilon & i = k.
           \end{cases}           
\end{equation}

Next, we select the values for $s_2, \dots s_k$ such that Condition~\ref{same-welfare-cond} in Definition~\ref{standard-L-distribution} is met, for every $1\leq i\leq k-1$:
\begin{equation}\label{k-k-gft-s-values}
\begin{aligned}
    & \quad \Pr(0, b_i)\cdot b_i = \Pr(s_{i+1}, b_k) \cdot (b_k -s_{i+1})\\
    \iff & \quad \prbuyervalue{i}\cdot q_1 \cdot b_i = \prbuyervalue{k}\cdot q_{i+1} \cdot (b_k -s_{i+1})\\
    \iff & \quad s_{i+1} = b_k -\frac{b_i\cdot\prbuyervalue{i}\cdot q_1}{\prbuyervalue{x}\cdot q_{i+1}}\\
    \underbrace{\iff}_{~\eqref{k-k-gft-prob-values}} & \quad s_{i+1} = b_k -\frac{b_i\cdot\prbuyervalue{i}}{\prbuyervalue{x}} \\
    \iff & \quad  s_j = b_k -\frac{\prbuyervalue{j-1}\cdot b_{j-1}}{\prbuyervalue{k}} \quad  \forall 2 \leq j \leq k.
\end{aligned}
\end{equation}

\begin{lemma}\label{cor-choice-of-pars-lemma-gft}
    For the choice of parameters in Equations~\ref{k-k-gft-prob-values} and \ref{k-k-gft-s-values}, the first condition of Definition~\ref{standard-L-distribution} holds.
\end{lemma}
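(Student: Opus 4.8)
The plan is to imitate the proof of Lemma~\ref{cor-choice-of-pars-lemma}: substitute the chosen values into Condition~\ref{infeasible-k-plus-1-allocations} of Definition~\ref{standard-L-distribution} and reduce it to an elementary inequality. First I would use independence of $\mathcal{F}_k$, so that $\Pr(s=s_r\mid b=b_k)=q_r$, $\Pr(s=0\mid b=b_k)=q_1$, $\Pr(b=b_r\mid s=0)=\prbuyervalue{r}$, and $\Pr(b=b_k\mid s=0)=\prbuyervalue{k}$. Since $q_r=\frac{1}{k}$ for all $r$, the coefficient $\frac{\sum_{r=2}^j q_r}{q_1}$ collapses to $j-1$; and since $\prbuyervalue{r}=2^{-r}$ for $r\le k-1$ while $\prbuyervalue{k}=2^{-(k-1)}$, a geometric sum gives $\frac{\sum_{r=i}^{k-1}\prbuyervalue{r}}{\prbuyervalue{k}}=2^{k-i}-1$. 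With these substitutions the condition, for a fixed pair $1\le i\le k-1$ and $i+1\le j<k$, becomes $(j-1)(b_k-s_j)+b_i<s_j-b_i(2^{k-i}-1)$, i.e.\ $(j-1)(b_k-s_j)+b_i\cdot 2^{k-i}<s_j$.

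Next I would eliminate $s_j$ via Equation~\eqref{k-k-gft-s-values}: there $b_k-s_j=\frac{\prbuyervalue{j-1}b_{j-1}}{\prbuyervalue{k}}=2^{k-j}b_{j-1}$, and since $j-1\le k-2$ we may use $b_{j-1}=\frac{2^{j-1}}{j}$, so $b_k-s_j=\frac{2^{k-1}}{j}$. Substituting this and combining the two copies of $\frac{2^{k-1}}{j}$ that then appear, the inequality reduces to $2^{k-1}+b_i\cdot 2^{k-i}<b_k$. Finally, $b_i=\frac{2^i}{1+i}$ gives $b_i\cdot 2^{k-i}=\frac{2^k}{1+i}$, and $b_k=\frac{2}{\prbuyervalue{k}}+\varepsilon=2^k+\varepsilon$, so the claim becomes $2^{k-1}+\frac{2^k}{1+i}<2^k+\varepsilon$; since $i\ge 1$ we have $\frac{2^k}{1+i}\le 2^{k-1}$, so the left-hand side is at most $2^k<2^k+\varepsilon$, which finishes the proof.

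There is no genuine obstacle here — this is a direct computation — but the one place that needs care is index bookkeeping: one must check that $j-1$ and $i$ fall in the ranges where the closed forms for $\prbuyervalue{\cdot}$ and $b_{\cdot}$ apply (both follow from $i+1\le j<k$), and that the resulting inequality is \emph{strict}. Strictness is exactly why the additive $\varepsilon>0$ appears in $b_k$: for $i=1$ the bound is the equality $2^{k-1}+2^{k-1}=2^k$, so $\varepsilon>0$ is needed there, whereas for $i\ge 2$ the inequality $\frac{2^k}{1+i}<2^{k-1}$ is already strict and $\varepsilon$ is not needed.
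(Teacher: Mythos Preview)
Your proof is correct and follows essentially the same approach as the paper's: both substitute the parameter values from Equations~\eqref{k-k-gft-prob-values} and \eqref{k-k-gft-s-values} into Condition~\ref{infeasible-k-plus-1-allocations}, use $s_j=b_k-\frac{\prbuyervalue{j-1}b_{j-1}}{\prbuyervalue{k}}$ to eliminate $s_j$, and reduce to an elementary inequality that holds with equality at $i=1$ (so that the additive $\varepsilon$ in $b_k$ is precisely what provides strictness). The only cosmetic difference is that the paper keeps the expression symbolic a bit longer and splits it into two pieces each bounded by $1$ (summing to at most $2<2+\prbuyervalue{k}\varepsilon$), whereas you substitute the numerical values for $q_r$ and $\prbuyervalue{r}$ immediately and arrive at the equivalent inequality $2^{k-1}+\frac{2^k}{1+i}<2^k+\varepsilon$.
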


%By Lemma~\ref{k-k-paramerts-fit-dist-gft}, our values yield a legal distribution. In the proof, we assume order over the values, i.e., $b_1 < b_2 < \dots < b_k$, $s_1 < s_2 < \dots < s_k$, $s_2>b_{k-1}$, and $s_k < b_k$, 

\begin{lemma}\label{k-k-paramerts-fit-dist-gft}
     For the choice of parameters in Equation~\ref{k-k-gft-prob-values} and Equation~\ref{k-k-gft-s-values}, we get a joint distribution over the values of the buyer and seller, and $b_1 < \dots < b_k$, $s_1 < \dots < s_k$, $s_2 > b_{k-1}$, $s_k < b_k$, for $k\geq 2$. 
\end{lemma}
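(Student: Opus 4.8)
The plan is to mirror the proof of Lemma~\ref{k-k-paramerts-fit-dist} from the welfare section, carrying out a sequence of elementary verifications. The one computation that makes everything go through cleanly is the identity $\prbuyervalue{i}\, b_i = \frac{1}{1+i}$ for $1 \leq i \leq k-1$, which follows by substituting the definitions from Equation~\ref{k-k-gft-prob-values}; together with the rewriting $s_j = b_k - \frac{\prbuyervalue{j-1} b_{j-1}}{\prbuyervalue{k}}$ from Equation~\ref{k-k-gft-s-values} and the value $\frac{1}{\prbuyervalue{k}} = 2^{k-1}$, it reduces each of the required inequalities to a trivial statement about small integers or powers of two. Note also that $b_1 = \frac{2^1}{1+1} = 1$, consistent with the description of the distribution.

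First I would check legality of the distribution: the buyer's marginal sums to $\sum_{i=1}^{k-1} 2^{-i} + 2^{-(k-1)} = 1$, the seller's marginal (the same for every buyer value, since the distribution is independent) sums to $k \cdot \frac{1}{k} = 1$, and all table entries are products of strictly positive numbers. Next, monotonicity of the $b_i$'s: for $1 \leq i \leq k-2$, $b_i < b_{i+1} \iff 2^i(i+2) < 2^{i+1}(i+1) \iff i > 0$; and $b_{k-1} = \frac{2^{k-1}}{k} \leq 2^{k-1} < 2^k \leq b_k$. Then, monotonicity of the $s_j$'s: for $2 \leq j \leq k-1$, using the rewriting of $s_j$ the inequality $s_j < s_{j+1}$ is equivalent to $\prbuyervalue{j} b_j < \prbuyervalue{j-1} b_{j-1}$, i.e. $\frac{1}{j+1} < \frac{1}{j}$, which holds; and $s_1 = 0 < s_2$ because $\frac{\prbuyervalue{1} b_1}{\prbuyervalue{k}} = 2^{k-2} < 2^k \leq b_k$. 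Finally, the two separations: $s_k < b_k$ is immediate since $\frac{\prbuyervalue{k-1} b_{k-1}}{\prbuyervalue{k}} > 0$, and $s_2 > b_{k-1}$ reduces to $b_k - 2^{k-2} > \frac{2^{k-1}}{k}$, which holds because $b_k - 2^{k-2} \geq 2^k - 2^{k-2} = 3 \cdot 2^{k-2}$ while $\frac{2^{k-1}}{k} \leq 2^{k-2}$ for $k \geq 2$.

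I do not expect a genuine obstacle here: the lemma is a bookkeeping check, and the additive $\varepsilon$ in $b_k$ plays no role in it (it is only there to get strict inequality in the limiting argument of Theorem~\ref{k-k-cor-gft-thm}). The only place where any slack is consumed is the chain $b_{k-1} < b_k$ and $s_2 > b_{k-1}$, which is precisely why $b_k$ is set to $\frac{2}{\prbuyervalue{k}} + \varepsilon = 2^k + \varepsilon$, comfortably above both $b_{k-1} = \frac{2^{k-1}}{k}$ and $\frac{\prbuyervalue{1} b_1}{\prbuyervalue{k}} = 2^{k-2}$; once that is noted the rest is automatic.
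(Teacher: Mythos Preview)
Your proposal is correct and follows essentially the same approach as the paper's proof: both verify, item by item, that the marginals sum to $1$, that the $b_i$'s and $s_j$'s are strictly increasing, and that $s_2 > b_{k-1}$ and $s_k < b_k$. Your use of the identity $\prbuyervalue{i}\,b_i = \frac{1}{1+i}$ streamlines the monotonicity check for the $s_j$'s slightly compared to the paper, but the structure and substance of the argument are the same.
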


Observe that these two lemmas and our choice of parameters in Equation~\ref{k-k-gft-s-values} imply a family of $L$-shaped distributions in standard form (Definition~\ref{L-shaped-k-k}). Thus, we can apply Claim~\ref{standard-form-claim} and analyze the  
approximation ratio $\frac{\welfare{OPT}{\mathcal{F}_k}}{GFT_{k}}$. Recall that in an $L$-shaped distribution, the only instances in which trade can occur are in the ``$L$''. Thus, instances in the ``$L$'' are the only instances that affect the gains from trade.
We now analyze the approximation ratio and then prove the two lemmas.

By Claim~\ref{standard-form-claim}, the gains from trade of every mechanism $(b_i, s_j, 1)$ that trades the item in $k$ instances is the same, and we denoted it by $GFT_k$. The allocation rule $(b_1, 0, 1)$ trades the item only in the instances where the seller's value is $0$, and thus its gains from trade is $q_1\sum_{i=1}^k \prbuyervalue{i}b_i$, and $GFT_k = q_1\sum_{i=1}^k \prbuyervalue{i}b_i$.

\begin{align*}
    \frac{\welfare{OPT}{\mathcal{F}_k}}{GFT_{k}}  & = \frac{q_1\cdot \sum_{i=1}^k \prbuyervalue{i}\cdot b_i + \prbuyervalue{k}\cdot \sum_{j=2}^kq_j\cdot (b_k - s_j)}{q_1\cdot \sum_{i=1}^k \prbuyervalue{i}b_i} = 1+ \frac{\prbuyervalue{k}\sum_{j=2}^kq_j(b_k-b_k+\frac{\prbuyervalue{j-1}b_{j-1}}{\prbuyervalue{k}})}{q_1\cdot \sum_{i=1}^k \prbuyervalue{i}b_i}\\
    & = 1+ \frac{q_1\cdot \sum_{i=1}^{k-1} \prbuyervalue{i}\cdot b_i}{q_1\cdot \sum_{i=1}^k \prbuyervalue{i}\cdot b_i} = 1+ \frac{\sum_{i=1}^k\frac{1}{1+i}}{\sum_{i=1}^{k-1}\frac{1}{1+i} + 2+ \varepsilon\cdot \prbuyervalue{k}}
     = 1+ \frac{H_{k}-1}{H_k -1 +2 +\varepsilon \cdot \prbuyervalue{k}}.  
\end{align*}
As $\varepsilon$ approaches $0$, the bound $\frac{\welfare{OPT}{\mathcal{F}_k}}{GFT_{k}}$ approaches $ 1+ \frac{H_{k}-1}{H_k -1 +2} $, and as $k$ approaches $\infty$, $\frac{\welfare{OPT}{\mathcal{F}_k}}{GFT_{k}}$ approaches $2$, due to the fact that $\lim_{n \to \infty}(H_n -\ln{n})=\gamma$, where $\gamma$ is Euler's constant. This proves the theorem assuming Lemmas \ref{cor-choice-of-pars-lemma-gft} and \ref{k-k-paramerts-fit-dist-gft}. We now provide the proofs of these lemmas.

%\end{proof}

\begin{proof}[Proof of Lemma~\ref{cor-choice-of-pars-lemma-gft}]
The first condition of Definition~\ref{L-shaped-k-k} is that for every $1\leq i\leq k-1$ and every $i+1 \leq j<k$: 
    $$
    \frac{\sum_{r=2}^j\Pr(s=s_r| b=b_k)(b_k-s_j)}{\Pr(s=0|b=b_k)} +b_i < s_j -\frac{b_i\cdot \sum_{r=i}^{k-1}\Pr(b=b_r| s=0)}{\Pr(b=b_k|s=0)}.
    $$ 
    For our independent distributions, it is equal to:
    \begin{align*}
         b_i + (b_k-s_j)\cdot \frac{(q_2+\dots +q_j)}{q_1} < & s_j - b_i\cdot (\frac{\prbuyervalue{i}+ \dots + \prbuyervalue{k-1}}{\prbuyervalue{x}})\\ & \underbrace{\iff}_{~\eqref{k-k-gft-s-values}} \\
        b_i(\prbuyervalue{k} + \prbuyervalue{i} + \dots + \prbuyervalue{k-1}) + \frac{\prbuyervalue{j-1}b_{j-1}}{q_j}(q_1+\dots +q_j)  <& b_k\cdot \prbuyervalue{k}  \underbrace{=}_{~\eqref{k-k-gft-prob-values}} 2 +\prbuyervalue{k}\varepsilon.        
    \end{align*}
    Then, it is enough to show that for every $1 \leq i\leq k-1$ and $i+1 \leq j < k $, the two inequalities hold:
    \begin{subequations}
    \begin{equation}\label{cor-lemma-choice-of-pars-first-gft}
        b_i\cdot (\prbuyervalue{k} + \prbuyervalue{i} + \dots + \prbuyervalue{k-1}) \leq 1
    \end{equation}
    \begin{equation}\label{cor-lemma-choice-of-pars-second-gft}
        \frac{\prbuyervalue{j-1}\cdot b_{j-1}} {q_j}\cdot (q_1+\dots +q_j) \leq   1
    \end{equation}
    \end{subequations}
    We start with the first Inequality~\ref{cor-lemma-choice-of-pars-first-gft} using our values for the parameters (Equation~\ref{k-k-gft-prob-values} and Equation~\ref{k-k-gft-s-values}):
    % We show that this condition holds for every $1 \leq i\leq k-1$ and $i+1 \leq j < k $, for our choice of parameters. 
    \begin{align*}
          b_i\cdot (\prbuyervalue{k} + \prbuyervalue{i} + \dots + \prbuyervalue{k-1}) = \frac{2^i}{1+i}\cdot \frac{1}{2^{i-1}} = \frac{2}{i+1} \leq 1\iff& 2\leq i+1 \iff 1 \leq i.
    \end{align*}
    We now prove the second Inequality~\ref{cor-lemma-choice-of-pars-second-gft}, for every $2\leq i+1\leq j \leq k-1$:
    \begin{align*}
         \frac{\prbuyervalue{j-1}\cdot b_{j-1}}{q_j}\cdot (q_1+\dots +q_j) = \frac{1}{j-1+1}\cdot \frac{\frac{j}{k}}{\frac{1}{k}} = 1 \leq   1.
    \end{align*}
\end{proof}

\begin{proof}[Proof of Lemma~\ref{k-k-paramerts-fit-dist-gft}]
We first verify that the buyer's marginal probability distribution sums up to 1: $\sum_{i=1}^k \prbuyervalue{i} = \sum_{i=1}^{k-1}\frac{1}{2^i} + \frac{1}{2^{k-1}} = 1$, and the seller's marginal distribution sums up to 1: $\sum_{i=1}^k q_i = \frac{1}{k}\cdot k =1$ for the values chosen in Equation~\ref{k-k-gft-prob-values}.

Next, we verify that indeed $b_i<b_{i+1}$, for every $1 \leq i \leq k-1$. 
For $1 \leq i \leq k-2$ we have:
\begin{align*}
    b_i < b_{i+1} \iff \frac{2^i}{i+1} < \frac{2^{i+1}}{i+1+1} \iff i+2 < 2(i+1) \iff i< 2i \iff 1<2.
\end{align*}
For $i=k-1$, we get:
\begin{align*}
    b_{k-1}< b_k \iff \frac{2^{k-1}}{k-1+1} < \frac{2}{\prbuyervalue{k}}+\varepsilon \iff \frac{2^{k-1}}{k} < \frac{2}{\frac{1}{2^{k-1}}} +\varepsilon \iff \frac{2^{k-1}}{k} < 2^k +\varepsilon. 
\end{align*}
Next, we show that $s_i < s_{i+1}$, for every $1\leq i \leq k-1$. 
For $i =1$, we have:
\begin{align*}
    s_1 < s_2 & \iff 0 < \frac{2+ \varepsilon\cdot \prbuyervalue{k} - \frac{1}{1+1}}{\prbuyervalue{k}} = \frac{1.5}{\prbuyervalue{k}}+ \varepsilon.
\end{align*}
and the inequality clearly holds. For $2 \leq i \leq k-1$, we get:
\begin{align*}
    s_i < s_{i+1} & \iff  \frac{2+ \varepsilon\cdot \prbuyervalue{k} - \frac{1}{i+1-1}}{\prbuyervalue{k}}< \frac{2+ \varepsilon\cdot \prbuyervalue{k} - \frac{1}{i+1-1+1}}{\prbuyervalue{k}} \iff -\frac{1}{i}< -\frac{1}{i+1} \iff \frac{1}{i} > \frac{1}{i+1} \iff 1> 0.
\end{align*}

Next, we show that $s_2 > b_{k-1}$:
\begin{align*}
     \frac{2+ \varepsilon\cdot \prbuyervalue{k} - \frac{1}{2}}{\prbuyervalue{k}} > \frac{2^{k-1}}{k-1+1} \iff \varepsilon + (\frac{3}{2})\cdot 2^{k-1} > \frac{2^{k-1}}{k}.
\end{align*}
Then it is enough to show that $\frac{3}{2} > \frac{1}{k}$, which holds for $k > \frac{2}{3}$. Finally, we are left with proving $s_k < b_k$:
\begin{align*}
    s_k < b_k \iff \frac{2+\varepsilon\cdot\prbuyervalue{x} -\frac{1}{k-1+1}}{\prbuyervalue{k}} < b_k \iff 2+\varepsilon \cdot \prbuyervalue{k}-\frac{1}{k} < 2+\varepsilon \cdot \prbuyervalue{k} \iff -\frac{1}{k} < 0.
\end{align*}
and the last inequality clearly holds as $k \geq 1$.
\end{proof}

\bibliography{bilateral-trade}
\bibliographystyle{alpha}

\appendix

\section{The Lack of Power of Dominant-Strategy Mechanisms} \label{sec-dominant-strategy-lb}

Consider the discrete joint distribution $\distributionlbdsic{k}$.
In $\distributionlbdsic{k}$, the set of possible values of the buyer is $\{0, k, k^2, \dots ,k^{k}\}$ and the set of possible values of the seller is $\{0,1,\ldots, k^{k-1}\}$. The probability that an instance $(s,b)$ occurs is:

$$\distributionlbdsicpdf{k}(s,b) = \begin{cases}
        \frac{1}{b} & s=\frac b k, b \in \{k, k^2, \dots ,k^{k}\};\\
        1- \sum_{i=1}^{k} \frac{1}{k^i}-\varepsilon &  b=0, s=0;\\
        \frac{\varepsilon}{k(k+1)} & b \in \{0, k, k^2, \dots ,k^{k}\}, s \in \{0, 1, k, k^2, \dots ,k^{k-1}\}, s\neq \frac{b}{k}, s\neq 0 \text{ or } b \neq 0 ;\\
        0 & \text{otherwise.}
        \end{cases}
$$

Where $0<\varepsilon<\frac{1}{k^k}$.

\begin{theorem} \label{unbounded-dsic}
Every dominant strategy incentive compatible mechanism for $\mathcal{F}_{k}$ provides an approximation ratio of $\Omega(k)$ in $\distributionlbdsic{k}$.
\end{theorem}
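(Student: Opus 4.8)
The plan is to show that any dominant strategy incentive compatible (DSIC) mechanism for $\distributionlbdsic{k}$ must behave, essentially, like a fixed-price mechanism, and then argue that no single fixed price can capture more than a $O(1/k)$ fraction of the optimal welfare on this instance. The first ingredient is the observation (attributed in the excerpt to \cite{BD14}) that a mechanism in which both players have a dominant strategy is a fixed-price mechanism: there is a single price $p^*$ such that trade occurs exactly when $s \le p^* \le b$. So I would begin by invoking this characterization, reducing the problem to analyzing the welfare of an arbitrary fixed price $p^*$ against $\welfare{OPT}{\distributionlbdsic{k}}$.

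Next I would compute $\welfare{OPT}{\distributionlbdsic{k}}$. The dominant contribution comes from the ``diagonal'' instances $(s,b) = (b/k, b)$ for $b \in \{k, k^2, \dots, k^k\}$, each occurring with probability $1/b$; in each such instance the efficient choice is to trade, contributing $b \cdot \frac 1 b = 1$ to the welfare, for a total of $k$ from these $k$ instances. The off-diagonal $\varepsilon$-mass instances contribute a negligible amount (at most something like $\varepsilon k^k \cdot \frac{\varepsilon}{k(k+1)}$, which is $o(1)$ by the choice $\varepsilon < 1/k^k$), and the atom at $(0,0)$ contributes $0$. Hence $\welfare{OPT}{\distributionlbdsic{k}} = \Theta(k)$.

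Then I would bound the welfare of a fixed price $p^*$. The key point is that a fixed price can be ``right'' for at most one of the diagonal instances at a time, in the following sense: on the diagonal instance $(b/k, b)$, trade happens under price $p^*$ iff $b/k \le p^* \le b$, i.e. iff $p^* \in [b/k, b]$. Because the buyer values are geometrically spaced by a factor of $k$, these intervals $[b/k, b]$ for distinct $b \in \{k,\dots,k^k\}$ are essentially disjoint (consecutive ones share only an endpoint), so any fixed $p^*$ lies in at most one of them, and thus trade occurs on at most one diagonal instance. On the at most one diagonal instance where trade occurs, the welfare gained is the buyer's value $b$ times its probability $1/b = 1$ (versus losing the seller's value $b/k$ times $1/b = 1/k$, a net gain of $1 - 1/k < 1$). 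On every other diagonal instance, no trade occurs, so the welfare is the seller's value $b/k$ times probability $1/b$, which is $1/k$. Summing: the fixed-price mechanism collects at most $1$ from one diagonal instance plus $(k-1)\cdot \frac 1 k < 1$ from the rest, plus a negligible $o(1)$ from the $\varepsilon$-instances and $0$ from $(0,0)$ — a total of $O(1)$. Therefore the approximation ratio is $\welfare{OPT}{}/\welfare{M}{} = \Theta(k)/O(1) = \Omega(k)$, proving the theorem.

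**Main obstacle.** The only subtlety I anticipate is handling boundary cases in the ``disjointness of intervals'' argument: consecutive intervals $[k^{i-1}, k^i]$ and $[k^i, k^{i+1}]$ meet at the point $p^* = k^i$, and one must check that even if $p^*$ equals such a shared endpoint the count of trading diagonal instances is still $O(1)$ (it is at most $2$, which changes nothing). One also needs to verify the fixed-price characterization genuinely applies here — that DSIC for \emph{both} players forces the price to be independent of the reported values — and to double-check that the small-probability off-diagonal instances (including those with $b = 0$, where any trade is welfare-neutral or harmful) cannot be exploited to recover $\Omega(1/k)$-fraction extra welfare; the bound $\varepsilon < 1/k^k$ is exactly what makes their total contribution vanish. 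None of these steps is deep, which matches the paper's remark that this impossibility is ``technically simpler'' than the one-sided case.
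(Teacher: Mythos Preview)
Your proposal is correct and follows essentially the same route as the paper: invoke the fixed-price characterization of two-sided DSIC mechanisms, observe that the optimal welfare is $\Theta(k)$ from the diagonal instances $(b/k,b)$, and show that any fixed price $p^*$ can land in at most a constant number of the intervals $[b/k,b]$ (the paper proves ``at most two'' as a short lemma, exactly the endpoint issue you flag), so the mechanism's welfare is $O(1)$. One small inaccuracy: the total contribution of the $\varepsilon$-mass off-diagonal instances is bounded by $\varepsilon k^k < 1$, hence $O(1)$ rather than $o(1)$, but this does not affect the $\Omega(k)$ conclusion.
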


As discussed in the introduction, the impossibility immediately implies to ``universally truthful'' mechanisms, i.e., probability distributions over dominant-strategy incentive compatible mechanisms (that is, a mechanism that chooses a fixed price at random, which is that way that the state-of-the-art mechanisms for \emph{independent} values \cite{CSTOC,CW-STOC} are stated).

\begin{proof}[Proof of Theorem~\ref{unbounded-dsic}]
    We prove that for every large enough $k \in \mathbb N$, every dominant strategy incentive compatible mechanism for $\mathcal{F}_{k}$ provides an approximation ratio that is no better than $\frac{k}{4}$.
    Recall that every dominant strategy incentive compatible mechanism is a fixed price mechanism \cite{BD14}. Thus, we fix a mechanism $M$ for the distribution $\distributionlbdsic{k}$ and denote its fixed price by $p$.
    
    \begin{lemma}\label{dsic-unbounded-trade-at-one-point}
        There are at most two values of $b \in \{k, k^2, \dots ,{k}^{k}\}$ such that $b \geq p $ and $\frac{b}{k} \leq p$.
    \end{lemma}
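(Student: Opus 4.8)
The plan is to reduce the two inequalities to a single two-sided bound on $b$ and then exploit the fact that the candidate buyer values $\{k,k^2,\dots,k^k\}$ form a geometric progression with ratio $k$. First I would rewrite $\frac{b}{k}\le p$ as $b\le kp$, so that any $b$ satisfying both $b\ge p$ and $\frac{b}{k}\le p$ lies in the closed interval $[p,kp]$. The key observation is that the two endpoints of this interval have ratio exactly $k$, whereas any two \emph{distinct} elements of $\{k,k^2,\dots,k^k\}$ have ratio at least $k$.

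Concretely, suppose toward a contradiction that there are three distinct indices $i_1<i_2<i_3$ (with $1\le i_1,i_2,i_3\le k$) such that $k^{i_1},k^{i_2},k^{i_3}$ all satisfy the two conditions, hence all lie in $[p,kp]$. Then $i_3-i_1\ge 2$, so $\frac{k^{i_3}}{k^{i_1}}=k^{i_3-i_1}\ge k^2$. On the other hand, since both $k^{i_1}$ and $k^{i_3}$ lie in $[p,kp]$, we have $\frac{k^{i_3}}{k^{i_1}}\le \frac{kp}{p}=k$ (when $p>0$; and $p=0$ is trivial since then $b\le kp$ fails for all $b\in\{k,\dots,k^k\}$). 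This forces $k^2\le k$, contradicting $k\ge 2$. Therefore at most two of the powers $k,k^2,\dots,k^k$ can satisfy $b\ge p$ and $\frac{b}{k}\le p$, and moreover the surviving indices must be consecutive.

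I do not expect any real obstacle here: the argument is a one-line interval-counting estimate, and its only subtlety is that the bound ``two'' is genuinely necessary and tight — for instance, if $p=k^i$ for some $1\le i\le k-1$, then both $b=k^i$ (with $b=p$ and $\frac{b}{k}=k^{i-1}\le p$) and $b=k^{i+1}$ (with $b\ge p$ and $\frac{b}{k}=k^i=p$) satisfy the constraints — so one should be careful not to overclaim a stronger ``at most one.'' The content of the lemma is exactly this: the fixed price $p$ of $M$ can "straddle" at most two of the diagonal instances $(\frac b k, b)$, which in the remaining argument will be what limits the welfare that a fixed-price (hence dominant-strategy) mechanism can extract from $\distributionlbdsic{k}$.
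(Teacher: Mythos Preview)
Your proof is correct and uses essentially the same geometric observation as the paper: the conditions $b\ge p$ and $b/k\le p$ pin $b$ to the interval $[p,kp]$, whose multiplicative width is $k$, so at most two consecutive powers of $k$ can fit. The paper phrases this constructively---identifying the smallest qualifying value $b_p$ and showing only $b_p$ and $b_p\cdot k$ can work---whereas you argue by contradiction via the ratio $k^{i_3}/k^{i_1}\ge k^2>k$, but the content is identical.
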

    
    \begin{proof}[Proof of Lemma~\ref{dsic-unbounded-trade-at-one-point}]
        Let $b_p\in \{k, k^2, \dots,{k}^{k}\}$ be the smallest value that is at least $p$. If no such value exists, then every value $b\in \{k, k^2, \dots,{k}^{k}\}$ is smaller than $p$, and thus no such value $b$ satisfies the condition in the statement ($b\geq p$), so we are done.
        
        Notice that for any value of $b$ satisfying the condition, it must be that $b\geq b_p$. Therefore, we only consider such values of $b$ from now on. 
        If $\frac{b_p}{k} > p$, then $\frac{b}{k}> p$ for all $b \geq b_p$, and the lemma follows immediately. Hence, assume that $\frac{b_p}{k} \leq p$. For every value of $b \geq b_p \cdot k^2$, we have $\frac{b}{k} \geq b_p \cdot k >p$, so at most two values of $b$ ($b_p$ and $b_p \cdot k$) can satisfy $\frac{b}{k} \leq p$.
    \end{proof}

    Consider an instance $(\frac{b}{k}, b)$ in the support of the distribution $\mathcal{F}_{k}$, with $b \in \{k, k^2, \dots,{k}^{k}\}$, its probability is $\frac{1}{b}$. The contribution of this instance to the expected welfare of a mechanism is $\frac{1}{b}\cdot b = 1$ if the item is traded in this instance, and $\frac{1}{b}\cdot\frac{b}{k} = \frac{1}{k}$ if the item is not traded. The expected welfare of instance where $s\neq \frac{b}{k}$ is at most $\frac{\varepsilon}{k(k+1)}\cdot k^k < \frac{1}{k(k+1)}$, and there are $k(k+1)$  such instances, so it can only increase the expected welfare by $1$.

    Therefore, the optimal welfare of the distribution $\mathcal{F}_{k}$ is at least $k$.
    By Lemma~\ref{dsic-unbounded-trade-at-one-point}, in the mechanism $M$, there are at most two values of $b$ in $ \{k, k^2, \dots ,{k}^{k}\}$ for which the item can be traded. Therefore the expected welfare of $M$ is at most: $2 + \frac{k-2}{k} +1 $ . When $k\geq 3$ the approximation ratio of $M$ is bounded from above by:

    \begin{align*}
        \frac{k}{3+\frac{(k-2)}{k}} = \frac{k}{\frac{4k-2}{k}}= \frac{k^2}{4k-2}>\frac{k}{4}.
    \end{align*}
\end{proof}

\section{A Gap Between Randomized and Deterministic Mechanisms}\label{gap-section}

In this section, we show a gap of $1.113$ between the approximation ratio of a randomized Bayesian incentive compatible mechanism and a deterministic Bayesian incentive compatible mechanism, for approximating the welfare with independent distributions. As far as we know, this is the first proof of a gap between the power of randomized and deterministic mechanisms in the bilateral trade setting.

A randomized mechanism is a mechanism whose allocation function can be randomized, i.e., in every instance instead of simply trading the item or not, it may trade the item with some probability at some possibly random price. 
% \ascomment{what does it mean randomized price? If there is no trade then the price is 0 and if there is then the price is at least $s$.}

\begin{claim}
    There is a distribution $\mathcal{F}$ in which the values of the buyer and seller are independent, for which there exists a randomized Bayesian incentive compatible mechanism with approximation ratio of $1.00002$ to the optimal welfare. However, any deterministic Bayesian incentive compatible mechanism has approximation ratio no better than 1.113 to the optimal welfare.  
\end{claim}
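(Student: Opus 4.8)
The plan is to exhibit a single independent distribution $\mathcal F$ and do two things: (1) show a randomized Bayesian IC mechanism gets arbitrarily close to optimal (ratio $1.00002$), and (2) invoke the $2\times 2$ impossibility machinery already developed (Theorem~\ref{ind:2-2-lb-thm} and Lemmas~\ref{ind-no-opt-cond}, \ref{ind-no-diag-allocation-cond}, \ref{ind-same-welfare}) to get the $1.113$ lower bound for deterministic mechanisms. The natural candidate is exactly the simple distribution $\mathcal F$ from Subsection~\ref{2-2-ind}: buyer is $1$ w.p.\ $\prbuyervalue 1$ and $b_2$ w.p.\ $\prbuyervalue 2$, seller is $0$ w.p.\ $q_1$ and $s_2$ w.p.\ $q_2$, with the parameter choices $\prbuyervalue 1 = 0.57$, $q_1 = 0.716$, $b_2 \approx 6.993$, $s_2\approx 3.651$ that made the deterministic bound come out to $1.113$. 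The deterministic half of the claim is then immediate: it is literally Theorem~\ref{ind:2-2-lb-thm} applied to this $\mathcal F$. So the only real content is constructing the good randomized mechanism and checking its Bayesian IC and approximation ratio.

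For the randomized mechanism, the idea is to implement (a small perturbation of) the welfare-maximizing allocation of Figure~\ref{ind:opt-allocation} by randomizing the price in the single ``hard'' cell. Recall from the proof of Lemma~\ref{ind-no-opt-cond} that the obstruction to implementing the optimal allocation deterministically is a pair of incompatible constraints: seller-IC at $s=0$ forces the price $p(0,b_2)$ to be large (at least $s_2 - \prbuyervalue 1/\prbuyervalue 2$), while buyer-IC at $b=b_2$ forces it to be small (at most $\tfrac{q_2}{q_1}(b_2-s_2)+1$), and our parameters make the lower bound strictly exceed the upper bound by exactly $\varepsilon$. The plan is to relax the allocation slightly: in the cell $(0,b_2)$, trade with probability $1-\eta$ rather than $1$, at a carefully chosen (possibly randomized) price, and leave the other three cells as in Figure~\ref{ind:opt-allocation}. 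Reducing the trade probability in that cell by $\eta$ slackens the seller-IC constraint at $s=0$ (the seller's equilibrium payoff from truthful reporting drops, but so does what is relevant) — more precisely, it lets us charge an \emph{expected} price in $(0,b_2)$ that is simultaneously high enough in expectation to satisfy seller-IC and paired with a trade probability low enough that buyer-IC at $b_2$ still holds. With $\eta$ on the order of $\varepsilon$, the welfare loss relative to OPT is $O(\varepsilon)$, giving ratio $1 + O(\varepsilon) \le 1.00002$ once $\varepsilon$ is chosen small; the deterministic lower bound of $1.113$ is unaffected since Theorem~\ref{ind:2-2-lb-thm} already holds in the limit $\varepsilon\to 0$ and is robust to small $\varepsilon$.

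Concretely the steps are: (i) fix the simple distribution with the Theorem~\ref{ind:2-2-lb-thm} parameters and a small $\varepsilon>0$; (ii) write down the randomized mechanism — deterministic full trade in cells $(0,1)$, $(s_2,b_2)$ (prices $1$ and $s_2$ respectively, the ex-post IR extremes used in the lemmas), no trade in $(s_2,1)$, and in $(0,b_2)$ trade with probability $1-\eta$ at a randomized price whose support lies in $[s_2, b_2]$ with a chosen mean $\mu$; (iii) verify ex-post IR (each realized price is in $[s,b]$ for that cell); (iv) verify the four Bayesian IC inequalities — the buyer ones reduce to checking $b=b_2$ vs.\ $b'=1$ (and trivially $b=1$ vs.\ $b'=b_2$ since reporting $b_2$ would make the buyer trade at a price $\ge s_2 > 1$ at a loss), the seller ones to checking $s=0$ vs.\ $s'=s_2$ — and solve the resulting two linear inequalities in $(\eta,\mu)$ to show a feasible choice with $\eta = \Theta(\varepsilon)$ exists; (v) compute the welfare: it equals $\welfare{OPT}{\mathcal F}$ minus $\eta\cdot \prbuyervalue 2 q_1 (b_2-0) = \eta\,\prbuyervalue 2 q_1 b_2$, so the ratio is $1/(1 - \eta\,\prbuyervalue 2 q_1 b_2/\welfare{OPT}{\mathcal F})$, which is $\le 1.00002$ for $\varepsilon$ small enough; (vi) conclude, citing Theorem~\ref{ind:2-2-lb-thm} for the deterministic side. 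The main obstacle I anticipate is step (iv)–(v): getting the two IC inequalities to be \emph{simultaneously} satisfiable by a randomized price while keeping $\eta$ small enough that the welfare ratio stays below $1.00002$. This requires checking that the ``gap'' in Lemma~\ref{ind-no-opt-cond} is only order $\varepsilon$ (it is, by construction: $s_2 = b_2 q_2 + q_1/\prbuyervalue 2 + \varepsilon$ makes the violation exactly $\varepsilon$-sized) so that an $O(\varepsilon)$ reduction in trade probability suffices to close it; one should double-check the linear algebra and the direction of each inequality, and confirm the randomization can be taken to be a two-point distribution on $\{s_2, b_2\}$ (or even deterministic price with randomized allocation) so that ex-post IR is preserved. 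If the bookkeeping is delicate, a cleaner variant is to randomize only the allocation at a fixed price $p^\star\in[s_2,b_2]$ chosen to make both IC constraints hold with equality in the $\varepsilon\to 0$ limit; I would present whichever of these is least computation-heavy.
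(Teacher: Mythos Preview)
Your choice of distribution and the deterministic half are correct; that part is literally Theorem~\ref{ind:2-2-lb-thm}. The flaw is in the randomized construction: randomizing at the cell $(0,b_2)$ cannot restore feasibility. With your choices $p(0,1)=1$, $p(s_2,b_2)=s_2$, trade probability $1-\eta$ at $(0,b_2)$, and trade-conditional expected price $\mu$ there, seller-IC at $s=0$ becomes $\prbuyervalue{1}+\prbuyervalue{2}(1-\eta)\mu\ge \prbuyervalue{2}s_2$, i.e.\ $(1-\eta)\mu\ge A:=s_2-\prbuyervalue{1}/\prbuyervalue{2}$, while buyer-IC at $b=b_2$ becomes $q_1(1-\eta)(b_2-\mu)+q_2(b_2-s_2)\ge q_1(b_2-1)$, i.e.\ $(1-\eta)(b_2-\mu)\ge B:=(b_2-1)-\tfrac{q_2}{q_1}(b_2-s_2)$. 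Adding the two gives $(1-\eta)b_2\ge A+B$. But the parameters are chosen precisely so that $A+B>b_2$ (this inequality is equivalent to the hypothesis of Lemma~\ref{ind-no-opt-cond}), so the system is infeasible for every $\eta\ge 0$. Your intuition that lowering the trade probability ``slackens'' things is backwards here: it lowers the buyer's \emph{truthful} payoff at $b_2$ while leaving his deviation payoff $q_1(b_2-1)$ (which routes through cell $(0,1)$, not $(0,b_2)$) untouched, so buyer-IC only gets harder.

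The fix --- and this is what the paper does --- is to randomize at $(0,1)$ instead: trade there with probability $r<1$ at price $1$ (so expected payment is $r$), and keep full deterministic trade at $(0,b_2)$ and $(s_2,b_2)$ with prices $p$ and $s_2$. Now the buyer's \emph{deviation} payoff at $b_2$ drops to $q_1 r(b_2-1)$, relaxing buyer-IC to $p\le b_2-B+(1-r)(b_2-1)$; seller-IC at $s=0$ tightens only to $p\ge A+(1-r)\prbuyervalue{1}/\prbuyervalue{2}$. Since $b_2-1\approx 5.99$ exceeds $\prbuyervalue{1}/\prbuyervalue{2}\approx 1.33$, a tiny $1-r$ opens a feasible window for $p$; the paper takes $r=0.999$, $p=2.330$, and the welfare loss $(1-r)\prbuyervalue{1}q_1$ gives ratio below $1.00002$.
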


\begin{proof}
    Consider the distribution $\mathcal{F}$ of Section~\ref{2-2-ind}. Recall that in this distribution, the buyer has two values, $1$ and $b_2 > 1$. The seller has two values, $0$ and $1 <s_2 < b_2$. With probability $\prbuyervalue{1}$ the buyer's value is $1$ and with probability $q_1$, the seller's value is $0$. 
    The welfare maximizing allocation trades the item in the three instances $(0,1), (0, b_2), (s_2, b_2)$.
    The values of the parameters are: $\prbuyervalue{1} = 0.57, q_1 = 0.716, s_2 = 3.652 > \frac{\prbuyervalue{1}+1}{\prbuyervalue{2}},  b_2 =6.993.$
    The claim follows from the two lemmas below.
    \begin{lemma}\label{randomized-lemma}
        There exists a randomized Bayesian incentive compatible mechanism for $\mathcal{F}$ with an approximation ratio better than $1.00002$.
    \end{lemma}
    \begin{lemma}\label{deterministic-lemma}
        The best deterministic Bayesian incentive compatible mechanism for $\mathcal{F}$ has an approximation ratio no better than $1.113$.
    \end{lemma}

    \begin{proof}[Proof of Lemma~\ref{randomized-lemma}]
        Take some $r$ close to $1$, say $r=0.999$, and consider the following randomized Bayesian incentive compatible mechanism $M=(x,p)$, the fact that it is a randomized mechanism is reflected in the allocation in the instance $(0, 1)$, then the item is traded with probability $r$.
        \begin{equation}\label{rand-alloc-ind-2-2}
        x(s, b) = \begin{cases}
                     r &  s=0, b=1;\\
                     1 & s=0, b=b_2;\\
                     0 & s=s_2, b=1; \\
                     1 & s=s_2, b=b_2.
               \end{cases} \quad
        p(s, b) = \begin{cases}
                     r &  s=0, b=1;\\
                     2.330 & s=0, b=b_2;\\
                     0 & s=s_2, b=1; \\
                     s_2 & s=s_2, b=b_2.
               \end{cases}
        \end{equation}
We prove that this mechanism is Bayesian Incentive compatible by showing that for both the buyer and the seller, for every value in their support, the Bayesian incentive compatibility inequality (\ref{buyer-ic}) holds.
We start with the seller. When his value is $s_2$ and he plays his equilibrium strategy his profit is $\prbuyervalue{1}\cdot s_2 + \prbuyervalue{2}\cdot s_2 =s_2$, which is larger than his profit from playing the equilibrium strategy of $0$, $\prbuyervalue{1}\cdot (r +(1-r)\cdot s_2) +\prbuyervalue{2}(2.330)< s_2$.
When the seller's value is $0$ and he plays his equilibrium strategy, his profit is $\prbuyervalue{1}\cdot r+ \prbuyervalue{2}\cdot 2.33 =1.57133$,  which is larger than his profit when he plays the equilibrium strategy of $s_2$, $\prbuyervalue{2}\cdot s_2 = 1.57036$.

Next, we show that for every value of the buyer in the support, the Bayesian incentive compatibility inequality (\ref{buyer-ic}) holds.
When the buyer's value is $1$ and he plays his equilibrium strategy his profit is $q_1(r-r) = 0$ which is larger than his profit from playing the equilibrium strategy of $b_2$, $q_1\cdot (1-2.33) + q_2\cdot (1-s_2)< 0$.
When the buyer's value is $b_2$ and he plays his equilibrium strategy, his profit is $q_1(b_2-2.33)+ q_2(b_2-s_2) > 4.2875 $, which is larger than his profit when he plays the equilibrium strategy of $1$, $q_1\cdot (b_2\cdot r-r) < 4.2867$.
Hence, $M$ is Bayesian incentive compatible. $M$'s approximation ratio is:
$$ \frac{\prbuyervalue{1} q_1 + \prbuyervalue{1}q_2s_2 + \prbuyervalue{2}b_2 }{r \cdot \prbuyervalue{1} q_1 + \prbuyervalue{1}q_2s_2 + \prbuyervalue{2}b_2} = 1+ \frac{(1-r)\prbuyervalue{1} q_1}{r \cdot \prbuyervalue{1} q_1 + \prbuyervalue{1}q_2s_2 + \prbuyervalue{2}b_2} < 1.00002.$$
    \end{proof}

\begin{proof}[Proof of Lemma~\ref{deterministic-lemma}]
    The proof relies on Section~\ref{2-2-ind}. 
    Recall that for the distribution of Section \ref{2-2-ind}, there are $2^3$ possible allocation functions (as in the instance $(s_2, 1)$ trade cannot occur since $s_2>1$). Observe that any of the four allocation functions that trade the item in at most one instance has smaller welfare compared to at least one of the allocation functions depicted in Figure~\ref{ind:col-allocation} and Figure~\ref{ind:row-allocation}. By Lemma~\ref{ind-no-opt-cond}, if $s_2-\frac{\prbuyervalue{1}}{\prbuyervalue{2}}> \frac{q_2}{q_1}(b_2-s_2) +1$ then the welfare maximizing allocation function (depicted in Figure~\ref{ind:opt-allocation}) is not implementable by a Bayesian incentive compatible mechanism, and by our choice of parameters this is indeed the case: $s_2-\frac{\prbuyervalue{1}}{\prbuyervalue{2}}> 2.3264$, $\frac{q_2}{q_1}(b_2-s_2) +1 < 2.3253$. Similarly, by  Lemma~\ref{ind-no-diag-allocation-cond}, if $s_2-\frac{\prbuyervalue{1}}{\prbuyervalue{2}}> \frac{q_2}{q_1}(b_2-s_2) +1$, then the allocation function depicted in Figure~\ref{ind:diag-allocation} is not implementable by a Bayesian incentive compatible mechanism, and by our choice of parameters this is indeed the case. 
    Finally, the optimal approximation ratio of a deterministic Bayesian incentive compatible for $\mathcal{F}$ is the maximum between the approximation ratio of the allocation rule depicted in Figure~\ref{ind:row-allocation}, and the approximation ratio of the mechanism depicted in Figure~\ref{ind:col-allocation}. Next, we compute the two ratios. 
    The approximation ratio of the allocation rule depicted in Figure~\ref{ind:row-allocation} is:
    $$
    \frac{\prbuyervalue{1}q_1 + x_1\cdot q_2 \cdot s_2 + \prbuyervalue{2}b_2}{ x_1\cdot q_2 \cdot s_2 + \prbuyervalue{2}b_2} = 1+ \frac{\prbuyervalue{1}q_1 }{ x_1\cdot q_2 \cdot s_2 + \prbuyervalue{2}b_2} > 1.1134.
    $$
    The approximation ratio of the allocation rule depicted in Figure~\ref{ind:col-allocation} is:
    $$
    \frac{\prbuyervalue{1}q_1 + x_1\cdot q_2 \cdot s_2 + \prbuyervalue{2}b_2}{\prbuyervalue{1}\cdot q_1+ x_1\cdot q_2 \cdot s_2 + \prbuyervalue{2}\cdot q_1 b_2 + \prbuyervalue{2}q_2\cdot s_2} = 1+ \frac{\prbuyervalue{2}q_2(b_2-s_2)}{\prbuyervalue{1}\cdot q_1+ x_1\cdot q_2 \cdot s_2 + \prbuyervalue{2}\cdot q_1 b_2 + \prbuyervalue{2}q_2\cdot s_2} > 1.1133.
    $$
    Hence, the approximation ratio of any deterministic Bayesian incentive compatible for $\mathcal{F}$ is at least $1.113$.
\end{proof}
\end{proof}

\end{document}